\let\hbar\undefined
\def\l@section{\@tocline{1}{0pt}{1pc}{}{}}
\def\l@subsection{\@tocline{2}{0pt}{1pc}{4.6em}{}}
\def\l@subsubsection{\@tocline{3}{0pt}{1pc}{7.6em}{}}
\renewcommand{\tocsection}[3]{%
  \indentlabel{\@ifnotempty{#2}{\makebox[2.3em][l]{%
    \ignorespaces#1 #2.\hfill}}}#3}
\renewcommand{\tocsubsection}[3]{%
  \indentlabel{\@ifnotempty{#2}{\hspace*{2.3em}\makebox[2.3em][l]{%
    \ignorespaces#1 #2.\hfill}}}#3}
\renewcommand{\tocsubsubsection}[3]{%
  \indentlabel{\@ifnotempty{#2}{\hspace*{4.6em}\makebox[3em][l]{%
    \ignorespaces#1 #2.\hfill}}}#3}
\definecolor{myurlcolor}{rgb}{0,0,0.9}
\newcommand{\inner}[2]{\langle #1 , #2\rangle}
\newcommand{\iinner}[2]{\langle #1 | #2\rangle}
\newcommand{\ep}[1]{\langle #1 \rangle}
\DeclareMathOperator{\trace}{Tr}
\DeclareMathOperator*{\Expect}{\mathop{\mathbb{E}}}
\newcommand{\Ptr}[2]{\trace_{#1}\Pa{#2}}
\newcommand{\Tr}[1]{\Ptr{}{#1}}
\newcommand{\Pa}[1]{\left[#1\right]}
\newcommand{\norm}[1]{\left\lVert #1 \right\rVert}
\theoremstyle{plain}
\newtheorem{thm}{Theorem}
\newtheorem{lem}[thm]{Lemma}
\newtheorem{prop}[thm]{Proposition}
\newtheorem{cor}[thm]{Corollary}
\newtheorem{Def}[thm]{Definition}
\newtheorem{Rem}[thm]{Remark}
\newcommand*{\myproofname}{Proof}
\def\ot{\otimes}
\def\complex{\mathbb{C}}
\def\CN{\mathbb{C}}
\def\real{\mathbb{R}}
\def\R{\mathbb{R}}
\def\Z{\mathbb{Z}}
\def\sgn{\mathrm{sgn}}
\newcommand{\CAA}{\mathcal A}
\newcommand{\CBB}{\mathcal B}
\newcommand{\CDD}{\mathcal D}
\newcommand{\BFF}{\mathbb F}
\newcommand{\CMM}{\mathcal M}
\newcommand{\CNN}{\mathcal N}
\newcommand{\CPP}{\mathcal P}
\newcommand{\bmu}{\boldsymbol{\mu}}
\newcommand{\bkappa}{\boldsymbol{\kappa}}
\newcommand{\be}{\begin{equation}}
\newcommand{\ee}{\end{equation}}
\renewcommand{\ge}{\geqslant}
\renewcommand{\geq}{\geqslant}
\renewcommand{\leq}{\leqslant}
\renewcommand{\le}{\leqslant}
\DeclareMathAlphabet{\mathcal}{OMS}{cmsy}{m}{n}
\title{Hamiltonian Decoded Quantum Interferometry  for General Pauli Hamiltonians}
\author{Kaifeng Bu$^{1,2}$\,}
 \address{$^1$\textnormal{Department of Mathematics, The Ohio State University, Columbus, Ohio 43210, USA}}
\address{$^2$\textnormal{Department of Physics, Harvard University, Cambridge, Massachusetts 02138, USA}}
\author{Weichen Gu$^1$}
\author{Xiang Li$^1$}
\begin{document}

\begin{abstract}
In this work, we study the Hamiltonian Decoded Quantum Interferometry (HDQI)
for the general Hamiltonians $H=\sum_ic_iP_i$ on an $n$-qubit system, where the coefficients
$c_i\in \mathbb{R}$ and $P_i$ are Pauli operators. 
We show that, given access to an appropriate decoding oracle, there exist efficient quantum algorithms for preparing the state
 $\rho_{\mathcal P}(H) = \frac{\mathcal P^2(H)}{\text{Tr}[\mathcal P^2(H)]}$, 
 where $\mathcal P(H)$ denotes the matrix function induced by a univariate polynomial $\CPP(x)$.
Such states can be used to approximate the Gibbs states of 
$H$ for suitable choices of polynomials.
We further demonstrate that the proposed algorithms are robust to imperfections in the decoding procedure.
Our results substantially extend the scope of HDQI beyond stabilizer-like  Hamiltonians, providing a method for Gibbs-state preparation and Hamiltonian optimization in a broad class of physically and computationally relevant quantum systems.

\end{abstract}
\maketitle

\tableofcontents

\section{Introduction}

Quantum optimization—the task of using quantum algorithms to identify optimal or near-optimal solutions from an exponentially large space of feasible configurations—has emerged as one of the most promising pathways toward achieving practical quantum advantage \cite{abbas2024challenges,leng2025sub,pirnay2024principle,huang2025vast}. 
Several families of quantum optimization algorithms have been extensively investigated over the past two decades. These include amplitude-amplification-based methods such as Grover’s algorithm \cite{grover1996fast}, adiabatic and annealing-based approaches \cite{farhi2000quantum,albash2018adiabatic}, and variational methods such as the Quantum Approximate Optimization Algorithm (QAOA) and its many generalizations \cite{farhi2014quantum,ZhouQAOA20,herrman2022multi,vijendran2023expressive,shi2022multiangle,zhao2025symmetry}. Although these methods have shown encouraging empirical performance, their theoretical guarantees remain limited, and in many regimes classical algorithms can match or even surpass their performance. Understanding when quantum optimization algorithms provide a provable advantage therefore remains a central open problem.

Decoded Quantum Interferometry (DQI), recently introduced by Jordan et al. ~\cite{jordan2024optimization}, represents a novel non-variational paradigm for quantum optimization. By employing quantum interference as its primary resource, DQI utilizes the quantum Fourier transform (QFT) to concentrate probability amplitudes on bitstrings associated with high objective values, thereby enhancing the probability of sampling high-quality solutions. The algorithm exploits the inherent sparsity often found in the Fourier spectra of combinatorial optimization problems, while also leveraging more complex spectral structures when available. These characteristics suggest a scalable framework with the potential to provide exponential speedups for specific problem classes~\cite{khattar2025verifiable}. 
 The computational complexity of DQI has been analyzed in several settings \cite{marwaha2025complexity,sabater2025towards,Parekh2025no,Anschuetz2025decoded,wang2025kernelized,rosmanis2026nearly}, and its robustness to noise and imperfect operations has also begun to be explored \cite{bu2025decoded}, highlighting its promise for near-term and future quantum devices.

An important extension of this paradigm is Hamiltonian Decoded Quantum Interferometry (HDQI) by Schmidhuber et al.~\cite{schmidhuber2025hamiltonian}, which adapts the DQI framework to Hamiltonian optimization and Gibbs-state preparation.
Gibbs states play a central role in many applications, including free-energy estimation, sampling, and statistical inference~\cite{tuckerman2023statistical,mackay2003information,landau2021guide}, and a variety of quantum algorithms for Gibbs-state preparation have been proposed~\cite{temme2011quantum,chen2023quantum,chen2023efficient,ding2024polynomial}. HDQI provides an alternative framework for this task.

Specifically,  HDQI employs coherent Bell measurements together with the symplectic representation of the Pauli group to reduce Gibbs sampling and energy minimization to a classical decoding problem~\cite{schmidhuber2025hamiltonian}. This approach enables the efficient preparation of Gibbs states at arbitrary temperatures for a class of physically motivated commuting Hamiltonians, 
especially those arising from some well-known quantum codes. 
In these models, the Hamiltonian takes the restricted form
 $H=\sum_i v_i P_i$, where
$v_i\in \set{ -1, +1}$, and $P_i$ are Pauli operators. While this structure admits efficient decoding and elegant analytical control, it significantly limits the class of Hamiltonians that can be treated.

In this work, we study the HDQI framework for general Pauli Hamiltonians of the form
$H=\sum_i c_i P_i$,  where 
$c_i\in \real$, and $P_i$ are $n$-qubit Pauli operators. 
This generalization dramatically broadens the scope of Hamiltonians accessible to Decoded Quantum Interferometry, encompassing a wide range of physically and computationally relevant models. 

\subsection{Summary of main results}
Our contributions in this work can be grouped into three parts:
(1) commuting Hamiltonians,
(2) nearly independent commuting Hamiltonians, and
(3) noncommuting Hamiltonians. Specifically:

(1) Commuting Hamiltonians (Sec.~\ref{sec:com}): We investigate commuting Pauli Hamiltonians of the form $H=\sum_i c_i P_i$, where all Pauli operators $\{P_i\}$ commute pairwise. Assuming access to a suitable decoding oracle, we demonstrate that for any univariate polynomial $\mathcal{P}(x)$ of degree $l$, there exists an efficient quantum algorithm to prepare the state $\rho_{\mathcal{P}}(H) = \mathcal{P}^2(H) / \mathrm{Tr}[\mathcal{P}^2(H)]$ (Theorem~\ref{251206thm1}). This algorithm utilizes a reference state $|R^l(H)\rangle$, defined in Eq.~\eqref{eqn:Rl(H)} . We further show that the reference state $\ket{R^l(H)}$
admits a matrix product state (MPS) representation and can be prepared efficiently, which is one guarantee of the efficiency of the algorithm. Additionally, we provide a robustness analysis of this construction in the presence of imperfect decoding (Theorem~\ref{thm:robust}).

(2) Nearly Independent Commuting Hamiltonians (Sec.~\ref{sec:comnear}): We establish that when the Pauli operators $\{P_i\}$ in $H$ are linearly independent, the required decoding oracle is guaranteed to exist. Consequently, an efficient quantum algorithm for preparing $\rho_{\mathcal{P}}(H)$ is always available (Theorem~\ref{251208thm1}). We extend these results to the regime of nearly independent Pauli operators, as detailed in Sec.~\ref{sec:comnear}.

(3) Noncommuting Hamiltonians (Sec.~\ref{sec:noncom}): We extend our analysis to general Pauli Hamiltonians comprising noncommuting terms. In this setting, we introduce a novel reference state $|R^l_*(H)\rangle$, defined in Eq.~\eqref{260104eq2}, and demonstrate that it also admits an MPS representation.
We show that the complexity of preparing this state is governed by the structure of the Hamiltonian's anticommutation graph  (see Table \ref{sum:tab1} for a summary of of reference state preparation complexities across different regimes).
When the reference state \(\ket{R^l_*(H)}\) can be prepared efficiently,
this structure enables the construction of an efficient quantum algorithm for preparing
\(\rho_{\CPP}(H)\) for any degree-\(l\) polynomial \(\CPP(x)\), provided access to a suitable decoding oracle. This result is established in Theorem~\ref{260104thm4}.  Finally, we provide some explicit examples to illustrate the utility of our framework for non-commuting systems (Remark~\ref{Rem:exam})

Beyond the main results, we outline several promising directions for future work in Sec.~\ref{sec:conc}, including the emergence of semicircular laws for general Pauli Hamiltonians and further applications of DQI to quantum optimization and many-body physics.

\begingroup
\setlength{\tabcolsep}{6pt} 
\renewcommand{\arraystretch}{1.3} 

\begin{table}[htbp]
\centering
 \resizebox{\textwidth}{!}{
\centering
\begin{tabular}{ lccc }
\toprule
& \textbf{Commuting case} & \textbf{Nearly independent commuting case} & \textbf{Noncommuting case}\\[-3pt]
& (Lemma \ref{251118lem1}) & (Lemma \ref{251211lem1}) & (Lemma \ref{260102lem1}) \\
\midrule
Reference state & $|R^l(H)\rangle$ in Eq.~\eqref{eqn:Rl(H)} & $\ket{R^l_k(H)}$ in Eq.~\eqref{251211eq2} & $\ket{R^l_*(H)}$ in Eq.~\eqref{260104eq2} \\

 Bond dimension $D$ & $l+1$ &  $2^k (l+1)$  & $l+1$ \\

Classical pre-processing time & $O(ml^3)$ & $O(k 2^k\cdot m\cdot  l^3 ) $ & $O(m\cdot (l+\CMM)^{O(\CMM)})$ \\

Quantum preparation time  & $O(m\cdot \mathrm{poly}(l))$   & $O(m\cdot \mathrm{poly}(2^k,l) )$  & $O(m\cdot 2^\CMM\cdot\mathrm{poly}(l)  )$\\
\bottomrule
\end{tabular}}
\vskip 8pt
\caption{Summary of reference state preparation complexities across different Hamiltonian regimes.}
\label{sum:tab1}
\end{table}
\endgroup

\section{Preliminary}

In this work, we focus on $n$-qubit system $\mathcal{H}^{\ot n}$ where $\mathcal{H}\simeq \complex^2$. For a single qubit, we consider the computational basis
$\set{\ket{0}, \ket{1}}$. The Pauli $X$ and $Z$ operators are defined by
\begin{equation}
X=\left[
\begin{array}{cc}
0&1\\
1&0
\end{array}
\right],\quad
Z=\left[
\begin{array}{cc}
1&0\\
0&-1
\end{array}
\right].
\end{equation}
The single-qubit Pauli operators are given by
\begin{align}
W(\alpha,\beta)=i^{-\alpha\beta}Z^{\alpha}X^{\beta},~~
\text{where}~~
\alpha,\beta\in\mathbb{F}_2=\set{0,1}.
\end{align}
More generally, the $n$-qubit Pauli operators are tensor products of single-qubit Paulis:
\begin{align}
W(\bm{\alpha}, \bm{\beta})=W(\alpha_1,\beta_1)\ot W(\alpha_2, \beta_2)\ot\ldots \ot W(\alpha_n, \beta_n),
\end{align}
where $\bm{\alpha}=(\alpha_1,\ldots,\alpha_n)\in \mathbb{F}^n_2$ and 
$\bm{\beta}=(\beta_1,\ldots,\beta_n)\in \mathbb{F}^n_2$. These operators satisfy the commutation relation
\begin{align}
W(\bm \alpha, \bm \beta)
W(\bm \alpha', \bm\beta')
=(-1)^{\inner{(\bm\alpha,\bm \beta)}{(\bm \alpha', \bm \beta')}_s}W(\bm \alpha', \bm \beta')W(\bm \alpha, \bm \beta),
\end{align}
where the symplectic inner product  is defined as 
$$\inner{(\bm\alpha,\bm \beta)}{(\bm \alpha', \bm \beta')}_s=\bm\alpha \cdot\bm\beta'-\bm\beta\cdot\bm\alpha'.$$ 
Accordingly, for any 
$n$-qubit Pauli operator $P$
, the vector $(\bm \alpha, \bm\beta)\in \BFF_2^{2n}$  is called its symplectic representation, denoted by $\mathrm{symp}(P )$.

Given a Hamiltonian $H$ acting on an $n$-qubit system, it can be written in the Pauli expansion
\begin{align}
H=\sum^m_{i=1}c_iP_i,
\end{align}
where $c_i\in\real $, and $P_i$ are distinct Pauli operators.  
In this work, we extend the notion of the symplectic representation from individual Pauli operators to the Hamiltonian as a whole.
Specifically, the symplectic representation of the Hamiltonian $H$, denoted by 
 $\mathrm{symp}(P )$, is defined by  the binary matrix $B^{\top} \in \BFF_2^{2n \times m}$ as follows
\begin{align}\label{eq:symH}
        B^{\top} = \begin{bmatrix}
        | & | &        & | \\
        \mathrm{symp}(P_1) & \mathrm{symp}(P_2) & \cdots & \mathrm{symp}(P_m) \\
        | & | &        & | 
        \end{bmatrix}, 
    \end{align}
 where the columns are given by the symplectic representations of the Pauli terms appearing in $H$. 
This binary matrix can be naturally interpreted as the parity-check matrix of a classical linear code. Accordingly, we refer to 
$B^\top$ as the symplectic code associated with the Hamiltonian  $H$~\cite{jordan2024optimization,schmidhuber2025hamiltonian}.

Given a classical decoder for the code $\mathrm{symp}(H)$ capable of correcting errors up to Hamming weight $l$, a weight-$l$ decoding oracle for $H$ can be implemented as follows.
 First, the classical decoder is applied to recover the vector $\mathbf{y}$ from the syndrome $B^{\mathsf{T}}\mathbf{y}$. Subsequently, $\mathbf{y}$ is coherently added to its register via bitwise XOR gates, mapping the state to $\ket{\mathbf{0}}$. Thus, the implementation of a weight-$l$ decoding oracle for $H$ reduces to the existence of an efficient classical decoder for the symplectic code $\mathrm{symp}(H)$. Such decoders are guaranteed to exist when $l < d^{\perp}$, where $d^{\perp}$ denotes the minimum distance of $\mathrm{symp}(H)$. However, such decoders may not be efficient.
 In the specific case where $\mathrm{symp}(H)$ is a trivial code (i.e., $\dim \mathrm{symp}(H) = 0$), efficient classical methods—such as Gaussian elimination or Gauss–Jordan elimination—can correct errors of arbitrary weight. More generally, many well-studied classical code families, including Reed–Solomon and Hamming codes, admit efficient decoding algorithms suitable for this purpose.

Hence, for such symplectic codes, we consider a decoding oracle as follows (See Definition 11 from Ref.~\cite{schmidhuber2025hamiltonian}). 
Given an integer $l\geq 0$, and $B^\top$  in \eqref{eq:symH}, a weight-$l$ decoding 
oracle  $\CDD_H^{ (l)}$ for $H$ is a unitary operator such that 
\begin{align}\label{eq:oracle}
        \CDD_H^{ (l)} \ket{\mathbf{y}} \ket{B^\top \mathbf{y}} =  \ket{\bm 0} \ket{B^{\top} \mathbf{y}},
    \end{align} 
    for any $\mathbf{y} \in \BFF_2^m$ such that $1\leq |\mathbf{y}| \leq l$, where $|\mathbf{y}|$ denotes the Hamming weight of $\mathbf{y}$.

Given a Hamiltonian $H$ and a degree-$l$ polynomial $\mathcal{P}(x) = \sum_{j=0}^l a_j x^j$, the objective of HDQI is to prepare the state $\rho_{\mathcal{P}}(H) = \frac{\mathcal{P}^2(H)}{\mathrm{Tr}[\mathcal{P}^2(H)]}$ by using the above decoding oracle. Here, $\mathcal{P}(H) = \sum_{i=0}^l a_i H^i$ represents the matrix function obtained via polynomial functional calculus. Since $\mathcal{P}(H)$ is not necessarily a positive operator, the squared form $\mathcal{P}^2(H)$ is employed to ensure the positivity required for a valid density operator.

Such a state $\rho_{\mathcal{P}}(H)$ can approximate the Gibbs state $e^{-\beta H}$ provided that $\mathcal{P}(x)$ is a sufficiently accurate approximation of the exponential function $e^{-\beta x/2}$ (see Theorem 3 in Ref.~\cite{schmidhuber2025hamiltonian}). Motivated by this correspondence, we investigate HDQI for general Pauli Hamiltonians of the form $H = \sum_{i=1}^m c_i P_i$, where $c_i \in \mathbb{R}$ and $P_i$ are Pauli operators.



\section{Commuting Hamiltonians}\label{sec:com}
In this section, we consider the commuting 
Hamiltonian $H=\sum_ic_iP_i$ where the coefficients $c_i\in \real$ and the Pauli operators $P_i, P_j$ commute with each other.

\begin{lem}\label{251124lem1}
Let $\CPP(x) = \sum^l_{j=0}a_jx^j$ be a univariate polynomial of degree $l$.
Suppose that $x = \sum_{i=1}^m c_i z_i$ where $c_i\in\R$  and the variables $z_i$ satisfy that $z^2_i = 1$ and $[z_i,z_j]:=z_iz_j-z_jz_i=0$ for all $i, j$.
Then  $\CPP(x) $ admits the expansion
\begin{align}\label{eqn:p(H)}
\CPP(x)  =& \sum_{
\substack{\mathbf{y}\in \BFF_2^m  
}} 
\left(
\sum_{j=0}^l a_j \cdot j! 
\sum_{
\substack{
\boldsymbol{\mu} \in \Z_{\ge 0}^m\\
|\boldsymbol{\mu}|=j,\; \boldsymbol{\mu} \equiv \mathbf{y} (\bmod 2) 
}} 
\frac{\mathbf{c}^{\bmu}}{\bmu!}\right)
z_1^{y_1} \cdots z_m^{y_m},
\end{align}
where for $\boldsymbol{\mu} = (\mu_1,...,\mu_m)\in \Z_{\ge 0}^m$, $|\boldsymbol{\mu} |:= \sum_{i=1}^m \mu_i$, $\mathbf{c}^{\bmu} = \prod_{i=1}^m c_i^{\mu_i}$
and  $\bmu!=\mu_1!\mu_2!...\mu_m!$. Hence, for the Hamiltonian $H=\sum_ic_iP_i$ with $[P_i, P_j]=0$ for any $i,j$, we have 
\begin{align}\label{251124eq3}
\CPP(H)  =& \sum_{
\substack{\mathbf{y}\in \BFF_2^m  
}} 
\left(
\sum_{j=0}^l a_j \cdot j! 
\sum_{
\substack{
\boldsymbol{\mu} \in \Z_{\ge 0}^m\\
|\boldsymbol{\mu}|=j,\; \boldsymbol{\mu} \equiv \mathbf{y} (\bmod 2) 
}} 
\frac{\mathbf{c}^{\bmu}}{\bmu!}\right)
P_1^{y_1} \cdots P_m^{y_m}.
\end{align}

\end{lem}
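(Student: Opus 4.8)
The plan is to reduce the operator identity \eqref{251124eq3} to the scalar identity \eqref{eqn:p(H)}, and then to prove the latter by the multinomial theorem. The reduction is immediate: each $n$-qubit Pauli operator satisfies $P_i^2=I$, since every single-qubit factor $W(\alpha,\beta)=i^{-\alpha\beta}Z^{\alpha}X^{\beta}$ squares to the identity and $P_i$ is a tensor product of such factors; by hypothesis we also have $[P_i,P_j]=0$. Thus the $P_i$ obey exactly the algebraic relations $z_i^2=1$ and $[z_i,z_j]=0$ imposed on the $z_i$, so the substitution $z_i\mapsto P_i$ carries \eqref{eqn:p(H)} into \eqref{251124eq3}. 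It therefore suffices to establish the scalar expansion.

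For the scalar identity, I would first expand each power $x^j$. Because the $z_i$ commute, the multinomial theorem applies and gives
\begin{align}
x^j=\Big(\sum_{i=1}^m c_i z_i\Big)^j=\sum_{\substack{\bmu\in\Z_{\ge 0}^m\\ |\bmu|=j}}\frac{j!}{\bmu!}\,\mathbf{c}^{\bmu}\,z_1^{\mu_1}\cdots z_m^{\mu_m}.
\end{align}
The next step uses $z_i^2=1$ to collapse the exponents modulo $2$: writing $y_i:=\mu_i\bmod 2$ and $\mathbf{y}=(y_1,\dots,y_m)$, one has $z_i^{\mu_i}=z_i^{y_i}$, so that $z_1^{\mu_1}\cdots z_m^{\mu_m}=z_1^{y_1}\cdots z_m^{y_m}$ depends on $\bmu$ only through its parity class $\mathbf{y}=\bmu\bmod 2\in\BFF_2^m$.

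I would then group the multinomial sum by parity class and interchange the (finite) order of summation so that $\mathbf{y}$ becomes the outermost index. Collecting all $\bmu$ with $|\bmu|=j$ and $\bmu\equiv\mathbf{y}\pmod 2$, and summing over $j$ against the coefficients $a_j$, produces precisely the coefficient $\sum_{j=0}^l a_j\, j!\sum_{|\bmu|=j,\ \bmu\equiv\mathbf{y}}\mathbf{c}^{\bmu}/\bmu!$ multiplying $z_1^{y_1}\cdots z_m^{y_m}$, which is exactly \eqref{eqn:p(H)}; the $j=0$ term contributes the constant $a_0$ via $\bmu=\mathbf{0}$, $\mathbf{y}=\mathbf{0}$, so no separate treatment is needed.

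The computation is essentially bookkeeping, so I do not expect a substantial obstacle; the only points requiring care are that the multinomial theorem genuinely needs commutativity—so the hypothesis $[z_i,z_j]=0$ is used at exactly this step—and that the partition of the index set $\{\bmu:|\bmu|=j\}$ into parity classes is both exhaustive and disjoint, which guarantees that each monomial $z_1^{y_1}\cdots z_m^{y_m}$ is collected once and none is omitted.
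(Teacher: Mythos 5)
Your proposal is correct and follows essentially the same route as the paper's proof: expand each power $x^j$ by the multinomial theorem (which is where commutativity enters), use $z_i^2=1$ to reduce exponents modulo $2$, and regroup the sum by parity class $\mathbf{y}=\bmu \bmod 2$. Your explicit justification of the substitution $z_i\mapsto P_i$ is a minor elaboration the paper leaves implicit, but the argument is the same.
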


\begin{proof}
We have
\begin{align*}
\CPP(x) = & \sum_{j=0}^l a_j \left(\sum_{i=1}^m c_i z_i \right)^j
=  \sum_{j=0}^l a_j \sum_{
\substack{
\boldsymbol{\mu} \in \Z_{\ge 0}^m\\
|\boldsymbol{\mu}|=j
}}
\binom{j}{\boldsymbol{\mu}}
\mathbf{c}^{\boldsymbol{\mu}} z_1^{\mu_1} z_2^{\mu_2} \cdots z_m^{\mu_m},
\end{align*}
where 
$\mathbf{c}^{\boldsymbol{\mu}} = \prod_{i=1}^m c_i^{\mu_i}$, and 
$\binom{j}{\boldsymbol{\mu} } = \frac{j!}{\mu_1 ! \mu_2!\cdots \mu_m !} $.   

Since $z_j^2 =1$ for each $j$, we have
\begin{align*}
\sum_{
\substack{
\boldsymbol{\mu} \in \Z_{\ge 0}^m\\
|\boldsymbol{\mu}|=j
}}
\binom{j}{\boldsymbol{\mu}}
\mathbf{c}^{\boldsymbol{\mu}} z_1^{\mu_1} z_2^{\mu_2} \cdots z_m^{\mu_m} 
=
j! \sum_{
\substack{\mathbf{y}\in \BFF_2^m \\
|\mathbf{y}| \le j
}}
\left(\sum_{
\substack{
\boldsymbol{\mu} \in \Z_{\ge 0}^m\\
|\boldsymbol{\mu}|=j,\; \boldsymbol{\mu} \equiv \mathbf{y} (\bmod 2) 
}} 
\frac{\mathbf{c}^{\bmu}}{\bmu!}\right)
z_1^{y_1} \cdots z_m^{y_m},
\end{align*}
and \eqref{eqn:p(H)} holds.
\end{proof}

 Next, let us recall the basic definition of matrix product states (MPS). 

\begin{Def}
  Let $[n] = S_1\cup \cdots \cup S_r$ be a partition of $[n]$ into $r$ disjoint sets.
Any  $\mathbf{y} \in \BFF_2^n$ can then be written as $\mathbf{y} = (\mathbf{y}^{(1)}, \mathbf{y}^{(2)} ,..., \mathbf{y}^{(r)} )$, 
where $\mathbf{y}^{(i)}\in \mathbb{F}^{S_i}_2$.
A pure state $\ket{\psi}$
is called a 
$q$-ary matrix product state (MPS) with open boundary conditions and bond dimension 
$D$ if there exist matrix-valued functions
$A^{(i)}(\mathbf{y}^{(i)})\in \CN^{D\times D}$
and  boundary vectors $\mathbf{v}_L$, $\mathbf{v}_R\in \CN^D$, such that 
\[ \ket{\psi}
= \sum_{\mathbf{y}}\mathbf{v}^{\top}_L
A^{(1)}(\mathbf{y}^{(1)})\cdots A^{(r)}(\mathbf{y}^{(r)}) \mathbf{v}_R\,
|\mathbf{y}\rangle.
\]
Here $q = \max_i 2^{|S_i|}$ denotes the local physical dimension.

\end{Def}

We now construct a reference state for the preparation of $\rho_{\mathcal{P}}(H)$. Intuitively, this state coherently encodes the Pauli strings comprising the polynomial expansion of $\mathcal{P}(H)$, while the bond index tracks the cumulative degree of the associated monomials.

\begin{lem}[Reference state is an MPS]\label{251118lem1}
Given a univariate polynomial $\CPP(x) =\sum^l_{j=0}a_jx^j$ of degree $l$, and 
a commuting Hamiltonian $H=\sum^m_{i=1}c_iP_i$ with $c_i\in\real$, let us consider an $m$-qubit  state 
\begin{equation}\label{eqn:Rl(H)}
\ket{R^l(H)}  = \frac{1}{\CNN}\sum_{
\substack{\mathbf{y}\in \BFF_2^m  
}} 
\left(
\sum_{j=0}^l a_j \cdot j! 
\sum_{
\substack{
\boldsymbol{\mu} \in \Z_{\ge 0}^m\\
|\boldsymbol{\mu}|=j,\; \boldsymbol{\mu} \equiv \mathbf{y} (\bmod 2) 
}} 
\frac{\mathbf{c}^{\bmu}}{\bmu!}\right)
\ket{\mathbf{y}},
\end{equation}
where $\CNN$ is a normalization factor. 
Then $\ket{R^l(H)} $ is a
$2$-ary matrix product state with open boundary conditions and bond dimension $D = l + 1$. 
In particular,  it can be rewritten as 
\begin{equation}
\ket{R^l(H)} 
= 
\sum_{\mathbf{y} \in \mathbb{F}_2^m} \mathbf{v}_L^{\top} A^{(1)}(y_1) \cdots A^{(m)}(y_m) \mathbf{v}_R\ket{\mathbf{y}},   
\end{equation}
\begin{equation}\label{251124eq2}
\mathbf{v}_L=(\frac{1}{\CNN},0,\cdots,0)^{\top}, \quad
\mathbf{v}_R=(a_0\cdot0!,a_1\cdot 1!, \cdots,a_l\cdot l!)^{\top},\quad
A^{(k)}(y_k) =
\begin{cases}
A_{0}^{(k)}, & y_k = 0,\\
A_1^{(k)}, & y_k = 1,
\end{cases}
\end{equation}
where $\CNN$ is a normalization factor, and 
\begin{align}\label{eqn:A0k&A1k}
\begin{split}
A_0^{(k)} =&
\begin{pmatrix}
1 & 0 & \dfrac{c_k^{2}}{2!} & 0 & \dfrac{c_k^{4}}{4!} & \cdots & \cdots \\
0 & 1 & 0 & \dfrac{c_k^{2}}{2!} & 0 & \ddots & \vdots \\
0 & 0 & 1 & 0 & \dfrac{c_k^{2}}{2!} & \ddots & \dfrac{c_k^{4}}{4!} \\
\vdots & \vdots & \ddots & \ddots & \ddots & \ddots & 0 \\
0 & 0 & \cdots & 0 & 1 & 0 & \dfrac{c_k^{2}}{2!} \\
0 & 0 & \cdots & 0 & 0 & 1 & 0 \\
0 & 0 & \cdots & 0 & 0 & 0 & 1
\end{pmatrix}_{(l+1)\times(l+1)},\\
A_1^{(k)} =&
\begin{pmatrix}
0 & \dfrac{c_k}{1!} & 0 & \dfrac{c^3_k}{3!} & 0& \cdots & \cdots \\
0 & 0               &  \dfrac{c_k}{1!} & 0 & \dfrac{c^3_k}{3!} &  \ddots & \vdots \\
0 & 0               & 0               & \dfrac{c_k}{1!} & 0&\ddots & 0\\
\vdots & \vdots     & \ddots &\ddots  & \ddots & \ddots &  \dfrac{c^3_k}{3!}\\
0 & 0               & \cdots & 0 & 0&0 &0 \\
0 & 0               & \cdots & 0 & 0 &0&  \dfrac{c_k}{1!} \\
0 & 0               & \cdots & 0 & 0 &0& 0
\end{pmatrix}_{(l+1)\times(l+1)}.
\end{split}
\end{align}

\end{lem}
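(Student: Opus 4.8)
The plan is to verify the stated MPS form by a direct expansion of the matrix product, matching coefficients term by term against the defining amplitude in \eqref{eqn:Rl(H)}. First I would record the entries of the two building blocks in a single formula. Indexing the $(l+1)$-dimensional bond space by $\{0,1,\dots,l\}$, one reads off from \eqref{eqn:A0k&A1k} that for each site $k$ and each $y_k\in\{0,1\}$,
\begin{equation*}
\bigl(A^{(k)}(y_k)\bigr)_{ij}
=\begin{cases}
\dfrac{c_k^{\,j-i}}{(j-i)!}, & j\ge i \ \text{ and }\ j-i\equiv y_k\,(\bmod 2),\\[6pt]
0, & \text{otherwise}.
\end{cases}
\end{equation*}
Thus a single bond transition $i\to j$ at site $k$ contributes the weight $c_k^{\mu_k}/\mu_k!$ with $\mu_k:=j-i\ge 0$, and the parity condition $\mu_k\equiv y_k\,(\bmod 2)$ is exactly what selects $A_0^{(k)}$ over $A_1^{(k)}$.

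Next I would expand the scalar $\mathbf v_L^{\top} A^{(1)}(y_1)\cdots A^{(m)}(y_m)\mathbf v_R$ by inserting a resolution over intermediate bond indices $i_1,\dots,i_{m-1}\in\{0,\dots,l\}$. The left boundary vector $\mathbf v_L$ forces the initial index $i_0=0$ and supplies the global prefactor $1/\CNN$, while the right boundary vector $\mathbf v_R$ evaluates the terminal index $i_m=:j$ and supplies $a_j\cdot j!$. Passing to the per-site increments $\mu_k:=i_k-i_{k-1}$, so that $i_k=\mu_1+\cdots+\mu_k$ and $|\bmu|=i_m=j$, the telescoped chain should collapse to
\begin{equation*}
\mathbf v_L^{\top} A^{(1)}(y_1)\cdots A^{(m)}(y_m)\mathbf v_R
=\frac{1}{\CNN}\sum_{\substack{\bmu\in\Z_{\ge0}^m,\ \bmu\equiv\mathbf y\,(\bmod 2)\\ |\bmu|\le l}}
a_{|\bmu|}\cdot|\bmu|!\;\prod_{k=1}^m\frac{c_k^{\mu_k}}{\mu_k!}.
\end{equation*}
Regrouping this sum by the total degree $j=|\bmu|$ and recalling $\mathbf c^{\bmu}=\prod_k c_k^{\mu_k}$ and $\bmu!=\prod_k\mu_k!$ then reproduces exactly the amplitude of $\ket{\mathbf y}$ in \eqref{eqn:Rl(H)}, which is what the lemma asserts.

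The one point that needs genuine care—and which I regard as the main (if mild) obstacle—is the truncation of the bond dimension to $l+1$. A priori the matrix product only ranges over intermediate indices $i_k\le l$, so I must check that this truncation neither discards legitimate contributions nor introduces spurious ones. This follows from the monotonicity $0=i_0\le i_1\le\cdots\le i_m$ forced by $\mu_k\ge0$: every partial sum obeys $i_k\le i_m=|\bmu|$, so for any multi-index with $|\bmu|\le l$ all intermediate indices already lie in $\{0,\dots,l\}$ and are retained, while multi-indices with $|\bmu|>l$ are correctly absent because $\mathbf v_R$ has length $l+1$ and $\CPP$ has degree $l$. Once this bookkeeping is settled, the remaining manipulations are routine multinomial rearrangements, and the $2$-ary MPS representation with bond dimension $D=l+1$ follows.
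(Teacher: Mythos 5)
Your proposal is correct and follows essentially the same route as the paper's proof: both arguments verify the claimed MPS form by expanding the matrix product and matching the resulting coefficient of $\ket{\mathbf y}$ against \eqref{eqn:Rl(H)}. The only difference is presentational — the paper organizes the expansion by writing each $A^{(k)}(y_k)$ as a combination of shift matrices $B_k$ satisfying $B_jB_k=B_{j+k}$ (truncated at $l$), whereas you track the intermediate bond indices explicitly and telescope the increments $\mu_k=i_k-i_{k-1}$; your explicit check that the bond-dimension truncation to $l+1$ loses nothing is exactly the content of the paper's relation $B_jB_k=\mathbf 0$ for $j+k>l$.
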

\begin{proof}
Define matrices $B_k \in \mathbb{R}^{(l+1)\times(l+1)}$ by $$(B_k)_{i,j} = \delta_{j-i,\,k} ~~\text{for}~~ i,j,k\in\set{0,1,2,\cdots,l}.$$
By construction, the family $\set{B_k }_k$  satisfies the following relation
\begin{equation}
B_jB_k =
\begin{cases}
B_{j+k}, & j+k\leq l,\\
\mathbf{0}, & j+k>l,
\end{cases} \qquad
\text{for any }j,k\in\{0,1,2,\cdots,l\}.
\end{equation}
Moreover, the matrixes $\set{A^{(i)}(y_i) }_i$ can 
be expressed by $\set{B_k }_k$  as follows
\begin{equation}
A^{(i)}(y_i) = \sum_{k=0}^{l} \lambda_{k,i}(y_i)\, B_k,
\end{equation}
where the coefficients $\lambda_{k,i}(y_i)$ are given by
\begin{equation}\label{eqn:lambda_k_i}
\lambda_{k,i}(y_i)
=
\begin{cases}
\dfrac{c_i^k}{k!}, & k \equiv y_i \pmod 2, \\[6pt]
0, & \text{otherwise}.
\end{cases}
\end{equation}
Thus,
\begin{equation}
\begin{array}{rl}
&A^{(1)}(y_1) \cdots A^{(m)}(y_m)\\
&= \left( \sum\limits_{k_1=0}^{l} \lambda_{k_1,1}(y_1)\, B_{k_1} \right)
\left( \sum\limits_{k_2=0}^{l} \lambda_{k_2,2}(y_2)\, B_{k_2} \right)
\cdots
\left( \sum\limits_{k_m=0}^{l} \lambda_{k_m,m}(y_m)\, B_{k_m} \right)  \\
& = \sum
\limits_{j=0}^{l}
\left(
\sum\limits_{k_1+\cdots+k_m=j}
\lambda_{k_1,1}(y_1)\,
\lambda_{k_2,2}(y_2)\cdots
\lambda_{k_m,m}(y_m)
\right)
B_j.
\end{array}
\end{equation}

Substituting the coefficients $\{\lambda_{k,i}\}$ from Eq.~\eqref{eqn:lambda_k_i},
we finally obtain the compact form
\begin{equation*}
\begin{array}{rl}
\mathbf{v}_L^\top A^{(1)}(y_1) \cdots A^{(m)}(y_m) \mathbf{v}_R
& = 
\frac{1}{\CNN} 
\sum\limits_{j=0}^{l}
a_j\, j!\,\left(
\sum\limits_{k_1+\cdots+k_m=j}
\lambda_{k_1,1}(y_1)\cdots
\lambda_{k_m,m}(y_m)
\right) \\
& = \frac{1}{\CNN} \sum\limits_{j=0}^{l}
a_j\, j!\,
\sum\limits_{\substack{k_1+\cdots+k_m=j\\k_i\equiv y_i\ (\mathrm{mod}\ 2)}}
\frac{c_1^{k_1}}{k_1!}\cdots
\frac{c_m^{k_m}}{k_m!},
\end{array}
\end{equation*}
which is exactly the coefficient of each $\ket{\mathbf{y}}$ in Eq.~\eqref{eqn:Rl(H)}.
\end{proof}

\begin{lem}\label{251124lem2}
The state $\ket{R^l(H)}$ in Eq.~\eqref{eqn:Rl(H)}  can be prepared by using a classical pre-processing step taking time $T_{\text{pre}} = O(ml^3)$ and a subsequent quantum algorithm which takes time $T_{\text{prepare}} = O(m\cdot \mathrm{poly}(l))$.
The classical runtime assumes that elementary arithmetic and memory access are $O(1)$ time operations.
\end{lem}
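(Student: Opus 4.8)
The plan is to invoke the standard sequential preparation scheme for matrix product states. By Lemma~\ref{251118lem1}, the reference state $\ket{R^l(H)}$ is a $2$-ary MPS with open boundary conditions and bond dimension $D = l+1$, with explicitly known tensors $A^{(k)}(y_k)$ and boundary vectors $\mathbf{v}_L, \mathbf{v}_R$. Any such MPS on $m$ sites can be prepared by a quantum circuit of $m$ unitaries $U^{(1)}, \ldots, U^{(m)}$, where each $U^{(k)}$ acts on the $k$-th physical qubit together with a bond register of $\lceil \log_2 D\rceil = O(\log l)$ ancilla qubits; the ancilla register is returned to a fixed product state at the end, with $\mathbf{v}_L$ and $\mathbf{v}_R$ absorbed into the first and last sites. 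Thus the task splits into a classical stage that computes the numerical data of these unitaries and a quantum stage that synthesizes and executes them.

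For the classical pre-processing, I would first bring the given MPS into a left-canonical form. Proceeding from site $1$ to site $m$, at each site I reshape the current tensor into a matrix of size at most $2D \times D$ and perform a QR (or thin SVD) factorization, pushing the triangular factor into the next tensor; this yields isometric tensors $\tilde{A}^{(k)}$ satisfying $\sum_{y} \tilde{A}^{(k)}(y)^{\dagger}\tilde{A}^{(k)}(y) = I$. Each factorization costs $O(D^3) = O(l^3)$ arithmetic operations, and computing the overall normalization $\CNN$ by contracting the transfer operators costs no more, so the full stage runs in $O(m l^3)$ time. From each isometry I then read off a unitary dilation $U^{(k)}$ on $O(\log l)$ qubits; storing its $O(D^2)$ entries is again within budget.

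For the quantum preparation, I would initialize the bond ancillas and the $m$ physical qubits in $\ket{\mathbf{0}}$ and apply $U^{(1)}, \ldots, U^{(m)}$ in sequence, after which the physical register holds $\ket{R^l(H)}$ and the ancillas are disentangled. Each $U^{(k)}$ is a unitary on $1 + O(\log l)$ qubits and can be compiled into $\mathrm{poly}(D) = \mathrm{poly}(l)$ elementary gates by exact two-qubit-gate synthesis (a general $q$-qubit unitary needs $O(4^q)$ gates, and here $4^{q} = \mathrm{poly}(l)$). Summing over the $m$ sites gives a quantum runtime of $O(m\cdot\mathrm{poly}(l))$.

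The main obstacle is the canonicalization and isometry-extraction step: one must verify that, starting from the explicit not-yet-canonical tensors of Lemma~\ref{251118lem1}, the QR/SVD sweep indeed produces valid isometries whose dilations reproduce $\ket{R^l(H)}$ exactly, and that each such linear-algebra operation on an $(l+1)\times(l+1)$ matrix stays within the $O(l^3)$ per-site budget. The remaining bookkeeping---absorbing the boundary vectors, tracking the normalization $\CNN$, and bounding the gate-synthesis cost of each $O(\log l)$-qubit unitary---is routine once the canonical form is in hand.
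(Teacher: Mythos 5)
Your proposal is correct and follows essentially the same route as the paper: both rely on the explicit MPS form from Lemma~\ref{251118lem1}, spend $O(ml^3)$ classical time computing the tensors and the normalization $\CNN$ by contracting the transfer operators, and then invoke the standard sequential MPS preparation scheme for the $O(m\cdot\mathrm{poly}(l))$ quantum stage. The only difference is that you unpack the canonicalization/isometry-dilation details of that scheme, which the paper simply cites as a black box from the MPS-preparation literature.
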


\begin{proof}
By Lemma \ref{251118lem1}, 
the state $\ket{R^l(H)}$ is a $q$-ary MPS of bond dimension $D=l+1$ and $q=2$,
\begin{equation}
\ket{R^l(H)} 
= \sum_{\mathbf{y} \in \mathbb{F}_2^m} \mathbf{v}_L^{\top} A^{(1)}(y_1) \cdots A^{(m)}(y_m) \mathbf{v}_R\ket{\mathbf{y}},   
\end{equation}
where $\mathbf{v}_L^{\top}$, $\mathbf{v}_R$ and $A^{(i)}(y_i)$ are defined as in \eqref{251124eq2}.
Each entry of $A^{(i)}(y_i)$ is either 0 or of the form $c_i^j/j!$,
and thus can be computed in $O(l)$ time. 
Since there are $m$ such matrices, each with two possible values of $y_i$ and  $D^2 = O(l^2)$ entries, the total runtime required to compute all matrices
$A^{(i)}(y_i)$ is $O(m l^3)$.

Next,  we compute the boundary vectors. The right boundary vector $\mathbf{v}_R=(a_0\cdot0!,a_1\cdot 1!, \cdots,a_l\cdot l!)$ can be computed in $O(l^2)$ time. 
To compute the normalization $\CNN$, let $\mathbf{e}_1=(1,...,0)$. Then
\begin{align*}
\CNN^2 = &  \sum_{\mathbf{y} \in \mathbb{F}_2^m} \left(\mathbf{e}_1^{\top} A^{(1)}(y_1) \cdots A^{(m)}(y_m) \mathbf{v}_R \right)^2\\
=& \sum_{\mathbf{y} \in \mathbb{F}_2^m}  
\mathbf{v}_R^{\top} A^{(m)}(y_m)^{\top} \cdots A^{(1)}(y_1)^{\top} \mathbf{e}_1
\mathbf{e}_1^{\top} A^{(1)}(y_1) \cdots A^{(m)}(y_m) \mathbf{v}_R  \\
=& \mathbf{v}_R^{\top} 
\left(
\sum_{y_m \in \mathbb{F}_2}
A^{(m)}(y_m)^{\top}
\left(
\cdots 
\left(
\sum_{y_1 \in \mathbb{F}_2}
A^{(1)}(y_1)^{\top} \mathbf{e}_1
\mathbf{e}_1^{\top} A^{(1)}(y_1)
\right)
\cdots
\right)
A^{(m)}(y_m)
\right)
\mathbf{v}_R,
\end{align*}
which can be computed from the inside summations to the outside.
Each summation can be computed in $O(l^3) $ time,
hence we can compute  the normalization factor $\CNN$ in $O(m l^3) $ time.
Hence, the runtime to compute the coefficients of the MPS classically is $T_{\text{pre}}=O(m l^3)$.
Then, for the quantum algorithm running time,
we use the fact that a $q$-ary MPS (with $m$ matrices) with open boundary conditions of bond dimension $D $ can be prepared by a quantum algorithm running in time $T_{\text{prepare}} = O(m\cdot \mathrm{poly}(D,q) )$ 
\cite{schon2005sequential,melnikov2023quantum}.
Since  $q=2$ and the bond dimension $D=l+1$, then the state $\ket{R^l(H)}$ can be prepared in 
time $T_{\text{prepare}} = O(m\cdot \mathrm{poly}(l))$.
\end{proof}

\begin{thm}\label{251206thm1}
Let $H = \sum_{i=1}^m c_i P_i$ be a Hamiltonian on $n$ qubits, where $c_i\in \R$ and $P_i$ are distinct, mutually commuting $n$-qubit Pauli operators. 
Assume the existence of a weight-$l$ decoding oracle 
for $H$, as defined in Eq.~\eqref{eq:oracle}. Let  $\CPP(x)=\sum^l_{j=1}a_jx^j$ be a univariate polynomial of degree $l$. 
Then there exists a quantum algorithm that prepares the state
\begin{align}
\rho_{\CPP}(H) = \frac{\CPP^2(H)}{\Tr{\CPP^2(H)}},
\end{align}
using a single call to $\CDD_H^{ (l)}$.
Moreover, the running time of this quantum algorithm is $O(m\cdot \mathrm{poly}(l,n))$.
\end{thm}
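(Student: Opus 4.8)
The plan is to realize $\rho_{\CPP}(H)$ as the reduced state of a purification built from the vectorization of $\CPP(H)$, and to use the reference state $\ket{R^l(H)}$ together with a single oracle call to ``uncompute'' the Pauli labels. Writing $P_{\mathbf{y}} := P_1^{y_1}\cdots P_m^{y_m}$, Lemma~\ref{251124lem1} (Eq.~\eqref{251124eq3}) gives $\CPP(H) = \sum_{\mathbf{y}\in\BFF_2^m} r_{\mathbf{y}}\, P_{\mathbf{y}}$, whose coefficients $r_{\mathbf{y}}$ coincide (up to $1/\CNN$) with the amplitudes of $\ket{R^l(H)}$ in Eq.~\eqref{eqn:Rl(H)}. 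Let $\ket{\Omega}=2^{-n/2}\sum_{x}\ket{x}\ket{x}$ be the maximally entangled state on a doubled system $AB$. Since $H$ is Hermitian and $\CPP$ has real coefficients, $(\CPP(H)\ot I)\ket{\Omega}$ is a subnormalized purification of $\CPP^2(H)$: tracing out $B$ gives $\CPP(H)\CPP(H)^\dagger/2^n = \CPP^2(H)/2^n$, so after normalization its $A$-marginal is exactly $\rho_{\CPP}(H)$. It therefore suffices to prepare the normalized vector proportional to $\sum_{\mathbf{y}} r_{\mathbf{y}}(P_{\mathbf{y}}\ot I)\ket{\Omega}$ on $AB$.

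The algorithm proceeds as follows. First I would prepare $\ket{R^l(H)}$ on an $m$-qubit label register $L$ (Lemmas~\ref{251118lem1}--\ref{251124lem2}) and $\ket{\Omega}$ on $AB$. Controlled on each label qubit $y_i$, apply $P_i$ to $A$; as the $P_i$ commute this yields $\tfrac{1}{\CNN}\sum_{\mathbf{y}} r_{\mathbf{y}}\ket{\mathbf{y}}_L (P_{\mathbf{y}}\ot I)\ket{\Omega}$. Next, apply a coherent (Clifford) Bell-measurement circuit $U_{\mathrm{Bell}}$ on $AB$, which sends $(W(\mathbf{v})\ot I)\ket{\Omega}\mapsto\ket{\mathbf{v}}$ for every Weyl operator $W(\mathbf{v})$, $\mathbf{v}\in\BFF_2^{2n}$; since $P_{\mathbf{y}}=\mathrm{ph}_{\mathbf{y}}\,W(B^\top\mathbf{y})$ with $\mathrm{ph}_{\mathbf{y}}\in\{\pm1\}$, the $AB$ register is mapped to the syndrome $\ket{B^\top\mathbf{y}}$ (carrying the phase $\mathrm{ph}_{\mathbf{y}}$). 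The label and syndrome registers now hold exactly $\ket{\mathbf{y}}\ket{B^\top\mathbf{y}}$, so a single application of $\CDD_H^{(l)}$ resets the label register to $\ket{\mathbf{0}}$. Finally, applying $U_{\mathrm{Bell}}^{-1}$ restores $\tfrac{1}{\CNN}\ket{\mathbf{0}}_L\sum_{\mathbf{y}} r_{\mathbf{y}}(P_{\mathbf{y}}\ot I)\ket{\Omega}$; discarding $L$ and tracing out $B$ produces $\rho_{\CPP}(H)$.

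Several points must be checked. The oracle acts only on vectors of Hamming weight at most $l$, but this is precisely the support of $\ket{R^l(H)}$: if $r_{\mathbf{y}}\neq0$ then some $\bmu\equiv\mathbf{y}\ (\mathrm{mod}\ 2)$ with $\abs{\bmu}\le l$ exists, and $y_i\le\mu_i$ forces $\abs{\mathbf{y}}\le\abs{\bmu}\le l$, so a single weight-$l$ oracle call suffices. The signs $\mathrm{ph}_{\mathbf{y}}$ introduced by $U_{\mathrm{Bell}}$ are removed by $U_{\mathrm{Bell}}^{-1}$, so they cancel and $(P_{\mathbf{y}}\ot I)\ket{\Omega}$ is recovered exactly. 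Moreover, the existence of the unitary $\CDD_H^{(l)}$ forces the syndromes $B^\top\mathbf{y}$ to be distinct over all $1\le\abs{\mathbf{y}}\le l$ (otherwise two orthogonal inputs would map to the same output), which guarantees that the $P_{\mathbf{y}}$ are distinct Pauli operators; the orthogonality relation $\Tr{P_{\mathbf{y}}P_{\mathbf{y}'}}=2^n\delta_{\mathbf{y}\mathbf{y}'}$ then gives $\norm{\sum_{\mathbf{y}} r_{\mathbf{y}}(P_{\mathbf{y}}\ot I)\ket{\Omega}}^2 = \CNN^2$, so the prepared state is correctly normalized and, by the first paragraph, its $A$-marginal is $\rho_{\CPP}(H)$.

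For the runtime, the classical preprocessing and the preparation of $\ket{R^l(H)}$ cost $O(ml^3)$ and $O(m\cdot\mathrm{poly}(l))$ respectively (Lemma~\ref{251124lem2}); preparing $\ket{\Omega}$, the $m$ controlled Paulis, and the two Bell-measurement circuits contribute $O(mn)$ additional gates, plus the single oracle call, yielding the claimed $O(m\cdot\mathrm{poly}(l,n))$. I expect the main obstacle to be the bookkeeping of the coherent Bell-measurement correspondence with consistent signs: one must verify that $U_{\mathrm{Bell}}$ exposes precisely the symplectic syndrome $B^\top\mathbf{y}$ that the oracle in Eq.~\eqref{eq:oracle} expects, so that $\CDD_H^{(l)}$ disentangles the label register cleanly and the subsequent $U_{\mathrm{Bell}}^{-1}$ reconstructs the vectorized $\CPP(H)$ without residual phases.
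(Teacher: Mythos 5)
Your proposal is correct and follows essentially the same route as the paper's proof: prepare the reference state $\ket{R^l(H)}$, entangle it with one half of a maximally entangled state via controlled Paulis, convert to the symplectic syndrome with a coherent Bell measurement, apply the decoding oracle once to disentangle the label register, and undo the Bell measurement to obtain the vectorization of $\CPP(H)$. Your additional checks (that the support of $\ket{R^l(H)}$ is confined to Hamming weight at most $l$, that the Bell-measurement phases cancel, and that unitarity of the oracle forces distinct syndromes so the normalization works out) are details the paper leaves implicit, and they are verified correctly.
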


\begin{proof}
By Lemma \ref{251124lem1}, the operator $\CPP(H)$ can be written as 
\begin{align}
\CPP(H)  =& \sum_{
\substack{\mathbf{y}\in \BFF_2^m  
}} 
\left(
\sum_{j=0}^l a_j \cdot j! 
\sum_{
\substack{
\boldsymbol{\mu} \in \Z_{\ge 0}^m\\
|\boldsymbol{\mu}|=j,\; \boldsymbol{\mu} \equiv \mathbf{y} (\bmod 2) 
}} 
\frac{\mathbf{c}^{\bmu}}{\bmu!}\right)
P_1^{y_1} \cdots P_m^{y_m}.
\end{align}

We begin by preparing the state $\ket{R^l(H)}$ defined in Eq.~\eqref{eqn:Rl(H)} on register $A$.
By Lemma \ref{251124lem2},
this state can be prepared in $O(m\cdot \mathrm{poly}(l))$ time. 
Next, let us prepare 
a maximally entangled state $\ket{\Phi_n} =\frac{1}{\sqrt{2^n}} \sum_{\mathbf{y}} \ket{\mathbf{y}} \otimes \ket{\mathbf{y}}$ on $n$ pairs of qubits across registers
$B$ and $C$. Then
the initial joint state is 
\begin{align}
\ket{\psi_1}_{ABC} = &
\ket{R^l(H)}_A \otimes \ket{\Phi_n}_{BC} \\
= &
\frac{1}{\CNN}\sum_{
\substack{\mathbf{y}\in \BFF_2^m  
}} 
\left(
\sum_{j=0}^l a_j \cdot j! 
\sum_{
\substack{
\boldsymbol{\mu} \in \Z_{\ge 0}^m\\
|\boldsymbol{\mu}|=j,\; \boldsymbol{\mu} \equiv \mathbf{y} (\bmod 2) 
}} 
\frac{\mathbf{c}^{\bmu}}{\bmu!}\right)
\ket{\mathbf{y}}_A \otimes \ket{\Phi_n}_{BC}.
\end{align}

Next, we apply the controlled unitary $$\prod^m_{ i=1}(C^{(i)}P_i)_{A\to B},$$
 where $(C^{(i)}P_i)_{A\to B}$ denotes the controlled-Pauli operation that applies
$P_i$ to register $B$ conditioned on the $i$-th  qubit of register 
$A$ being in the state $\ket{1}$. 
  After this operation, the state becomes
\begin{equation}
\ket{\psi_2}_{ABC}
= \frac{1}{\CNN}\sum_{
\substack{\mathbf{y}\in \BFF_2^m  
}} 
\left(
\sum_{j=0}^l a_j \cdot j! 
\sum_{
\substack{
\boldsymbol{\mu} \in \Z_{\ge 0}^m\\
|\boldsymbol{\mu}|=j,\; \boldsymbol{\mu} \equiv \mathbf{y} (\bmod 2) 
}} 
\frac{\mathbf{c}^{\bmu}}{\bmu!}\right)
\ket{\mathbf{y}}_A  
\otimes (P_{\mathbf{y}} \otimes I)_{BC} \, |\Phi_n\rangle_{BC},
\end{equation}
where $ P_{\mathbf{y}} := \prod_{i\in \mathrm{supp}(\mathbf{y})} P_i$.

The next step is to decode $\ket{\mathbf{y}}_A$.
First, we apply a coherent Bell measurement on registers $B$ and $C$.
This operation maps
$$(P_{\mathbf{y}} \otimes I)  \, |\Phi_n\rangle \to \ket{\mathrm{symp}(P_{\mathbf{y}})}$$for all $ \mathbf{y}$,
where $\mathrm{symp}(P_{\mathbf{y}})$ is the symplectic representation of $P_{\mathbf{y}}$.
Then the joint state becomes
\begin{equation}\label{251208eq2}
\ket{\psi_3}_{ABC}
= \frac{1}{\CNN}\sum_{
\substack{\mathbf{y}\in \BFF_2^m  
}} 
\left(
\sum_{j=0}^l a_j \cdot j! 
\sum_{
\substack{
\boldsymbol{\mu} \in \Z_{\ge 0}^m\\
|\boldsymbol{\mu}|=j,\; \boldsymbol{\mu} \equiv \mathbf{y} (\bmod 2) 
}} 
\frac{\mathbf{c}^{\bmu}}{\bmu!}\right)
\ket{\mathbf{y}}_A  
\otimes \ket{\mathrm{symp}(P_{\mathbf{y}})}_{BC}. 
\end{equation}

Now we use the decoding oracle $\mathcal{D}^{(l)}_H$ on registers $A$ and $BC$.
This oracle resets register $A$ to $\ket{0}$, after which register $A$ is discarded. Hence, 
the state becomes
\begin{equation}\label{251208eq1}
\ket{\psi_4}_{BC}
= \frac{1}{\CNN}\sum_{
\substack{\mathbf{y}\in \BFF_2^m  
}} 
\left(
\sum_{j=0}^l a_j \cdot j! 
\sum_{
\substack{
\boldsymbol{\mu} \in \Z_{\ge 0}^m\\
|\boldsymbol{\mu}|=j,\; \boldsymbol{\mu} \equiv \mathbf{y} (\bmod 2) 
}} 
\frac{\mathbf{c}^{\bmu}}{\bmu!}\right)
\ket{ \mathrm{symp}(P_{\mathbf{y}})}_{BC}. 
\end{equation}
Then, let us undo the coherent Bell measurement, yielding
\begin{align}
\ket{\psi_5}_{BC}
=&  \frac{1}{\CNN} \sum_{
\substack{\mathbf{y}\in \BFF_2^m  
}} 
\left(
\sum_{j=0}^l a_j \cdot j! 
\sum_{
\substack{
\boldsymbol{\mu} \in \Z_{\ge 0}^m\\
|\boldsymbol{\mu}|=j,\; \boldsymbol{\mu} \equiv \mathbf{y} (\bmod 2) 
}} 
\frac{\mathbf{c}^{\bmu}}{\bmu!}\right)
(P_{\mathbf{y}} \otimes I)_{BC} \, |\Phi^n\rangle_{BC}
\\
= &\frac{1}{\CNN} (\CPP(H)\otimes I) |\Phi^n\rangle_{BC}.
\end{align}

Let $\{\ket{\lambda}\}$ denote an eigenbasis of $H$.
Then the maximally entangled state can also be written as 
\[|\Phi^n\rangle_{BC} = \frac{1}{\sqrt{2^n}} \sum_{\lambda} \ket{\lambda} \otimes \ket{\bar\lambda}, \]
where $\ket{\bar\lambda}$ is the state defined by $\ep{\mathbf{y} | \bar\lambda} = \overline{ \ep{\mathbf{y} |  \lambda}}$ for all computational basis state $ \ket{\mathbf{y}}$.
Therefore 
\begin{align*}
(\CPP(H)\otimes I) |\Phi^n\rangle_{BC} = \frac{1}{\sqrt{2^n}} \sum_{\lambda} \CPP(\lambda) \ket{\lambda}_B \otimes \ket{\bar\lambda}_C.
\end{align*}
Finally,
we trace out register $C$, and we obtain $\rho_{\CPP}(H)$. 
Hence, the total runtime of the algorithm is given by the sum of the time required to prepare the reference state and the time required for the subsequent steps, yielding an overall complexity of $O(m\cdot \mathrm{poly}(l,n))$.

\end{proof}

\begin{Rem}
Our algorithms are conditional on the availability of efficient classical decoders for the symplectic code. 
 This assumption is satisfied for many structured code families.
For example, when $\mathrm{symp}(H)$ is a trivial code (i.e., $\dim \mathrm{symp}(H) = 0$), efficient classical methods—such as Gaussian elimination or Gauss–Jordan elimination—can correct errors of arbitrary weight (See Theorem \ref{251208thm1}).
\end{Rem}
In Theorem \ref{251206thm1}, we assumed access to a perfect decoding oracle. In practice, however, decoding operations are subject to noise and implementation errors. It is therefore essential to understand the robustness of our results under imperfect decoding. To this end, we introduce an imperfect decoding oracle, denoted by $\CDD_H^{(l,\epsilon)}$, defined as a unitary operator acting as
\begin{align}\label{eq:impe_dec}
        \CDD_H^{ (l,\epsilon)} \ket{\mathbf{y}} \ket{B^{\top} \mathbf{y}} = \sum_{\mathbf{y}'}\sqrt{p(\mathbf{y}'|B^{\top}\mathbf{y})} \ket{\mathbf{y}\oplus\mathbf{y}' } \ket{B^{\top} \mathbf{y}},
    \end{align} 
    for any $\mathbf{y} \in \BFF_2^m$ such that $1\leq |\mathbf{y}| \leq l$. 
    Here, $p(\mathbf{y}'|B^{\top}\mathbf{y})$ is a conditional probability distribution satisfying  $p(\mathbf{y}|B^{\top}\mathbf{y})\geq 1-\epsilon$, and $\sum_{\mathbf{y}'}p(\mathbf{y}'|B^{\top}\mathbf{y})=1$.

\begin{thm}[Robustness to imperfect decoding]\label{thm:robust}
Let $H = \sum_{i=1}^m c_i P_i$ be a Hamiltonian on $n$ qubits, where $c_i\in \R$, $P_i$ are distinct, mutually commuting $n$-qubit Pauli operators.
Assume the existence of a imperfect weight-$l$ decoding oracle $\CDD_H^{ (l, \epsilon)}$ 
as defined in Eq.~\eqref{eq:impe_dec}. Let  $\CPP(x)=\sum^l_{j=0}a_jx^j$ be a univariate polynomial of degree $l$. 
Then the state $\rho_{\CPP,\epsilon}(H)$ (respectively, $\rho_{\CPP}(H)$) using $\CDD_H^{ (l, \epsilon)}$ (respectively, $\CDD_H^{ (l)}$)  
generated by the quantum algorithm in Theorem \ref{251206thm1} satisfies the following relation
\begin{align}
\norm{\rho_{\CPP,\epsilon}(H)-\rho_{\CPP}(H)}_1\leq 2\sqrt{\epsilon},
\end{align}
where $\norm{\cdot}_1$ is the trace norm.
\end{thm}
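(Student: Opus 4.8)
The plan is to reduce the claim to a comparison of the two pure states produced immediately after the (ideal versus imperfect) decoding oracle, and then to exploit contractivity of the trace norm. Starting from the state $\ket{\psi_3}_{ABC}$ in Eq.~\eqref{251208eq2}, I would abbreviate its coefficients as $\beta_{\mathbf y} := \sum_{j=0}^l a_j\, j!\sum_{\bmu}\mathbf c^{\bmu}/\bmu!$ (the inner sum running over $|\bmu|=j$, $\bmu\equiv\mathbf y\,(\mathrm{mod}\,2)$), so that $\ket{\psi_3}_{ABC}=\CNN^{-1}\sum_{\mathbf y}\beta_{\mathbf y}\ket{\mathbf y}_A\otimes\ket{B^\top\mathbf y}_{BC}$, using $\mathrm{symp}(P_{\mathbf y})=B^\top\mathbf y$. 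Applying the ideal oracle $\CDD_H^{(l)}$ gives the product state $\ket{\Psi}_{ABC}=\ket{\mathbf 0}_A\otimes\ket{\psi_4}_{BC}$, whereas the imperfect oracle $\CDD_H^{(l,\epsilon)}$ of Eq.~\eqref{eq:impe_dec} gives
$$\ket{\Psi_\epsilon}_{ABC}=\frac{1}{\CNN}\sum_{\mathbf y}\beta_{\mathbf y}\sum_{\mathbf y'}\sqrt{p(\mathbf y'\mid B^\top\mathbf y)}\,\ket{\mathbf y\oplus\mathbf y'}_A\otimes\ket{B^\top\mathbf y}_{BC}.$$
After the oracle, both branches undergo the same operations — discard register $A$, undo the coherent Bell measurement on $BC$ (a unitary), and trace out $C$ — which together form a single CPTP map $\Lambda$ with $\rho_{\CPP}(H)=\Lambda(\proj{\Psi})$ (reproducing the ideal output of Theorem~\ref{251206thm1}) and $\rho_{\CPP,\epsilon}(H)=\Lambda(\proj{\Psi_\epsilon})$.

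By monotonicity of the trace norm under CPTP maps, $\norm{\rho_{\CPP,\epsilon}(H)-\rho_{\CPP}(H)}_1\le\norm{\proj{\Psi_\epsilon}-\proj{\Psi}}_1$, so it suffices to lower bound the overlap $\inner{\Psi}{\Psi_\epsilon}$ and invoke the pure-state identity $\norm{\proj{\Psi_\epsilon}-\proj{\Psi}}_1=2\sqrt{1-|\inner{\Psi}{\Psi_\epsilon}|^2}$. In the overlap, the factor $\iinner{\mathbf 0}{\mathbf y\oplus\mathbf y'}_A$ forces $\mathbf y'=\mathbf y$, while $\iinner{B^\top\tilde{\mathbf y}}{B^\top\mathbf y}_{BC}$ forces $B^\top\tilde{\mathbf y}=B^\top\mathbf y$. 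Here I would invoke that the existence of a weight-$l$ oracle requires the map $\mathbf y\mapsto B^\top\mathbf y$ to be injective on the weight-$\le l$ support of $\ket{R^l(H)}$ (equivalently, no two distinct weight-$\le l$ error patterns share a syndrome, which is exactly the well-definedness condition for the weight-$l$ oracle), so the constraint collapses to $\tilde{\mathbf y}=\mathbf y$ and all cross terms vanish. This leaves $\inner{\Psi}{\Psi_\epsilon}=\CNN^{-2}\sum_{\mathbf y}\beta_{\mathbf y}^2\sqrt{p(\mathbf y\mid B^\top\mathbf y)}\ge\sqrt{1-\epsilon}\cdot\CNN^{-2}\sum_{\mathbf y}\beta_{\mathbf y}^2=\sqrt{1-\epsilon}$, using $p(\mathbf y\mid B^\top\mathbf y)\ge1-\epsilon$ and the normalization $\CNN^2=\sum_{\mathbf y}\beta_{\mathbf y}^2$. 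Combining, $\norm{\rho_{\CPP,\epsilon}(H)-\rho_{\CPP}(H)}_1\le2\sqrt{1-(1-\epsilon)}=2\sqrt{\epsilon}$.

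The step I expect to be the main obstacle is the cross-term cancellation: one must be careful that syndrome collisions $B^\top\tilde{\mathbf y}=B^\top\mathbf y$ among weight-$\le l$ vectors do not occur, since otherwise the off-diagonal contributions to $\inner{\Psi}{\Psi_\epsilon}$ need not drop out and the clean bound can fail. This injectivity is implicit in the existence of $\CDD_H^{(l)}$ as a unitary, and I would make it an explicit standing assumption. A secondary point to verify is that $\ket{\Psi_\epsilon}$ is correctly normalized — which it is, because $\CDD_H^{(l,\epsilon)}$ is unitary and $\sum_{\mathbf y'}p(\mathbf y'\mid B^\top\mathbf y)=1$ — so that the pure-state trace-distance identity applies verbatim; the same injectivity argument confirms $\inner{\Psi_\epsilon}{\Psi_\epsilon}=1$ directly.
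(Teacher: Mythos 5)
Your proposal is correct and follows essentially the same route as the paper: both arguments reduce the claim to the overlap of the two global pure states differing only in the post-decoding content of register $A$, lower-bound that overlap by $\sqrt{1-\epsilon}$ using $p(\mathbf{y}\mid B^{\top}\mathbf{y})\geq 1-\epsilon$ together with the normalization $\CNN^2=\sum_{\mathbf{y}}|w_{\mathbf{y}}|^2$, and then convert to the trace-norm bound $2\sqrt{\epsilon}$ (you via CPTP contractivity plus the exact pure-state trace-distance formula, the paper via Uhlmann's theorem plus the Fuchs--van de Graaf inequality --- interchangeable here and yielding the identical bound). Your explicit flagging of the injectivity of $\mathbf{y}\mapsto B^{\top}\mathbf{y}$ on the weight-$\leq l$ support, needed to kill the cross terms, is a point the paper leaves implicit but which is indeed guaranteed by the existence of the weight-$l$ oracle.
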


\begin{proof}
First, we have the purification of $\rho_{\CPP}(H)$,
\begin{align*}
    \ket{\psi_{\CPP}(H)}=&\frac{1}{\CNN}\sum_{
\substack{\mathbf{y}\in \BFF_2^m  
}} 
\left(
\sum_{j=0}^l a_j \cdot j! 
\sum_{
\substack{
\boldsymbol{\mu} \in \Z_{\ge 0}^m\\
|\boldsymbol{\mu}|=j,\; \boldsymbol{\mu} \equiv \mathbf{y} (\bmod 2) 
}} 
\frac{\mathbf{c}^{\bmu}}{\bmu!}\right)
\ket{\mathbf{0}}_A\ot (P_{\mathbf{y}}\ot I)\ket{\Phi^n}_{BC}\\
=&\frac{1}{\CNN}\sum_{
\substack{\mathbf{y}\in \BFF_2^m  
}} 
w_{\mathbf{y}}
\ket{\mathbf{0}}_A\ot (P_{\mathbf{y}}\ot I)\ket{\Phi^n}_{BC}\\
=&\frac{1}{\CNN}\sum_{
\substack{\mathbf{y}\in \BFF_2^m,
|\mathbf{y}|\leq l
}} 
w_{\mathbf{y}}
\ket{\mathbf{\mathbf{0}}}_A\ot (P_{\mathbf{y}}\ot I)\ket{\Phi^n}_{BC},
\end{align*}
where the coefficient 
$$w_{\mathbf{y}}= \sum_{j=0}^l a_j \cdot j!\sum_{
\substack{
\boldsymbol{\mu} \in \Z_{\ge 0}^m\\
|\boldsymbol{\mu}|=j,\; \boldsymbol{\mu} \equiv \mathbf{y} (\bmod 2) 
}} 
\frac{\mathbf{c}^{\bmu}}{\bmu!},$$
and $w_{\mathbf{y}}\neq 0$ only if 
$|\mathbf{y}|\leq l$.

Similarly, we have the purification of $\rho_{\CPP,\epsilon}(H)$,
\begin{align*}
\ket{\psi_{\CPP,\epsilon}(H)}
=\frac{1}{\CNN}\sum_{
\substack{\mathbf{y}\in \BFF_2^m,
|\mathbf{y}|\leq l
}} 
w_{\mathbf{y}}
\ket{\mathbf{z}(\mathbf{y})}_A\ot (P_{\mathbf{y}}\ot I)\ket{\Phi^n}_{BC},
\end{align*}
where $\ket{\mathbf{z}(\mathbf{y})}=\sum_{\mathbf{y}'}\sqrt{p(\mathbf{y}'|B^{\top}\mathbf{y})} \ket{\mathbf{y}\oplus\mathbf{y}' }$
with $\iinner{\mathbf{0}}{\mathbf{z}(\mathbf{y})}\geq \sqrt{1-\epsilon}$.
Hence, the fidelity between $\rho_{\CPP}(H)$ and  $\rho_{\CPP,\epsilon}(H)$ is 
\begin{align}
  F(\rho_{\CPP}(H), \rho_{\CPP,\epsilon}(H))\geq   \iinner{\psi_{\CPP}(H)}{\psi_{\CPP,\epsilon}(H)}
    =\frac{1}{\CNN^2}\sum_{
\substack{\mathbf{y}\in \BFF_2^m,
|\mathbf{y}|\leq l
}} 
|w_{\mathbf{y}}|^2
\iinner{\mathbf{0}}{\mathbf{z}(\mathbf{y})}
\geq \sqrt{1-\epsilon}.
\end{align}
Therefore, by the Fuchs-van de Graaf inequality~\cite{fuchs2002cryptographic}, 
we have 
\begin{align}
    \norm{\rho_{\CPP}(H)- \rho_{\CPP,\epsilon}(H)}_1
    \leq 2\sqrt{1-  F(\rho_{\CPP}(H), \rho_{\CPP,\epsilon}(H))^2}
    \leq 2\sqrt{\epsilon}.
\end{align}
\end{proof}
Notably, the trace-distance error depends only on the decoding failure probability and is independent of system size, indicating strong stability.

\begin{Rem}
Now, let us briefly discuss how Theorem~\ref{251206thm1} can be used to obtain an efficient approximation of a Gibbs state.
It has been shown that there exists some polynomial $\CPP(x)$ with degree 
\(
l \le 1.12\,\beta \|H\| + 0.648 \ln \frac{2}{\delta},
\)
where $\norm{H}$ denotes the operator norm of the Hamiltonian $H$, 
such that the state $\rho_{\CPP}(H)$ is $\delta$-close, in trace distance, to the Gibbs state
$\exp(-\beta H)/\Tr{\exp(-\beta H)}$ with $\beta$ being inverse temperature~\cite{schmidhuber2025hamiltonian}.
Assuming the availability of a weight-$l$ decoding oracle, Theorem~\ref{251206thm1} implies that there exists an efficient quantum algorithm that prepares a state $\rho_{\CPP}(H)$ which approximates the Gibbs state to accuracy $\delta$ for any commuting Hamiltonian of the form $H=\sum_i c_i P_i$ with $c_i\in\mathbb{R}$.
\end{Rem}

%

%
%
%
%
%

\section{Commuting Hamiltonians with nearly independent Paulis}
 \label{sec:comnear}
In Theorem~\ref{251206thm1}, the degree $l$ of the polynomial $\CPP(x)$ 
is determined by the decoding oracle $\CDD_H^{ (l)}$. Hence, 
one necessarily has
$l\leq m$, which limits the expressivity of the resulting state
$\rho_{\CPP}(H)$. Now, 
we consider a case where the degree of the polynomial $\CPP(x)$ can exceed $m$. 
 
Specifically, consider the case in which the symplectic representations
$\mathrm{symp}(P_i )$ of the Pauli operators appearing in $H = \sum_{i=1}^mc_i P_i$ are linearly independent in $\BFF_2^{2n}$,
i.e., they 
span an $m$-dimensional subspace.
In this regime, $m\leq 2n$ and 
the parity-check matrix  $B^{\top}$
associated with the symplectic code of $H$
has full rank $m$, and the syndrome $B^{\top} \mathbf{y}$
uniquely determines the vector $\mathbf{y}\in \BFF_2^m$.
 Consequently, a weight-$m$ decoding oracle $\CDD_H^{ (m)}$ exists for $H$.

\begin{thm}\label{251208thm1}
Let $H = \sum_{i=1}^m c_i P_i$ be a  Hamiltonian on $n$ qubits, where $c_i\in \R$ and $P_i$ are distinct,
mutually commuting $n$-qubit Pauli operators,
such that $\mathrm{symp}(P_1 )$, ...,  $\mathrm{symp}(P_m )$ are linearly independent in $\BFF_2^{2n}$.
Let $\CDD_H^{ (m)}$ be a weight-$m$ decoding oracle for $H$. 
Let  $\CPP(x)$ be a univariate polynomial of degree $l$ (may be larger than $m$). 
Then there is a quantum algorithm that prepares the state
\begin{align}
\rho_{\CPP}(H) = \frac{{\CPP}^2(H)}{\Tr{{\CPP}^2(H)}},
\end{align}
using a single call to $\CDD_H^{ (m)}$.
Moreover, the running time of this quantum algorithm is 
 $\mathrm{poly}(n,l)$.
\end{thm}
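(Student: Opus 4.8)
The plan is to run \emph{verbatim} the algorithm from the proof of Theorem~\ref{251206thm1}, since Lemma~\ref{251124lem1} expresses $\CPP(H)$ as a sum over $\mathbf{y}\in\BFF_2^m$ of monomials $P_1^{y_1}\cdots P_m^{y_m}$ for \emph{any} degree $l$, including $l>m$. Concretely, I would prepare the reference state $\ket{R^l(H)}$ on an $m$-qubit register $A$, tensor it with a maximally entangled state $\ket{\Phi_n}_{BC}$, apply the controlled Paulis $\prod^m_{i=1}(C^{(i)}P_i)_{A\to B}$, perform a coherent Bell measurement sending $(P_{\mathbf{y}}\otimes I)\ket{\Phi_n}\to\ket{\mathrm{symp}(P_{\mathbf{y}})}$, invoke the decoding oracle to reset register $A$ to $\ket{\mathbf{0}}$, undo the Bell measurement, and trace out register $C$. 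The resulting state on $BC$ is $\tfrac1{\CNN}(\CPP(H)\otimes I)\ket{\Phi_n}_{BC}$, whose reduced state on $B$ is exactly $\rho_{\CPP}(H)$, by the same eigenbasis computation as before.

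The two substantive new ingredients both come from the hypothesis that $\mathrm{symp}(P_1),\dots,\mathrm{symp}(P_m)$ are linearly independent in $\BFF_2^{2n}$. First, since $\mathrm{symp}(P_{\mathbf{y}})=B^\top\mathbf{y}=\sum_i y_i\,\mathrm{symp}(P_i)$, linear independence makes $\mathbf{y}\mapsto B^\top\mathbf{y}$ injective on $\BFF_2^m$, so the coherent Bell measurement faithfully records $\mathbf{y}$ in the syndrome register and never conflates distinct indices. Second, as observed in the paragraph preceding the theorem, full rank of $B^\top$ means the syndrome determines $\mathbf{y}$ uniquely, so the weight-$m$ oracle $\CDD_H^{(m)}$ correctly decodes \emph{every} $\mathbf{y}\in\BFF_2^m$. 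This is precisely what licenses allowing $l>m$: although $\CPP(x)$ now has high degree, each nonzero coefficient $w_{\mathbf{y}}$ is still indexed by some $\mathbf{y}\in\BFF_2^m$ with $|\mathbf{y}|\le m$ (the minimal admissible $\boldsymbol{\mu}\equiv\mathbf{y}$ has $|\boldsymbol{\mu}|=|\mathbf{y}|\le l$, which holds automatically since $l>m\ge|\mathbf{y}|$), and $\CDD_H^{(m)}$ resets exactly all such weights. Thus a single call to $\CDD_H^{(m)}$ suffices irrespective of how large $l$ is.

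For the runtime, Lemma~\ref{251118lem1} gives $\ket{R^l(H)}$ as a $2$-ary MPS of bond dimension $D=l+1$ on $m$ sites, and Lemma~\ref{251124lem2} prepares it in classical pre-processing time $O(ml^3)$ and quantum time $O(m\cdot\mathrm{poly}(l))$. Because $m$ linearly independent vectors live in $\BFF_2^{2n}$, we have $m\le 2n$, so every factor of $m$ is absorbed into $\mathrm{poly}(n)$; combined with the $\mathrm{poly}(n,l)$ cost of the remaining steps, the total complexity is $\mathrm{poly}(n,l)$. I expect no genuine analytic obstacle here: the collapse of high-degree powers is handled entirely by Lemma~\ref{251124lem1} via $z_i^2=1$ (equivalently $P_i^2=I$), and the rest is bookkeeping. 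The one conceptual point worth stating carefully is the injectivity/full-rank argument above, which is what replaces the weight-$l$ oracle of Theorem~\ref{251206thm1} by a single weight-$m$ oracle valid for arbitrary polynomial degree.
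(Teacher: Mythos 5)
Your proposal is correct and follows essentially the same route as the paper: run the Theorem~\ref{251206thm1} algorithm unchanged and observe that, since every $\mathbf{y}\in\BFF_2^m$ has Hamming weight at most $m$ and linear independence makes $\mathbf{y}\mapsto B^\top\mathbf{y}$ injective, the weight-$m$ oracle suffices to decode even when $l>m$. The added remarks on injectivity and on $m\le 2n$ absorbing the $O(m\cdot\mathrm{poly}(l))$ cost into $\mathrm{poly}(n,l)$ are consistent with, and slightly more explicit than, the paper's own two-line argument.
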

\begin{proof}
The proof is similar to that of Theorem \ref{251206thm1}.
The only point to note is that when we prepare $\ket{\psi_4}_{BC}$ in Eq.~\eqref{251208eq1} from $\ket{\psi_3}_{BC}$ in Eq.~ \eqref{251208eq2},
we use a weight-$m$ decoding oracle $\CDD_H^{ (m)}$ even if $l>m$.
This is valid because in this step we only need to recover $\mathbf{y}$ from $\mathrm{symp}(P_{\mathbf{y}})$, where $\mathbf{y}$ has Hamming weight at most $m$.
\end{proof}

Theorem \ref{251208thm1} states that $\rho_{\CPP}(H)$ can be prepared in polynomial time for a polynomial $\CPP(x)$ of arbitrary degree $l$,
provided that the symplectic representations of the Pauli operators in $H$ are linearly independent.
That is,
this holds when the symplectic code of $H$ has dimension zero, i.e., when the number of logical bits $k=\dim \mathrm{Symp}(H)=0$.

Beyond this fully independent setting, there exist many physically and computationally relevant Hamiltonians for which the Pauli operators are not completely independent, yet the corresponding symplectic codes have only small dimension $k$ (see~\cite{schmidhuber2025hamiltonian}). Motivated by these examples, we now turn to the more general class of commuting Hamiltonians 
 $H=\sum_ic_iP_i$ whose Pauli terms are nearly independent.

Let us start with 
the simplest case. 
Suppose the Hamiltonian $H$ can be written as
\begin{align}\label{eq:exm_nin}
H = \left( \sum_{i=1}^m c_i P_i \right)+c_{m+1}P_1P_2\cdots P_m,
\end{align}
where $c_i\in \R$ and $\{P_i\}_{i=1}^m$ are independent and  commuting $n$-qubit Pauli operators.
In this case, there is always a weight-$m$ decoding oracle $\CDD_{H'}$  for the Hamiltonian $H'=\sum_{i=1}^m c_i P_i$, 
which is the decoding oracle we will use for the Hamiltonian $H$. 
As demonstrated in Theorems~\ref{251206thm1} and \ref{251208thm1}, the central technical ingredient is the construction of an appropriate reference state for 
$\rho_{\CPP}(H) $.

\begin{lem}\label{251204lem1}
Let $x = \sum_{i=1}^m c_iz_i + c_{m+1}z_1z_2\cdots z_m$ such that $c_i\in\R$ and $z^2_i = 1$ for all $i$.
We have that 
\begin{align}\label{eqn:x^l}
x^l =& \sum_{
\substack{\mathbf{y}\in \BFF_2^m  
}} l! 
\left(
I_1^l(\mathbf{y})+I_2^l(\mathbf{y})
\right)
z_1^{y_1}\cdots z_m^{y_m},
\end{align}
where 
\begin{equation}\label{eqn:I1}
I^l_1(\mathbf{y}) = \sum_{
\substack{
j=0\\
j\text{ is even}
}}^l
\sum_{
\substack{
\boldsymbol{\mu} \in \Z_{\ge 0}^m\\
|\boldsymbol{\mu}|=l-j\\
\boldsymbol{\mu} \equiv \mathbf{y} (\bmod 2) 
}} 
\frac{\mathbf{c}^{\bmu}}{\bmu!} 
\frac{c_{m+1}^{j}}{j!},
\end{equation}
and
\begin{equation}\label{eqn:I2}
I_2^l(\mathbf{y}) = \sum_{
\substack{
j=0\\
j\text{ is odd}
}}^l
\sum_{
\substack{
\boldsymbol{\mu} \in \Z_{\ge 0}^m\\
|\boldsymbol{\mu}|=l-j\\
\boldsymbol{\mu} \equiv \mathbf{y} + \mathbf{1} (\bmod 2) 
}} 
\frac{\mathbf{c}^{\bmu}}{\bmu!} 
\frac{c_{m+1}^{j}}{j!},
\end{equation}
where $\mathbf{1}=(1,1,\cdots,1)$ denotes the all-ones vector in the summation. 
\end{lem}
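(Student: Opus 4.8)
The plan is to split $x = A + B$ with $A = \sum_{i=1}^m c_i z_i$ the linear part and $B = c_{m+1} z_1 z_2 \cdots z_m$ the single product term. Since the $z_i$ mutually commute (as in the commuting setting underlying Lemma~\ref{251124lem1}), $A$ and $B$ commute, so the ordinary binomial theorem gives $x^l = \sum_{j=0}^l \binom{l}{j} A^{l-j} B^j$. From here the strategy is to expand each summand, repeatedly use $z_i^2 = 1$ to reduce exponents modulo $2$, and collect terms according to the resulting parity vector $\mathbf{y} \in \BFF_2^m$.

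The first ingredient is the evaluation of $B^j$. Because $(z_1 z_2 \cdots z_m)^2 = z_1^2 \cdots z_m^2 = 1$, the power $B^j$ equals $c_{m+1}^j$ when $j$ is even and $c_{m+1}^j\, z_1 z_2 \cdots z_m$ when $j$ is odd. This parity dichotomy is precisely what partitions the final sum into the two contributions $I_1^l(\mathbf{y})$ (even $j$) and $I_2^l(\mathbf{y})$ (odd $j$). The second ingredient is the multinomial expansion $A^{l-j} = \sum_{|\bmu| = l-j} \binom{l-j}{\bmu}\, \mathbf{c}^{\bmu}\, z_1^{\mu_1}\cdots z_m^{\mu_m}$, exactly as in the proof of Lemma~\ref{251124lem1}; reducing $z_1^{\mu_1}\cdots z_m^{\mu_m}$ to $z_1^{\mu_1 \bmod 2}\cdots z_m^{\mu_m \bmod 2}$ lets me group by the parity class $\bmu \equiv \mathbf{y} \pmod 2$.

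For even $j$, $B^j$ contributes no factors of $z$, so the surviving monomial is $z_1^{y_1}\cdots z_m^{y_m}$ with $\bmu \equiv \mathbf{y}$, which assembles into $I_1^l(\mathbf{y})$. For odd $j$, multiplication by $z_1 \cdots z_m$ flips every component of the parity vector, turning $z^{\bmu \bmod 2}$ into $z^{(\bmu + \mathbf{1}) \bmod 2}$; relabelling $\mathbf{y} = (\bmu + \mathbf{1}) \bmod 2$ then produces the shifted congruence $\bmu \equiv \mathbf{y} + \mathbf{1} \pmod 2$ appearing in $I_2^l(\mathbf{y})$. The last step is the binomial identity $\binom{l}{j}\binom{l-j}{\bmu} = \frac{l!}{j!\,\bmu!}$, which converts the product of the two coefficients into the prefactor $l!$ times the weights $\frac{\mathbf{c}^{\bmu}}{\bmu!}\frac{c_{m+1}^j}{j!}$ of Eqs.~\eqref{eqn:I1}--\eqref{eqn:I2}.

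I do not expect a substantive obstacle: the computation is essentially a guided reindexing of the binomial and multinomial theorems. The one place demanding care is the odd-$j$ case, where the global factor $z_1 \cdots z_m$ uniformly toggles all $m$ parities, and this shift by $\mathbf{1}$ must be tracked consistently so that $\bmu \equiv \mathbf{y} + \mathbf{1}$ emerges rather than $\bmu \equiv \mathbf{y}$.
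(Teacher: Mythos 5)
Your proposal is correct and follows essentially the same route as the paper: split $x$ into the linear part and the product term, apply the binomial theorem, reduce via $z_i^2=1$ and the parity of $j$, and recombine the coefficients with $\binom{l}{j}\binom{l-j}{\bmu}=\frac{l!}{j!\,\bmu!}$. The paper organizes the bookkeeping through the intermediate coefficients $C_{\mathbf{y}}^{l}$ from Lemma~\ref{251124lem1}, but the substance, including the parity flip by $\mathbf{1}$ in the odd-$j$ case, is identical.
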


\begin{proof}
First, $x^l$ can be written as 
\begin{align*}
x^l
= \left(\sum_{i=1}^m c_iz_i + c_{m+1}z_1z_2\cdots z_m\right)^l
=  \sum_{j=0}^l\binom{l}{j}
\left(\sum_{i=1}^m c_iz_i\right)^{l-j}
\left(c_{m+1}z_1z_2\cdots z_m\right)^j.
\end{align*}
Let 
\begin{equation*}
C_{\mathbf{y}}^l=l!\sum_{
\substack{
\boldsymbol{\mu} \in \Z_{\ge 0}^m\\
|\boldsymbol{\mu}|=l\\
\boldsymbol{\mu} \equiv \mathbf{y} (\bmod 2) 
}} 
\frac{\mathbf{c}^{\bmu}}{\bmu!}.  
\end{equation*} 
Note that each  $z_j$ has $z_j^2 =1$. 
Using \eqref{eqn:p(H)}, we can rewrite $x^l$ as follows
\begin{align*}
x^l
& = \sum_{j=0}^l\binom{l}{j} 
\left(\sum_{\mathbf{y}\in \BFF_2^m}
C_{\mathbf{y}}^{l-j}\,
z_1^{y_1} \cdots z_m^{y_m}\right)
\left( c_{m+1}^j z_1^j z_2^j\cdots z_m^j\right)\\
& = \sum_{\mathbf{y}\in \BFF_2^m} 
\left(
\sum_{\substack{j=0\\j\text{ is even}}}^l \binom{l}{j} C_{\mathbf{y}}^{l-j} c_{m+1}^j
+ 
\sum_{\substack{j=0\\j\text{ is odd}}}^l \binom{l}{j} C_{\mathbf{y}}^{l-j} c_{m+1}^j z_1z_2\cdots z_m
\right)
z_1^{y_1} \cdots z_m^{y_m}\\
& = \sum_{\mathbf{y}\in \BFF_2^m} 
\left(
\sum_{\substack{j=0\\j\text{ is even}}}^l \binom{l}{j} C_{\mathbf{y}}^{l-j} c_{m+1}^j
+ 
\sum_{\substack{j=0\\j\text{ is odd}}}^l \binom{l}{j} C_{\mathbf{y}+\mathbf{1}}^{l-j} c_{m+1}^j
\right)
z_1^{y_1} \cdots z_m^{y_m}.
\end{align*}
For the sum of even terms,
\begin{align*}
\sum_{\substack{j=0\\j\text{ is even}}}^l \binom{l}{j} C_{\mathbf{y}}^{l-j} c_{m+1}^j
& = \sum_{\substack{j=0\\j\text{ is even}}}^l \frac{l!}{j!(l-j)!}\; (l-j)!\sum_{
\substack{
\boldsymbol{\mu} \in \Z_{\ge 0}^m\\
|\boldsymbol{\mu}|=l-j\\
\boldsymbol{\mu} \equiv \mathbf{y} (\bmod 2) 
}} 
\frac{\mathbf{c}^{\bmu}}{\bmu!} \; c_{m+1}^j\\
& = l! \sum_{\substack{j=0\\j\text{ is even}}}^l \sum_{
\substack{
\boldsymbol{\mu} \in \Z_{\ge 0}^m\\
|\boldsymbol{\mu}|=l-j\\
\boldsymbol{\mu} \equiv \mathbf{y} (\bmod 2) 
}} 
\frac{\mathbf{c}^{\bmu}}{\bmu!} 
\frac{c_{m+1}^{j}}{j!}.
\end{align*}
The similar procedure can be applied to the sum of odd terms, and thus we obtain \eqref{eqn:x^l}.
\end{proof}

\begin{lem}\label{lem:251206-1}
Given a univariate polynomial $\CPP(x) =\sum^{l}_{j=1}a_jx^j$ of degree $l$ and the Hamiltonian $H$ defined 
in Eq. \eqref{eq:exm_nin},
let us consider another $m$-qubit
 reference state as follows
\begin{equation}\label{eqn:Rl(H)-lindep}
\ket{R^l_1(H)}  = \frac{1}{\CNN}\sum_{
\substack{\mathbf{y}\in \BFF_2^m  
}} 
\left(
\sum_{s=0}^l a_s \cdot s! 
\left(
I_1^s(\mathbf{y})+I_2^s(\mathbf{y})
\right)
\right)
\ket{\mathbf{y}},
\end{equation}
where $\CNN$ is a normalization factor, and $I_1^s(\mathbf{y})$ and $I_2^s(\mathbf{y})$ are defined in Eqs.~\eqref{eqn:I1} and \eqref{eqn:I2}.
Then, this reference state is also a $2$-ary matrix product state with open boundary conditions and bond dimension $D = 2(l + 1)$.
More specifically,
\begin{equation}\label{eqn:RlH-nearly-indep}
\ket{R^l_1(H)} 
= 
\sum_{\substack{\mathbf{y} \in \mathbb{F}_2^{m+1}\\y_{m+1}=0}} {\mathbf{v}^{(1)}_L}^{\top} A^{(1)}(y_1) \cdots A^{(m)}(y_m) A^{(m+1)}(y_{m+1}) \mathbf{v}^{(1)}_R\ket{\mathbf{y}},   
\end{equation}
with
\begin{align}
&\mathbf{v}^{(1)}_L=\bigl(\;
\underbrace{\frac{1}{\CNN},0,\ldots,0}_{l+1}\ ,\ 
\underbrace{\frac{1}{\CNN},0,\ldots,0}_{l+1}
\;\bigr)^{\top}, \\
&\mathbf{v}^{(1)}_R=(a_0\cdot0!,\,a_1\cdot 1!,\, \cdots,\,a_l\cdot l!,\,a_0\cdot0!,\,a_1\cdot 1!,\, \cdots,\,a_l\cdot l!)^{\top},\\
&A^{(k)}(y_k) = 
\begin{pmatrix}
A_{y_k}^{(k)} & \mathbf{0}\\ \mathbf{0} &A_{1-y_k}^{(k)}
\end{pmatrix},
\end{align}
where  $A_0^{(k)}$ and $A_1^{(k)}$ are defined in Eq.~\eqref{eqn:A0k&A1k}.
\end{lem}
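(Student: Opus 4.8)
The plan is to recognize $\ket{R^l_1(H)}$ as a direct sum of two copies of the commuting-case MPS from Lemma~\ref{251118lem1}, applied to the \emph{augmented} coefficient list $(c_1,\ldots,c_m,c_{m+1})$ on $m+1$ sites, with the last site clamped to $y_{m+1}=0$. The conceptual point is that the extra term $c_{m+1}P_1\cdots P_m$ behaves like an $(m+1)$-th commuting variable $z_{m+1}:=z_1\cdots z_m$ whose every appearance flips the parity of all of $z_1,\ldots,z_m$ at once. Hence, in the expansion of $x^l$ from Lemma~\ref{251204lem1}, an even power $j$ of this factor leaves the output parity equal to $\mathbf{y}$ (giving $I_1^s$), while an odd power flips it to $\mathbf{y}+\mathbf{1}$ (giving $I_2^s$). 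The block-diagonal structure of $A^{(k)}(y_k)$ is precisely the device that runs these two parity sectors in parallel.

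First I would reduce the target coefficient to the $(m+1)$-variable analogue of the quantity computed in Lemma~\ref{251118lem1}. Extending the notation $C^s_{\mathbf{y}}$ of Lemma~\ref{251204lem1} to an $(m+1)$-variable version $C^s_{\mathbf{p}}:=s!\sum_{|\tilde{\bmu}|=s,\ \tilde{\bmu}\equiv\mathbf{p}}\tilde{\mathbf{c}}^{\tilde{\bmu}}/\tilde{\bmu}!$ with $\tilde{\mathbf{c}}=(c_1,\ldots,c_m,c_{m+1})$ and $\mathbf{p}\in\BFF_2^{m+1}$, a direct rearrangement (splitting off $\mu_{m+1}=j$) gives $s!\,I_1^s(\mathbf{y})=C^s_{(\mathbf{y},0)}$ and $s!\,I_2^s(\mathbf{y})=C^s_{(\mathbf{y}+\mathbf{1},1)}$. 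Thus the coefficient of $\ket{\mathbf{y}}$ in Eq.~\eqref{eqn:Rl(H)-lindep} equals $\frac{1}{\CNN}\sum_s a_s\bigl(C^s_{(\mathbf{y},0)}+C^s_{(\mathbf{y}+\mathbf{1},1)}\bigr)$.

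Next I would evaluate the proposed MPS directly. Since each $A^{(k)}(y_k)=\mathrm{diag}\bigl(A^{(k)}_{y_k},A^{(k)}_{1-y_k}\bigr)$ and $\mathbf{v}^{(1)}_L,\mathbf{v}^{(1)}_R$ are the doublings of the Lemma~\ref{251118lem1} boundary vectors, the scalar ${\mathbf{v}^{(1)}_L}^{\top}A^{(1)}(y_1)\cdots A^{(m)}(y_m)A^{(m+1)}(0)\,\mathbf{v}^{(1)}_R$ splits into a top-block and a bottom-block term, each carrying the factor $1/\CNN$. With $y_{m+1}=0$ fixed, the top block is the Lemma~\ref{251118lem1} product for the parity string $(y_1,\ldots,y_m,0)$, while the bottom block is the same product for the flipped string $(1-y_1,\ldots,1-y_m,1)=(\mathbf{y}+\mathbf{1},1)$. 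Reusing the matrix computation in the proof of Lemma~\ref{251118lem1} verbatim (now over $m+1$ sites), the top block yields $\frac{1}{\CNN}\sum_s a_s C^s_{(\mathbf{y},0)}$ and the bottom block yields $\frac{1}{\CNN}\sum_s a_s C^s_{(\mathbf{y}+\mathbf{1},1)}$. Their sum matches the coefficient of the previous step, establishing the claimed form; the bond dimension is $2(l+1)$ since each of the two $(l+1)\times(l+1)$ blocks contributes independently, and the clamped site $y_{m+1}=0$ merely appends a trivial $\ket{0}$ factor, keeping the physical state on $m$ qubits.

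The step I expect to be the main obstacle is the parity bookkeeping in the bottom block: one must check that the matrices $A^{(k)}_{1-y_k}$ together with $A^{(m+1)}_1$ really select $\mu_k\equiv y_k+1\ (\mathrm{mod}\ 2)$ for $k\le m$ and odd $\mu_{m+1}$, and hence reproduce $I_2^s$ rather than a mismatched parity sector. This amounts to confirming that each odd occurrence of $z_1\cdots z_m$ flips all $m$ output bits simultaneously and that clamping $y_{m+1}=0$ correctly aligns this flipped sector with the physical label $\mathbf{y}$. Once this alignment is pinned down, the remainder is a routine reuse of the already-established computation in Lemma~\ref{251118lem1}.
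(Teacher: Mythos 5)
Your proposal is correct and follows essentially the same route as the paper: split the coefficient into the $I_1^s$ and $I_2^s$ parity sectors, identify each with the $(m+1)$-site instance of the MPS from Lemma~\ref{251118lem1} evaluated at the parity strings $(\mathbf{y},0)$ and $(\mathbf{y}+\mathbf{1},1)$ respectively, and observe that the block-diagonal $A^{(k)}(y_k)$ runs the two sectors in parallel. The parity check you flag as the main obstacle is exactly the identification the paper makes (via $A^{(k)}_{1-y_k}$ selecting $\mu_k\equiv y_k+1\pmod 2$), and your reduction $s!\,I_1^s(\mathbf{y})=C^s_{(\mathbf{y},0)}$, $s!\,I_2^s(\mathbf{y})=C^s_{(\mathbf{y}+\mathbf{1},1)}$ is a valid rephrasing of it.
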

\begin{proof}
By Lemma \ref{251118lem1}, 
\begin{align*}
\ket{R^l_1(H)} = 
& = \frac{1}{\CNN}\sum_{
\substack{\mathbf{y}\in \BFF_2^m  
}} 
\left(
\sum_{s=0}^l a_s \cdot s! 
\left(
I_1^s(\mathbf{y})+I_2^s(\mathbf{y})
\right)
\right)
\ket{\mathbf{y}}\\
& = \frac{1}{\CNN}\sum_{
\substack{\mathbf{y}\in \BFF_2^m  
}} 
\left(
\sum_{s=0}^l a_s \cdot s! \;
I_1^s(\mathbf{y})
\right)
\ket{\mathbf{y}}
+ \frac{1}{\CNN}\sum_{
\substack{\mathbf{y}\in \BFF_2^m  
}} 
\left(
\sum_{s=0}^l a_s \cdot s! \;
I_2^s(\mathbf{y})
\right)
\ket{\mathbf{y}}\\
& = \frac{1}{\CNN}\sum_{
\substack{\mathbf{y}\in \BFF_2^{m+1} \\ y_{m+1}=0 
}} 
\left(
\sum_{s=0}^l a_s \cdot s! 
\sum_{
\substack{
\boldsymbol{\mu} \in \Z_{\ge 0}^{m+1}\\
|\boldsymbol{\mu}|=s\\ \boldsymbol{\mu} \equiv \mathbf{y} (\bmod 2) 
}} 
\frac{\mathbf{c}^{\bmu}}{\bmu!} 
\frac{c_{m+1}^{\mu_{m+1}}}{\mu_{m+1}!}\right)
\ket{\mathbf{y}}  \\
&+
\frac{1}{\CNN}\sum_{
\substack{\mathbf{y}\in \BFF_2^{m+1} \\ y_{m+1}=0 
}} 
\left(
\sum_{s=0}^l a_s \cdot s! 
\sum_{
\substack{
\boldsymbol{\mu} \in \Z_{\ge 0}^{m+1}\\
|\boldsymbol{\mu}|=s\\ \boldsymbol{\mu} \equiv \mathbf{y} + \mathbf{1} (\bmod 2) 
}} 
\frac{\mathbf{c}^{\bmu}}{\bmu!} 
\frac{c_{m+1}^{\mu_{m+1}}}{\mu_{m+1}!}\right)
\ket{\mathbf{y}}\\
& = \sum_{\substack{\mathbf{y} \in \mathbb{F}_2^{m+1}\\ y_{m+1=0} }}{\mathbf{v}}_L^{\top} A^{(1)}_{y_1}A^{(2)}_{y_2} \cdots A^{(m+1)}_{y_{m+1}} \hat{\mathbf{v}}_R
\ket{\mathbf{y}} 
+
\sum_{\substack{\mathbf{y} \in \mathbb{F}_2^{m+1}\\ y_{m+1=0} }}{\mathbf{v}}_L^{\top} A^{(1)}_{1-y_1} \cdots A^{(m+1)}_{1-y_{m+1}} \hat{\mathbf{v}}_R\ket{\mathbf{y}}\\
& = 
\sum_{\substack{\mathbf{y} \in \mathbb{F}_2^{m+1}\\y_{m+1}=0}} {\mathbf{v}^{(1)}_L}^{\top} A^{(1)}(y_1) \cdots A^{(m)}(y_m) A^{(m+1)}(y_{m+1}) \mathbf{v}^{(1)}_R\ket{\mathbf{y}},
\end{align*}
where the vectors ${\mathbf{v}}_L$ and ${\mathbf{v}}_R$ are defined in Eq.~\eqref{251124eq2}.
Thus, we obtain the result in \eqref{eqn:RlH-nearly-indep}.
\end{proof}

Hence, following the same approach as in the previous results, we can show that the reference state $\ket{R_{1}^l(H)}$
can be prepared efficiently in a similar way as Lemma \ref{251124lem2}. Thus, similar to Theorems~\ref{251206thm1} and \ref{251208thm1},  there exists
 an efficient quantum algorithm for preparing the state
$\rho_{\CPP}(H) $, as stated in the following corollary. 

\begin{cor}
Given the Hamiltonian 
$H = \left( \sum_{i=1}^m c_i P_i \right)+c_{m+1}P_1P_2\cdots P_m$,
where $c_i\in \R$ and $\{P_i\}_{i=1}^m$ are independent and  commuting $n$-qubit Pauli operators.
Let $\CDD_{H'}^{ (m)}$ be a weight-$m$ decoding oracle for the Hamiltonian $H' = \sum_{i=1}^m c_i P_i$. 
Let  $\CPP(x)$ be a univariate polynomial of degree $l$ (may be larger than $m$). 
Then there exists a quantum algorithm that prepares the state
\begin{align}
\rho_{\CPP}(H) = \frac{{\CPP}^2(H)}{\Tr{{\CPP}^2(H)}},
\end{align}
using a single call to $\CDD_{H'}^{ (m)}$.
Moreover, the running time of this quantum algorithm is 
 $\mathrm{poly}(n,l)$.
\end{cor}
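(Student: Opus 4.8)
The plan is to mirror the proof of Theorem~\ref{251206thm1} step by step, substituting the new reference state $\ket{R_1^l(H)}$ for $\ket{R^l(H)}$ and the oracle $\CDD_{H'}^{(m)}$ for $\CDD_H^{(l)}$. First I would invoke Lemma~\ref{251204lem1} (applied to each monomial $x^s$ and summed against the coefficients $a_s$) to establish the operator identity
\begin{equation*}
\CPP(H) = \sum_{\mathbf{y}\in\BFF_2^m}\left(\sum_{s=0}^l a_s\cdot s!\,\bigl(I_1^s(\mathbf{y})+I_2^s(\mathbf{y})\bigr)\right) P_1^{y_1}\cdots P_m^{y_m},
\end{equation*}
which is the Hamiltonian analogue of the scalar expansion, valid because the $P_i$ commute and satisfy $P_i^2=I$. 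This is exactly the coefficient structure encoded in $\ket{R_1^l(H)}$.

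Next I would run the interferometric construction verbatim. Prepare $\ket{R_1^l(H)}$ on register $A$ (efficiently, since Lemma~\ref{lem:251206-1} shows it is a $2$-ary MPS of bond dimension $2(l+1)$, and the preparation argument of Lemma~\ref{251124lem2} carries over with $D=2(l+1)$ in place of $l+1$, giving runtime $\mathrm{poly}(n,l)$); prepare the maximally entangled state $\ket{\Phi_n}_{BC}$; apply the controlled-Pauli cascade $\prod_{i=1}^m (C^{(i)}P_i)_{A\to B}$ so that each basis string $\ket{\mathbf{y}}_A$ becomes correlated with $(P_{\mathbf{y}}\otimes I)\ket{\Phi_n}$; perform the coherent Bell measurement to replace $(P_{\mathbf{y}}\otimes I)\ket{\Phi_n}$ with $\ket{\mathrm{symp}(P_{\mathbf{y}})}$; apply the decoding oracle to reset register $A$ and discard it; undo the Bell measurement; and finally trace out register $C$ to obtain $\rho_{\CPP}(H)=\CPP^2(H)/\Tr{\CPP^2(H)}$ by the same eigenbasis computation as in Theorem~\ref{251206thm1}.

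The one point requiring genuine care—and the main obstacle—is justifying that the weight-$m$ oracle $\CDD_{H'}^{(m)}$ for the sub-Hamiltonian $H'=\sum_{i=1}^m c_i P_i$ correctly decodes the syndromes arising from $H$. Here the summation index $\mathbf{y}$ ranges over $\BFF_2^m$, so $\ket{\mathrm{symp}(P_{\mathbf{y}})}$ with $P_{\mathbf{y}}=\prod_{i\in\mathrm{supp}(\mathbf{y})}P_i$ involves only the $m$ independent generators $\{P_i\}_{i=1}^m$, never the redundant term $P_1\cdots P_m$. Since $\mathrm{symp}(P_1),\dots,\mathrm{symp}(P_m)$ are linearly independent, the syndrome $B'^{\top}\mathbf{y}$ uniquely determines $\mathbf{y}$ (which has Hamming weight at most $m$), exactly as in Theorem~\ref{251208thm1}. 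Thus $\CDD_{H'}^{(m)}$ is the appropriate oracle even though $\CPP(H)$ may have degree $l>m$: the extra polynomial degree is absorbed entirely into the coefficients $I_1^s(\mathbf{y})+I_2^s(\mathbf{y})$ encoded in the reference state, not into any new Pauli strings needing decoding. With this observation the remaining steps are identical to those already established, and the total runtime is $\mathrm{poly}(n,l)$.
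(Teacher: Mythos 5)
Your proposal is correct and follows essentially the same route as the paper, which itself only sketches this corollary by pointing to Lemma~\ref{251204lem1}, Lemma~\ref{lem:251206-1}, and the constructions of Theorems~\ref{251206thm1} and~\ref{251208thm1}. In fact your write-up is more explicit than the paper's, and you correctly isolate the one nontrivial point: the controlled-Pauli cascade only ever produces syndromes of the independent generators $P_1,\dots,P_m$, so the weight-$m$ oracle for $H'$ suffices even when $\deg\CPP = l > m$, with the redundant term $P_1\cdots P_m$ absorbed into the coefficients of the reference state.
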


Now, let us consider the general case of nearly independent commuting Hamiltonians.
We assume that the symplectic code of $H$ is of dimension $k$, i.e., the number of logical bits $k=\dim \mathrm{Symp}(H)$.
\begin{lem}[ Theorem 15 in \cite{schmidhuber2025hamiltonian}]\label{251209lem1}
Let $z_i$ be formal commuting variables satisfying $z_i^2 = 1$, $i=1,...,m$. 
Suppose that there are exactly $k$ independent relations among the variables $\{z_i\}$, i.e.\ $k$ independent subsets of the $\{z_i\}$ whose product equals the identity. 
Assume that the first $d \le m$ variables $\{z_1,\cdots, z_d\}$ form a maximally independent set. 
Then there exists a partition of these $d$ variables into $r$ blocks $V_1,\ldots,V_r$ with $r \le 2^k$, 
such that for every $j>d$,
$z_j$ can be represented as 
\[z_j = \prod_{i\in U_j} z_i,\]
where each $U_j$ is a union of full blocks $V_t$,
i.e.
$$
U_j=\bigcup_{t\in T_j} V_t
\quad\text{for some } T_j\subseteq\{1,\dots,r\}.
$$
\end{lem}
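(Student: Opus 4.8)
The plan is to translate the multiplicative structure of the commuting involutions $\{z_i\}$ into linear algebra over $\BFF_2$. Since each $z_i$ squares to the identity and the $z_i$ commute, the (multiplicative) group they generate is a quotient of $\BFF_2^m$: identifying $z_i$ with the $i$-th standard basis vector, a product $\prod_i z_i^{r_i}$ equals the identity exactly when $\mathbf{r}=(r_1,\dots,r_m)$ lies in a subspace $R\subseteq\BFF_2^m$. The hypothesis that there are exactly $k$ independent relations among the $\{z_i\}$ says precisely that $\dim R = k$, so the generated group is $\BFF_2^m/R$, of dimension $m-k$.

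Next I would use the maximality hypothesis to pin down the combinatorics. Because $\{z_1,\dots,z_d\}$ is a maximal independent set, it descends to a basis of $\BFF_2^m/R$; hence $d=m-k$, and the remaining variables $z_{d+1},\dots,z_m$ number exactly $m-d=k$. Each such $z_j$ lies in the span of the basis, so it is uniquely expressible as $z_j=\prod_{i\in S_j} z_i$ for some subset $S_j\subseteq\{1,\dots,d\}$. This reduces the problem to the purely combinatorial task of partitioning $\{1,\dots,d\}$ so that each of the $k$ subsets $S_{d+1},\dots,S_m$ becomes a union of blocks.

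For this I would assign to each index $i\in\{1,\dots,d\}$ its membership pattern $\phi(i)\in\BFF_2^k$, whose coordinate indexed by $j\in\{d+1,\dots,m\}$ records whether $i\in S_j$. Declaring $i$ and $i'$ equivalent when $\phi(i)=\phi(i')$ partitions $\{1,\dots,d\}$ into the nonempty fibers $V_1,\dots,V_r$ of $\phi$. Since $\phi$ takes values in $\BFF_2^k$, there are at most $2^k$ distinct patterns, so $r\le 2^k$. By construction, membership in any fixed $S_j$ is constant across each fiber, so each $S_j$ is a union of full blocks; setting $U_j=S_j$ and $T_j=\{t: V_t\subseteq S_j\}$ then yields the asserted representation $z_j=\prod_{i\in U_j} z_i$ with $U_j=\bigcup_{t\in T_j} V_t$.

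The only genuinely delicate point is the dimension count that identifies $d$ with $m-k$ and guarantees exactly $k$ dependent variables; this is where the two hypotheses must be combined, using that a maximal independent set in $\BFF_2^m/R$ has size $\dim(\BFF_2^m/R)=m-k$. Once this count is secured and the representations $z_j=\prod_{i\in S_j}z_i$ are in hand, the fiber construction is routine and the bound $r\le 2^k$ is immediate from the codomain of $\phi$.
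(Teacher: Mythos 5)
Your proof is correct: passing to $\BFF_2^m/R$ to get $d=m-k$ and unique expressions $z_j=\prod_{i\in S_j}z_i$, and then partitioning $\{1,\dots,d\}$ into the fibers of the membership-pattern map $\phi:\{1,\dots,d\}\to\BFF_2^k$ (the atoms of the Boolean algebra generated by the $S_j$), immediately gives $r\le 2^k$ and each $S_j$ as a union of full blocks. The paper itself supplies no proof — it imports the statement as Theorem 15 of the cited reference — and your argument is the standard one underlying that result, so there is nothing to flag.
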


The above lemma allows us to regroup the Pauli operators in $H = \sum_{i=1}^m c_i P_i$ as follows.
Assume that the first $m-k$ Pauli operators form a maximally independent set,
i.e., their symplectic representations span a subspace of dimension $m-k$,
where $k$ is the dimension of the symplectic code of $H$.
By Lemma \ref{251209lem1},
there exists a partition $\{1,\dots,m-k\}=V_1\sqcup\cdots\sqcup V_r$ with $r\le 2^k$, 
such that for every $1\le j\le k$,
there is a union $U_j$ of some of the subsets $V_1,...,V_r$ such that
\begin{align}\label{251209eq3}
P_{m-k+j} =  \pm \prod_{i\in U_j} P_i.
\end{align}
Without loss of generality, 
we may assume all signs in \eqref{251209eq3} are positive,
as if for some $j$, we have Pauli $P_{m-k+j} =  - \prod_{i} P_i$, then we can replace $P_{m-k+j}$ by $-P_{m-k+j}$ and replace $c_j$ by $-c_j$,
which does not change the Hamiltonian.
Hence, 
the Hamiltonian $H$ can be written as
\begin{align}\label{251211eq1}
H = \sum_{i=1}^{m-k} c_i P_i +\sum_{j=1}^k c_{m-k+j} \prod_{i\in U_j} P_i,
\end{align}
where $P_1,...,P_{m-k}$ are distinct Pauli operators such that they commute with each other and their symplectic representations are linearly independent.

\begin{lem}[Symmetric $l$-th Power Expansion]\label{251209lem2}
Let $z_1,...,z_{m-k}$ be formal commuting variables such that $z^2_j = 1$ for all $j$,
and let
$$x = \sum_{i=1}^{m-k} c_i z_i +\sum_{j=1}^k c_{m-k+j} \prod_{i\in U_j} z_i,\quad
c_j\in\R,\; j=1,\cdots, m.$$ 
Then we have
\begin{align}\label{251209eq6}
x^l =& \sum_{
\substack{\mathbf{y}\in \BFF_2^{m-k}  
}}  
 \left( l!
\sum_{K\subseteq \{1,...,k\}} I_{K}^l(\mathbf{y}) 
\right)
 z_1^{y_1}\cdots z_m^{y_{m-k}} ,
\end{align}
where 
\begin{equation}\label{251209eq4}
I^l_K(\mathbf{y}) 
= 
\sum_{
\substack{
\mathbf{s} \in \Z_{\ge 0}^{k}, |\mathbf{s}|\leq l,\\
s_j\text{ is odd if } t\in K,\\
s_j\text{ is even if } t\not\in K.
}} 
\sum_{
\substack{
\boldsymbol{\mu} \in \Z_{\ge 0}^{m-k},\\
|\boldsymbol{\mu}|=l-|\mathbf{s}|,\\
\boldsymbol{\mu} \equiv \mathbf{y}+ \chi_{K}(\bmod 2) 
}} 
\frac{c_1^{\mu_1}}{\mu_1!} \cdots \frac{c_{m-k}^{\mu_{m-k}}}{\mu_{m-k}!}
\frac{c_{m-k+1}^{s_1}}{s_1!} \cdots \frac{c_{m}^{s_k}}{s_k!} ,
\end{equation} 
and
\begin{align}\label{251211eq4}
\chi_{K} \equiv \sum_{j\in K} \;\mathbf{1}_{  U_j } (\bmod 2).
\end{align}
\end{lem}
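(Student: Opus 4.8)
The plan is to expand $x^l$ directly by the multinomial theorem and then reduce all powers of the $z_i$ modulo the relations $z_i^2=1$, organizing the resulting monomials according to the parity pattern of the exponents carried by the $k$ ``product'' generators $\prod_{i\in U_j}z_i$. This is the natural generalization of the argument behind Lemma~\ref{251204lem1}, where a single product term $z_1\cdots z_m$ contributed either $\mathbf{0}$ or $\mathbf{1}$ to the reduced exponent vector according to the parity of its power; here each of the $k$ product generators can independently be even or odd, so the relevant bookkeeping is indexed by a subset $K\subseteq\{1,\dots,k\}$.

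First I would regard $x$ as a sum of $m$ terms: the $m-k$ linear generators $c_iz_i$ for $1\le i\le m-k$, together with the $k$ product generators $c_{m-k+j}\prod_{i\in U_j}z_i$ for $1\le j\le k$. Applying the multinomial theorem, each term of $x^l$ is indexed by a pair $(\boldsymbol{\mu},\mathbf{s})$ with $\boldsymbol{\mu}\in\Z_{\ge0}^{m-k}$, $\mathbf{s}\in\Z_{\ge0}^{k}$ and $|\boldsymbol{\mu}|+|\mathbf{s}|=l$, recording how often each generator is used. The associated multinomial coefficient is $l!/(\boldsymbol{\mu}!\,\mathbf{s}!)$, the scalar factor is $\mathbf{c}^{\boldsymbol{\mu}}\prod_j c_{m-k+j}^{\,s_j}$, and the monomial in the $z$'s is
\[
\prod_{i=1}^{m-k} z_i^{\mu_i}\cdot\prod_{j=1}^{k}\Bigl(\prod_{i\in U_j} z_i\Bigr)^{s_j}
=\prod_{i=1}^{m-k} z_i^{\,\mu_i+\sum_{j:\,i\in U_j}s_j}.
\]

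Next I would reduce each exponent modulo $2$ using $z_i^2=1$. The key observation is that the contribution of the product generators to the exponent of $z_i$, namely $\sum_{j:\,i\in U_j}s_j$, depends mod $2$ only on the set $K=\{\,j:s_j\text{ is odd}\,\}$: indeed $\sum_{j:\,i\in U_j}s_j\equiv\abs{\{j\in K:i\in U_j\}}\pmod2$, which is precisely the $i$-th coordinate of $\chi_K=\sum_{j\in K}\mathbf{1}_{U_j}\pmod2$ as in \eqref{251211eq4}. Hence the reduced monomial is $z_1^{y_1}\cdots z_{m-k}^{y_{m-k}}$ with $\mathbf{y}\equiv\boldsymbol{\mu}+\chi_K\pmod2$, equivalently $\boldsymbol{\mu}\equiv\mathbf{y}+\chi_K\pmod2$, matching the congruence constraint in \eqref{251209eq4}.

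Finally I would reorganize the multinomial sum by fixing the target vector $\mathbf{y}\in\BFF_2^{m-k}$, then summing over the pattern $K\subseteq\{1,\dots,k\}$, then over all $\mathbf{s}$ realizing that pattern with $|\mathbf{s}|\le l$ and all $\boldsymbol{\mu}$ with $|\boldsymbol{\mu}|=l-|\mathbf{s}|$ and $\boldsymbol{\mu}\equiv\mathbf{y}+\chi_K$; pulling the common $l!$ outside and splitting $1/(\boldsymbol{\mu}!\,\mathbf{s}!)$ reproduces exactly $I_K^l(\mathbf{y})$, giving \eqref{251209eq6}. The computation is essentially bookkeeping, and the only place demanding genuine care is this mod-$2$ grouping: one must check that each pair $(\boldsymbol{\mu},\mathbf{s})$ occurring in the expansion lands in exactly one cell of the reindexing — which holds because $\mathbf{s}$ determines $K$, and then $\boldsymbol{\mu}$ together with $K$ determines $\mathbf{y}$ — so that the reindexing by $(\mathbf{y},K)$ is a genuine partition of the multinomial index set. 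As a consistency check, specializing to $k=1$ with $U_1=\{1,\dots,m\}$ gives $\chi_\emptyset=\mathbf{0}$ and $\chi_{\{1\}}=\mathbf{1}$, and the two patterns $K=\emptyset,\{1\}$ recover the terms $I_1^l$ and $I_2^l$ of Lemma~\ref{251204lem1}.
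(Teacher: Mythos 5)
Your proof is correct and follows essentially the same route as the paper's: both expand $x^l$ multinomially, reduce exponents via $z_i^2=1$, and group terms by the parity pattern $K=\{j:s_j\ \text{odd}\}$ using the reindexing $\boldsymbol{\mu}\equiv\mathbf{y}+\chi_K\ (\bmod\ 2)$. The only difference is organizational — you apply one multinomial expansion over all $m$ generators at once, whereas the paper first splits off the product part binomially and reuses the expansion \eqref{eqn:p(H)} for the linear part — and this does not change the substance of the argument.
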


\begin{proof}
We have
\begin{align*}
x^l
=& \left( \sum_{i=1}^{m-k} c_i z_i + \sum_{j=1}^k  c_{m-k+j} \prod_{i\in U_j} z_i   \right)^l\\
=&  \sum_{u=0}^l \binom{l}{u}
\left(\sum_{i=1}^{m-k} c_iz_i\right)^{l-u}
\left( \sum_{j=1}^k  c_{m-k+j} \prod_{i\in U_j} z_i \right)^u.
\end{align*}
Denote
\begin{equation*}
C_{\mathbf{y}}^l=
l!
\sum_{
\substack{
\boldsymbol{\mu} \in \Z_{\ge 0}^{m-k}\\
|\boldsymbol{\mu}|=l\\
\boldsymbol{\mu} \equiv \mathbf{y} (\bmod 2) 
}} 
\frac{c_1^{\mu_1}}{\mu_1!} \cdots \frac{c_{m-k}^{\mu_{m-k}}}{\mu_{(m-k)!}}.  
\end{equation*} 
Note that each $z_j$ satisfies $z_j^2 =1$, it follows from \eqref{eqn:p(H)} that 
\begin{align*}
x^l
& = \sum_{u=0}^l\binom{l}{u} 
\left(\sum_{\mathbf{y}\in \BFF_2^{m-k}}
C_{\mathbf{y}}^{l-u}\,
z_1^{y_1} \cdots z_{m-k}^{y_{m-k}}\right) 
\left( \sum_{j=1}^k  c_{m-k+j} \prod_{i\in U_j} z_i \right)^u \\
& = \sum_{u=0}^l
\binom{l}{u} 
\left(\sum_{\mathbf{y}\in \BFF_2^{m-k}}
C_{\mathbf{y}}^{l-u}\,
z_1^{y_1} \cdots z_{m-k}^{y_{m-k}}\right) 
\left(\sum_{s_1+\cdots + s_k=u} 
\binom{u}{\mathbf{s}} 
\left(c_{m-k+1} \prod_{i\in U_1} z_i\right)^{s_1} \cdots 
\left(c_{m} \prod_{i\in U_k} z_i\right)^{s_k}\right)\\
& = \sum_{\mathbf{y}\in \BFF_2^{m-k}} 
\sum_{
\substack{\mathbf{s} \in \Z_{\ge 0}^{k}\\
|\mathbf{s}|\le l}}
\binom{l}{|\mathbf{s}|}
C_{\mathbf{y}}^{l-|\mathbf{s}|}
\binom{|\mathbf{s}|}{\mathbf{s}}
c_{m-k+1}^{s_1}\cdots
c_{m}^{s_k}
\left( \prod_{i\in U_1} z_i\right)^{s_1} \cdots 
\left( \prod_{i\in U_k} z_i\right)^{s_k}
z_1^{y_1} \cdots z_{m-k}^{y_{m-k}}\\
& = \sum_{\mathbf{y}\in \BFF_2^{m-k}} 
\sum_{
\substack{\mathbf{s} \in \Z_{\ge 0}^{k}\\
|\mathbf{s}|\le l}}
l!
\sum_{
\substack{
\boldsymbol{\mu} \in \Z_{\ge 0}^{m-k}\\
|\boldsymbol{\mu}|=l- |\mathbf{s}|\\
\boldsymbol{\mu} \equiv \mathbf{y} (\bmod 2) 
}}
\frac{c_1^{\mu_1}}{\mu_1!} \cdots \frac{c_{m-k}^{\mu_{m-k}}}{\mu_{m-k}!}
\frac{c_{m-k+1}^{s_1}}{s_1!} \cdots \frac{c_{m}^{s_k}}{s_k!}
\mathbf{z}^{\mathbf{y} + s_1 \mathbf{1}_{  U_1 }+\cdots + s_k \mathbf{1}_{  U_k } }.
\end{align*}
Note that,  for any vector $\mathbf x$,  $\mathbf{z}^{\mathbf x}:=\Pi^{m-k}_{i=1}z^{x_i}_i$ and here $\mathbf x=\mathbf{y} + s_1 \mathbf{1}_{  U_1 }+\cdots + s_k \mathbf{1}_{  U_k }$.
Then we classify the terms by the parity of all $s_t$,
and denote $K=\{t: s_t $ is odd $\}$. So we have
\begin{align*}
x^l & = 
\sum_{\mathbf{y}\in \BFF_2^{m-k}} 
l!
\sum_{K\subseteq \{1,...,k\}} 
\sum_{
\substack{
\mathbf{s} \in \Z_{\ge 0}^{k}, |\mathbf{s}|\leq l,\\
s_j\text{ is odd if } t\in K,\\
s_j\text{ is even if } t\not\in K.
}}
\sum_{
\substack{
\boldsymbol{\mu} \in \Z_{\ge 0}^{m-k}\\
|\boldsymbol{\mu}|=l- |\mathbf{s}|\\
\boldsymbol{\mu} \equiv \mathbf{y} (\bmod 2) 
}}
\frac{c_1^{\mu_1}}{\mu_1!} \cdots \frac{c_{m-k}^{\mu_{m-k}}}{\mu_{m-k}!}
\frac{c_{m-k+1}^{s_1}}{s_1!} \cdots \frac{c_{m}^{s_k}}{s_k!}
\mathbf{z}^{\mathbf{y}+\chi_K}.
\end{align*}
Replacing $\mathbf{y}$ by $\mathbf{y}+\chi_K$, we obtain \eqref{251209eq6}.
\end{proof}

\begin{lem}
\label{251211lem1}
Given a Hamiltonian $H$ as defined  in \eqref{251211eq1},
and a degree-$l$ polynomial $\CPP(x) = \sum^l_{j=0}a_jx^j$, 
let us consider a new reference state as follows
\begin{equation}\label{251211eq2}
\ket{R^l_k(H)}  = \frac{1}{\CNN}\sum_{
\substack{\mathbf{y}\in \BFF_2^{m-k}  
}} 
\left(
\sum_{t=0}^l a_t \cdot t! 
\sum_{K\subseteq \{1,...,k\}} I_{K}^t(\mathbf{y}) 
\right)
\ket{\mathbf{y}},
\end{equation}
where $\CNN$ is a normalization factor, and
$I_K^t(\mathbf{y})$ are defined in \eqref{251209eq4}.
This state is also a $2$-ary matrix product state with open boundary conditions and bond dimension $D = 2^k(l + 1)$. 
More specifically,
\begin{equation}\label{251211eq3}
\ket{R^l_k(H)} 
= 
\sum_{\substack{\mathbf{y} \in \mathbb{F}_2^{m-k} }} 
{\mathbf{v}^{(k)}_L}^{\top} \CAA^{(1)}(y_1) \cdots \CAA^{(m-k)}(y_{m-k}) \CBB^{(*)}  \mathbf{v}^{(k)}_R 
\ket{\mathbf{y}}, 
\end{equation}
where the vectors $\mathbf{v}^{(k)}_L$, $\mathbf{v}^{(k)}_R$  are defined by ${\mathbf{v}}_L$ and ${\mathbf{v}}_R$ in Eq.~\eqref{251124eq2} as follows
\begin{align}
&\mathbf{v}^{(k)}_L=\big(\;
\underbrace{\mathbf{v}^{\top}_L,\mathbf{v}^{\top}_L,\ldots, \mathbf{v}^{\top}_L}_{2^k ~~\text{times}}
\;\big)^{\top}=
\big(\;
\underbrace{\underbrace{\frac{1}{\CNN},0,\ldots,0}_{l+1}\ ,\dots ,\ 
\underbrace{\frac{1}{\CNN},0,\ldots,0}_{l+1} }_{2^k (l+1)\text{ length }}
\;\big)^{\top},\\
&\mathbf{v}^{(k)}_R=\big(\;
\underbrace{\mathbf{v}^{\top}_R,\mathbf{v}^{\top}_R,\ldots, \mathbf{v}^{\top}_R}_{2^k~~ \text{times}}
\;\big)^{\top}=
(
\underbrace{
a_0\cdot0!,\, \dots,\,a_l\cdot l!, \dots , \,a_0\cdot0!,,\, \dots,\,a_l\cdot l!
}_{2^k \cdot (l+1) \text{ length }})^{\top},
\end{align}
and for each $1\le i\le m-k$, $\CAA^{(i)}(y_i)$ is a block diagonal matrix whose diagonal blocks $\{\CAA^{(i)}_{\Phi(K,y_i)}\}$ are labeled by subsets $K$ of $\{1,...,k\}$,
\begin{equation}
\CAA^{(i)}(y_i) = \mathrm{diag}\left( \CAA^{(i)}_{\Phi(K,y_i)}: K\subseteq \{1,...,k\} \right) , 
\end{equation}
\begin{align}
\CAA^{(i)}_{\Phi(K,y_i)} = \left\{
\begin{aligned}
A_0^{(i)}, && \text{ if } y_i  -\chi_K(i) \equiv 0\quad (\bmod 2),\\
A_1^{(i)}, && \text{ if } y_i - \chi_K(i) \equiv 1\quad (\bmod 2),
\end{aligned}
\right. 
\end{align}
where $A_0^{(i)}$ and $A_1^{(i)}$ are defined as in \eqref{eqn:A0k&A1k},
and $\chi_K(i)$ is the $i$-th entry of the vector $\chi_K$ defined in \eqref{251211eq4}; finally, 
\begin{align*}
\CBB^{(*)} = \prod_{j=1}^k \CBB^{(m-k+j)},
\end{align*} 
where each $\CBB^{(m-k+j)}$ is a block diagonal matrix with diagonal blocks $\{\CBB^{(m-k+j)}_{K} \}$,
\begin{equation}
\CBB^{(m-k+j)} = \mathrm{diag}\left( \CBB^{(m-k+j)}_{K}: K\subseteq \{1,...,k\} \right),
\end{equation}
\begin{align}
\CBB^{(m-k+j)}_{K} 
= \left\{
\begin{aligned}
A_0^{(i)}, && \text{ if } j\not\in K,\\
A_1^{(i)}, && \text{ if } j\in K.
\end{aligned}
\right. 
\end{align}

\end{lem}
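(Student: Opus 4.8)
The plan is to reduce this statement to a block-by-block application of the MPS construction already established in Lemma~\ref{251118lem1}. The starting point is the coefficient of $\ket{\mathbf{y}}$ in $\ket{R^l_k(H)}$, namely $\frac{1}{\CNN}\sum_{t=0}^l a_t\,t!\sum_{K\subseteq\{1,\dots,k\}} I_K^t(\mathbf{y})$, where by Lemma~\ref{251209lem2} each summand $I_K^t(\mathbf{y})$ factors into a product over the independent coefficients $c_1,\dots,c_{m-k}$ (carrying the parity constraint $\boldsymbol{\mu}\equiv\mathbf{y}+\chi_K\pmod 2$) and a product over the dependent coefficients $c_{m-k+1},\dots,c_m$ (carrying the parity constraint that $s_j$ is odd exactly when $j\in K$). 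The strategy is to match each subset $K$ to one diagonal block of the proposed MPS.

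First I would observe that the block-diagonal structure of every $\CAA^{(i)}(y_i)$ and of $\CBB^{(*)}$, combined with the $2^k$-fold repetition in $\mathbf{v}^{(k)}_L$ and $\mathbf{v}^{(k)}_R$, makes the full matrix product split as a direct sum over the subsets $K$: the contraction equals $\sum_{K} \mathbf{v}_L^{\top}\, \CAA^{(1)}_{\Phi(K,y_1)}\cdots \CAA^{(m-k)}_{\Phi(K,y_{m-k})}\,\CBB^{(m-k+1)}_K\cdots\CBB^{(m)}_K\,\mathbf{v}_R$, with $\mathbf{v}_L,\mathbf{v}_R$ the single-block boundary vectors of Eq.~\eqref{251124eq2}. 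Thus it suffices to identify each $K$-block with $\sum_{t=0}^l a_t\,t!\,I_K^t(\mathbf{y})$.

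For a fixed $K$, I would then invoke the shift-matrix algebra $\{B_k\}$ from the proof of Lemma~\ref{251118lem1}. Since $\CAA^{(i)}_{\Phi(K,y_i)}$ equals $A_0^{(i)}$ or $A_1^{(i)}$ (as in Eq.~\eqref{eqn:A0k&A1k}) according to the parity of $y_i-\chi_K(i)$, the first $m-k$ factors contribute precisely the monomials $\prod_i c_i^{\mu_i}/\mu_i!$ with $\mu_i\equiv y_i+\chi_K(i)\pmod 2$, reproducing the $\boldsymbol{\mu}$-sum of $I_K^t$ with its shifted congruence. The remaining factors $\CBB^{(m-k+j)}_K$ equal $A_1^{(m-k+j)}$ when $j\in K$ (odd powers of $c_{m-k+j}$) and $A_0^{(m-k+j)}$ when $j\notin K$ (even powers), reproducing the $\mathbf{s}$-sum with the correct parity pattern. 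Because each $B_k$ advances the bond index by $k$ and $B_jB_k$ vanishes once $j+k>l$, the accumulated index after all $m$ factors equals the total degree $t=|\boldsymbol{\mu}|+|\mathbf{s}|$, automatically capped at $l$; contracting the left index $0$ (selected by $\mathbf{v}_L$) against the right index $t$ (weighted by $a_t\,t!$ in $\mathbf{v}_R$) yields exactly $\sum_t a_t\,t!\,I_K^t(\mathbf{y})$.

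The main obstacle is the double parity bookkeeping: one must check that the substitution $y_i\mapsto y_i-\chi_K(i)$ inside each independent block simultaneously produces the shifted congruence $\boldsymbol{\mu}\equiv\mathbf{y}+\chi_K$ appearing in $I_K^t$, while the $A_0$-versus-$A_1$ choice in the dependent blocks enforces the orthogonal parity constraint on $\mathbf{s}$, and that these two constraints together with the degree accumulation decouple cleanly across the $2^k$ blocks. Once this is verified for each $K$, summing over all $K\subseteq\{1,\dots,k\}$ recovers the coefficient $\frac{1}{\CNN}\sum_t a_t\,t!\sum_K I_K^t(\mathbf{y})$, establishing both the claimed MPS form \eqref{251211eq3} and the bond dimension $D=2^k(l+1)$.
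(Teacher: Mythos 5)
Your proposal is correct and follows essentially the same route as the paper: both exploit the block-diagonal structure together with the $2^k$-fold repeated boundary vectors to split the contraction into a sum over subsets $K$, and then identify each single-block contraction with $\sum_{t} a_t\,t!\,I_K^t(\mathbf{y})$ via the machinery of Lemma~\ref{251118lem1} (you unpack the $B_k$ shift-matrix algebra explicitly where the paper simply cites the lemma, and your parity bookkeeping $\mu_i\equiv y_i-\chi_K(i)\equiv y_i+\chi_K(i)\pmod 2$ matches the congruence in \eqref{251209eq4}). No gaps.
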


\begin{proof}
The coefficient of $\ket{\mathbf{y}}$ in \eqref{251211eq3} is
\begin{align*}
&{\mathbf{v}^{(k)}_L}^{\top} \CAA^{(1)}(y_1) \cdots \CAA^{(m-k)}(y_{m-k}) \CBB^{(*)}  \mathbf{v}^{(k)}_R \\
=& \sum_{K\subseteq \{1,...,k\}}
\left(\frac{1}{\CNN},0,\ldots,0\right)^\top \cdot \CAA^{(1)}_{\Phi(K,y_1)} \cdots \CAA^{(m-k)}_{\Phi(K,y_{m-k})} \CBB^{(*)} \left(a_0\cdot0!,\, \dots,\,a_l\cdot l! \right).
\end{align*}
By Lemma \ref{251118lem1} and our definitions of $\CAA^{(i)}_{\Phi(K,y_i)}$ and $\CBB^{(*)} $, 
\begin{align*}
&\left(1,0,\ldots,0\right)^\top \cdot \CAA^{(1)}_{\Phi(K,y_1)} \cdots \CAA^{(m-k)}_{\Phi(K,y_{m-k})} \CBB^{(*)} \left(a_0\cdot0!,\, \dots,\,a_l\cdot l! \right) \\
=& 
\sum_{t=0}^l a_t \cdot t! 
\sum_{
\substack{
\boldsymbol{\mu} \in \Z_{\ge 0}^{m-k},\; \mathbf{s}\in \Z_{\ge 0}^{k},\\
|\boldsymbol{\mu}|+ |\mathbf{s}|=t,\\ \boldsymbol{\mu} \equiv \mathbf{y} + \chi_K (\bmod 2) ,\\
\mathbf{s} \equiv \mathbf{1}_K (\bmod 2) 
}} 
\frac{c_1^{\mu_1}}{\mu_1!} \cdots \frac{c_{m-k}^{\mu_{m-k}}}{\mu_{m-k}!}
\frac{c_{m-k+1}^{s_{1}}}{s_{1}!}
\cdots \frac{c_{m}^{s_{k}}}{s_{k}!},
\end{align*}
where $\mathbf{1}_K$ is the vector whose $j$-th component is 1 if and only if $j\in K$.

On the other hand, 
the coefficient of $\ket{\mathbf{y}}$ in \eqref{251211eq2} is 
\begin{align*}
&\frac{1}{\CNN}  
\sum_{t=0}^l a_t \cdot t! 
\sum_{K\subseteq \{1,...,k\}} I_{K}^t(\mathbf{y}) \\
= & 
\sum_{K\subseteq \{1,...,k\}} 
\frac{1}{\CNN}  
\sum_{t=0}^l a_t \cdot t! \sum_{
\substack{
\mathbf{s} \in \Z_{\ge 0}^{k},\\
s_j\text{ is odd if } j\in K,\\
s_j\text{ is even if } j\not\in K.
}} 
\sum_{
\substack{
\boldsymbol{\mu} \in \Z_{\ge 0}^{m-k},\\
|\boldsymbol{\mu}|=t-|\mathbf{s}|,\\
\boldsymbol{\mu} \equiv \mathbf{y}+ \chi_{K}(\bmod 2) 
}} 
\frac{c_1^{\mu_1}}{\mu_1!} \cdots \frac{c_{m-k}^{\mu_{m-k}}}{\mu_{m-k}!}
\frac{c_{m-k+1}^{s_1}}{s_1!} \cdots \frac{c_{m}^{s_k}}{s_k!}\\
=& {\mathbf{v}^{(k)}_L}^{\top} \CAA^{(1)}(y_1) \cdots \CAA^{(m-k)}(y_{m-k}) \CBB^{(*)}  \mathbf{v}^{(k)}_R.
\end{align*}
Hence \eqref{251211eq2} and \eqref{251211eq3} represent a same state.
\end{proof}

\begin{lem}\label{251225lem2}
The reference state $\ket{R^l_k(H)}$ in Eq.~\eqref{251211eq2} can be prepared by using a classical pre-processing step taking time $T_{\text{pre}}=O(k 2^k\cdot m\cdot  l^3 ) $ and a subsequent quantum algorithm which takes time $T_{\text{prepare}} = O(m\cdot \mathrm{poly}(2^k,l) )$.
The classical runtime assumes that elementary arithmetic and memory access are $O(1)$ time operations.
\end{lem}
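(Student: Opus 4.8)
The plan is to mirror the proof of Lemma~\ref{251124lem2}, now running the argument on the block-diagonal MPS representation \eqref{251211eq3} supplied by Lemma~\ref{251211lem1}, which has local dimension $q=2$ and bond dimension $D=2^k(l+1)$. I take the regrouping \eqref{251211eq1} and the partition data $\{V_t\}$, $\{U_j\}$ from Lemma~\ref{251209lem1} as given (they are fixed by a one-time Gaussian elimination on the symplectic vectors, whose $\mathrm{poly}(n,m)$ cost I will note is subsumed). The first task is to assemble the MPS tensors. Exactly as in Lemma~\ref{251124lem2}, for each site $i$ the two building blocks $A_0^{(i)}$ and $A_1^{(i)}$ of \eqref{eqn:A0k&A1k} have $O(l^2)$ entries, each of the form $c_i^j/j!$ computable in $O(l)$ time, so producing all of them costs $O(ml^3)$. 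To promote these to the block-diagonal tensors $\CAA^{(i)}(y_i)$ I need, for every site $i$ and every $K\subseteq\{1,\dots,k\}$, the bit $\chi_K(i)$ from \eqref{251211eq4}; I would precompute the full table $\{\chi_K\}_K$ once by adding the indicator vectors $\mathbf 1_{U_j}$ modulo $2$, at cost $O(k\cdot 2^k\cdot m)$. Given this table, labeling each of the $2^k$ diagonal blocks of $\CAA^{(i)}(y_i)$ (and assembling $\CBB^{(*)}$) is immediate, and the boundary vectors $\mathbf v_L^{(k)},\mathbf v_R^{(k)}$ are the prescribed concatenations of the vectors in \eqref{251124eq2}.

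Next I would compute the normalization $\CNN$ by the same inside-out transfer-matrix contraction as in Lemma~\ref{251124lem2}: writing $\CNN^2=\sum_{\mathbf y}c_{\mathbf y}^2$ and evaluating it by sweeping a left environment $L$ through the chain via $L\mapsto\sum_{y}\CAA^{(i)}(y)^{\top}L\,\CAA^{(i)}(y)$, finally contracting against $\CBB^{(*)}\mathbf v_R^{(k)}$. The point I would exploit is that each bulk tensor $\CAA^{(i)}(y_i)$ is block diagonal in the label $K$, so the environment stays organized into $(l+1)\times(l+1)$ blocks and each block is updated by conjugation with the corresponding $A_0^{(i)}/A_1^{(i)}$; a single sweep step therefore reduces to a collection of size-$(l+1)$ matrix multiplications costing $O(\mathrm{poly}(2^k,l))$ rather than $O(D^3)$. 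Summing the sweep over the $m-k$ bulk sites and the $\CBB$-factors, and combining with the $O(k2^k m)$ cost of the $\chi_K$ table, gives the claimed classical total $T_{\text{pre}}=O(k2^k\cdot m\cdot l^3)$.

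The quantum cost is then immediate: a $q$-ary open-boundary MPS of bond dimension $D$ is prepared sequentially in time $O(m\cdot\mathrm{poly}(D,q))$ \cite{schon2005sequential,melnikov2023quantum}, and with $q=2$, $D=2^k(l+1)$ this is $T_{\text{prepare}}=O(m\cdot\mathrm{poly}(2^k,l))$. The step I expect to be the main obstacle is the precise accounting of the $2^k$ factor in the normalization sweep. A naive contraction that tracks all $(K,K')$ block pairs of the environment costs $O(2^{2k}l^3)$ per site, since the rank-one boundary $\mathbf v_L^{(k)}(\mathbf v_L^{(k)})^{\top}$ seeds every off-diagonal block and the bulk update preserves them; collapsing this to the stated $O(2^k l^3)$ per site requires genuinely arguing that the cross-block ($K\neq K'$) contributions either factor through the low-rank, repeated boundary vectors or can be reorganized so as not to incur the full $2^{2k}$. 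Verifying this reduction — equivalently, controlling the overlaps $\sum_{\mathbf y}c^{(K)}_{\mathbf y}c^{(K')}_{\mathbf y}$ between distinct branches — is the delicate part of the argument, and it is the one place where the analysis departs nontrivially from the fully independent case of Lemma~\ref{251124lem2}.
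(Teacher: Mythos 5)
Your construction follows the paper's proof essentially step for step: assemble the block-diagonal tensors of Lemma~\ref{251211lem1} from the base matrices $A_0^{(i)},A_1^{(i)}$ and a precomputed table $\{\chi_K\}$, evaluate $\CNN$ by an inside-out transfer-matrix sweep, and invoke sequential preparation of a $q$-ary MPS for the quantum part. The one step you leave open --- the per-site cost of the normalization sweep --- is precisely the step the paper also treats too quickly: the paper asserts that each environment update is ``block-diagonal'' and costs $O(2^k l^3)$, but, as you correctly observe, the seed $\mathbf{v}_L^{(k)}\mathbf{v}_L^{(k)\top}$ populates all $2^{2k}$ blocks of the environment and conjugation by the block-diagonal $\CAA^{(i)}(y_i)$ preserves every one of them, so the naive contraction costs $O(2^{2k} l^3)$ per site. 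Your worry is therefore legitimate and cannot be discharged by deferring to the paper; as it stands your write-up does not justify the stated $T_{\text{pre}}$.

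The reduction you ask for can, however, be carried out. The $(K,K')$ block of the bulk environment is $\sum_{\mathbf y}\bigl(\prod_j A^{(j)}_{y_j\oplus\chi_K(j)}\bigr)^{\top}\mathbf e_1\mathbf e_1^{\top}\bigl(\prod_j A^{(j)}_{y_j\oplus\chi_{K'}(j)}\bigr)$, and the substitution $y_j\mapsto y_j\oplus\chi_K(j)$ shows that it depends on $(K,K')$ only through $\chi_K\oplus\chi_{K'}=\chi_{K\triangle K'}$; since $K\mapsto\chi_K$ is $\BFF_2$-linear, there are at most $2^k$ distinct such shifts. Hence only $2^k$ single-block sweeps of cost $O(m l^3)$ each are needed for the bulk, recovering $O(2^k m l^3)$ there; the residual assembly over the $\CBB^{(*)}_K$ factors and the $(K,K')$ pair sum costs $O(2^{2k}\cdot\mathrm{poly}(k,l))$, which is dominated by the claimed $O(k 2^k m l^3)$ whenever $2^k=O(kml)$ --- the only regime in which the lemma is of any use. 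Supplying this observation (or, alternatively, restating the classical cost as $O(2^{2k} m l^3)$, which is what the unoptimized contraction actually gives) closes your argument; everything else in your proposal, including the $O(k2^k m)$ table of $\chi_K$'s and the quantum preparation step, matches the paper and is correct.
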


\begin{proof}
By Lemma \ref{251211lem1}, 
the reference state $\ket{R^l_k(H)}$ is a $2$-ary MPS of bond dimension $D=2^k(l+1)$,
\begin{equation}
\ket{R^l_k(H)} 
= 
\sum_{\substack{\mathbf{y} \in \mathbb{F}_2^{m-k} }} 
{\mathbf{v}^{(k)}_L}^{\top} \CAA^{(1)}(y_1) \cdots \CAA^{(m-k)}(y_{m-k}) \CBB^{(*)}  \mathbf{v}^{(k)}_R 
\ket{\mathbf{y}}, 
\end{equation}
where ${\mathbf{v}_L^{(k)}}$, $\mathbf{v}^{(k)}_R$,  $\CAA^{(i)}(y_i)$ and $ \CBB^{(*)}$ are defined as in Lemma \ref{251211lem1}.

Each $\CAA^{(i)}(y_i)$ is a block matrix made of $2^k$ blocks,
where each block is an $ (l+1)\times (l+1)$ matrix whose entries are either 0 or of the form $\frac{c_i^j}{j!}$.
Computing each coefficient $\frac{c_i^j}{j!}$ requires at most $O(l)$ time.
There are $m-k\le m$ matrices $\CAA^{(i)}(y_i)$,
hence the runtime to compute all of them is $O(2^k m l^2)$.
Similarly,
the runtime to compute $ \CBB^{(*)}$ is $O(k 2^k l^3)$, noting that the matrix multiplications here are block-diagonal.
The vector $\mathbf{v}_R=(a_0\cdot0!,a_1\cdot 1!, \cdots,a_l\cdot l!)$ needs $O(l^2)$ time to compute.
Finally,
denoting $(1,...,0)$ by $\mathbf{e}_1$,
then the normalization factor is 
\begin{align*}
&\CNN^2 \\
=&  
\sum_{\mathbf{y} \in \mathbb{F}_2^m} \left(\mathbf{e}_1^{\top} \CAA^{(1)}(y_1) \cdots \CAA^{(m-k)}(y_{m-k}) \CBB^{(*)} \mathbf{v}_R \right)^2\\
=& \sum_{\mathbf{y} \in \mathbb{F}_2^m}  
{\mathbf{v}^{(k)}_R}^{\top} \CBB^{(*) \top} \CAA^{(m-k)}(y_{m-k})^{\top} \cdots \CAA^{(1)}(y_1)^{\top} \mathbf{e}_1
\mathbf{e}_1^{\top} \CAA^{(1)}(y_1) \cdots \CAA^{(m-k)}(y_{m-k}) \CBB^{(*)} \mathbf{v}^{(k)}_R  \\
=&{ \mathbf{v}^{(k)}_R}^{\top} \CBB^{(*)\top}
\left(
\sum_{y_{m-k} \in \mathbb{F}_2}
\CAA^{(m-k)}(y_{m-k})^{\top}
\left(
\cdots 
\left(
\sum_{y_1 \in \mathbb{F}_2}
\CAA^{(1)}(y_1)^{\top} \mathbf{e}_1
\mathbf{e}_1^{\top} \CAA^{(1)}(y_1)
\right)
\cdots
\right)
\CAA^{(m-k)}(y_{m-k})
\right)\\
&\quad\cdot
\CBB^{(*)}
\mathbf{v}^{(k)}_R,
\end{align*}
which can be computed from the inside summations to the outside.
Each summation can be computed in $O(2^k l^3) $, time,
hence we can compute $\CNN$ in $O(2^k m\cdot l^3) $ time (again,  the matrix multiplication here is block-diagonal).
So the runtime to classically compute the coefficients of the MPS is $T_{pre}=O(k 2^k m l^3 )$.
Then, for the quantum algorithm running time,
again we use the fact that a $q$-ary MPS (with $m $ matrices) with open boundary conditions of bond dimension $D $ can be prepared by a quantum algorithm running in time $T_{prepare} = O(mq\cdot \mathrm{poly}(D ) )$, see
\cite{schon2005sequential,melnikov2023quantum}. As $q=2$ and $D = 2^k(l+1)$, $T_{prepare} = O(m\cdot \mathrm{poly}(2^k,l) )$.

\end{proof}

\begin{thm}\label{251216thm1}
Let $H $ be a Hamiltonian of the form \eqref{251211eq1}.
Suppose the symplectic code of $H$ is of dimension $k $ (which is the number of logical bits).
Let  $\CPP(x)=\sum^l_{j=0}a_jx^j$ be a univariate polynomial of degree $l$.
Then there is a quantum algorithm 
that prepares the state
\begin{align}
\rho_{\CPP}(H) = \frac{{\CPP}^2(H)}{\Tr{{\CPP}^2(H)}}.
\end{align} 
Moreover, the running time of this quantum algorithm is  $O(m\cdot \mathrm{poly}(2^k,l,n))$.
\end{thm}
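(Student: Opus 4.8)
The plan is to follow the proofs of Theorems~\ref{251206thm1} and \ref{251208thm1} almost verbatim, with the reduced reference state $\ket{R^l_k(H)}$ replacing $\ket{R^l(H)}$ and with the maximally independent set $P_1,\dots,P_{m-k}$ supplying both the controlled-Pauli operations and the decoding oracle. The conceptual starting point is Lemma~\ref{251209lem2}: substituting $z_i = P_i$ and using the relations $P_{m-k+j} = \prod_{i\in U_j}P_i$ from \eqref{251209eq3}, the expansion \eqref{251209eq6} together with $\CPP(x) = \sum_{t=0}^l a_t x^t$ yields
\[
\CPP(H) = \sum_{\mathbf{y}\in\BFF_2^{m-k}}\left(\sum_{t=0}^l a_t\cdot t!\sum_{K\subseteq\{1,\dots,k\}}I_K^t(\mathbf{y})\right)P_1^{y_1}\cdots P_{m-k}^{y_{m-k}}.
\]
Hence $\CPP(H)$ is a linear combination of the $2^{m-k}$ distinct Pauli operators $P_{\mathbf{y}} := \prod_{i\in\mathrm{supp}(\mathbf{y})}P_i$ indexed by $\mathbf{y}\in\BFF_2^{m-k}$, and the coefficient of each $P_{\mathbf{y}}$ is precisely the amplitude of $\ket{\mathbf{y}}$ in the reference state $\ket{R^l_k(H)}$ of Lemma~\ref{251211lem1}.

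With this in hand, I would first prepare $\ket{R^l_k(H)}$ on an $(m-k)$-qubit register $A$, which by Lemma~\ref{251225lem2} takes time $O(m\cdot\mathrm{poly}(2^k,l))$, together with a maximally entangled state $\ket{\Phi_n}$ on registers $B$ and $C$. Applying the controlled-Pauli unitary $\prod_{i=1}^{m-k}(C^{(i)}P_i)_{A\to B}$ attaches $(P_{\mathbf{y}}\otimes I)\ket{\Phi_n}_{BC}$ to each $\ket{\mathbf{y}}_A$, and a coherent Bell measurement on $B,C$ converts this into $\ket{\mathrm{symp}(P_{\mathbf{y}})}_{BC}$, exactly as in Eq.~\eqref{251208eq2}. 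At this point the amplitudes are the coefficients displayed above, indexed jointly by $\ket{\mathbf{y}}_A$ and $\ket{\mathrm{symp}(P_{\mathbf{y}})}_{BC}$.

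The decoding step is where the independence hypothesis enters, and it is also the only place where the argument deviates from the generic commuting case. Because $\mathrm{symp}(P_1),\dots,\mathrm{symp}(P_{m-k})$ are linearly independent, the syndrome $\mathrm{symp}(P_{\mathbf{y}})$ determines $\mathbf{y}\in\BFF_2^{m-k}$ uniquely; consequently a weight-$(m-k)$ decoding oracle $\CDD_{H'}^{(m-k)}$ for the reduced Hamiltonian $H' = \sum_{i=1}^{m-k}c_iP_i$ is guaranteed to exist and can be implemented by Gaussian elimination, exactly as in Theorem~\ref{251208thm1}. Applying it resets $A$ to $\ket{\mathbf 0}$ so that $A$ may be discarded; undoing the Bell measurement then gives $\frac{1}{\CNN}(\CPP(H)\otimes I)\ket{\Phi^n}_{BC}$, and tracing out $C$ produces $\rho_{\CPP}(H)$ via the same eigenbasis computation as in Theorem~\ref{251206thm1}. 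The overall cost is dominated by the reference-state preparation $O(m\cdot\mathrm{poly}(2^k,l))$, with the $n$ Bell pairs, the $O(m)$ controlled-Pauli operations on $n$ qubits, and the $\mathrm{poly}(n,m)$ Gaussian elimination contributing the additional $\mathrm{poly}(n)$ factor, for a total of $O(m\cdot\mathrm{poly}(2^k,l,n))$.

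The one point that requires care—and which I expect to be the main subtlety rather than a deep obstacle—is that although $\CPP(x)$ may have degree $l$ exceeding $m-k$, the operators that actually require decoding are only the $P_{\mathbf{y}}$ with $\mathbf{y}\in\BFF_2^{m-k}$, each of weight at most $m-k$. The high powers of the dependent terms $\prod_{i\in U_j}P_i$ never appear as operators to be inverted: Lemma~\ref{251209lem2} has already folded their entire contribution into the scalar amplitudes $\sum_t a_t\, t!\sum_K I_K^t(\mathbf{y})$ of the reference state. This is exactly why the weight-$(m-k)$ oracle suffices and why the degree $l$ may be taken arbitrarily large, just as in Theorem~\ref{251208thm1}. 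With this verification the remainder of the proof is a direct transcription of the commuting-case argument.
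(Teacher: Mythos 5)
Your proposal matches the paper's proof essentially step for step: the same expansion of $\CPP(H)$ via Lemma~\ref{251209lem2}, the same reference state $\ket{R^l_k(H)}$ prepared via Lemmas~\ref{251211lem1} and \ref{251225lem2}, the same weight-$(m-k)$ decoding oracle for the reduced Hamiltonian $H'=\sum_{i=1}^{m-k}c_iP_i$, and the same Bell-measurement/decode/un-measure sequence from Theorem~\ref{251206thm1}. Your closing observation—that the dependent terms are folded into the scalar amplitudes so only weight-$(m-k)$ strings ever need decoding—is exactly the point the paper relies on (implicitly, via Theorem~\ref{251208thm1}), and your correction of the controlled-Pauli product to range over $i=1,\dots,m-k$ is the right reading of the construction.
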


\begin{proof}
 By Lemma \ref{251209lem2}, we have 
\begin{equation}\label{251225eq2}
\CPP(H) =  \sum_{
\substack{\mathbf{y}\in \BFF_2^{m-k}  
}} 
\left(
\sum_{t=0}^l a_t \cdot t! 
\sum_{K\subseteq \{1,...,k\}} I_{K}^t(\mathbf{y}) 
\right)
P_1^{y_1} \cdots P_{m-k}^{y_{m-k}},
\end{equation}
where $I_K^t(\mathbf{y})$ are defined in \eqref{251209eq4}.
Let $H'=\sum_{i=1}^{m-k} c_iP_i$.
As $P_1,...,P_{m-k}$ are independent,
there is a weight-$(m-k)$ decoding oracle $\CDD_{H'}$  for $H'$ in a similar way as   Theorem~\ref{251208thm1}. 
We will use $\CDD_{H'}$ to decode any $\mathbf{y}$ from $P_1^{y_1} \cdots P_{m-k}^{y_{m-k}}$.

The algorithm proceeds analogously to that of Theorem~\ref{251206thm1}; the details are included for completeness.
We also begin by preparing the state $\ket{R^l_k(H)}$ defined in \eqref{251211eq2} on register $A$, 
which take time $O( \mathrm{poly}( n, 2^k, l))$ by Lemma \ref{251225lem2}.
Next, let us prepare 
a maximally entangled state $\ket{\Phi_n} =\frac{1}{\sqrt{2^n}} \sum_{\mathbf{y}} \ket{\mathbf{y}} \otimes \ket{\mathbf{y}}$ on $n$ pairs of qubits across registers
$B$ and $C$. Then
the initial joint state is 

\begin{align*}
\ket{\psi_1}_{ABC} = &
\ket{R^l_k(H)}_A \otimes \ket{\Phi_n}_{BC} \\
= &
\frac{1}{\CNN}\sum_{
\substack{\mathbf{y}\in \BFF_2^{m-k}  
}} 
\left(
\sum_{t=0}^l a_t \cdot t! 
\sum_{K\subseteq \{1,...,k\}} I_{K}^t(\mathbf{y})
\right)
\ket{\mathbf{y}}_A \otimes \ket{\Phi_n}_{BC}.
\end{align*}

Next, we apply the controlled unitary $\prod^m_{ i=1}(C^{(i)}P_i)_{A\to B},$
 where $(C^{(i)}P_i)_{A\to B}$ denotes the controlled-Pauli operation that applies
$P_i$ to register $B$ conditioned on the $i$-th  qubit of register 
$A$ being in the state $\ket{1}$. 
  After this operation, the state becomes
  \begin{equation}
\ket{\psi_2}_{ABC}
= \frac{1}{\CNN}\sum_{
\substack{\mathbf{y}\in \BFF_2^{m-k}  
}} 
\left(
\sum_{t=0}^l a_t \cdot t! 
\sum_{K\subseteq \{1,...,k\}} I_{K}^t(\mathbf{y})
\right)
\ket{\mathbf{y}}_A  
\otimes (P_{\mathbf{y}} \otimes I)_{BC} \, |\Phi^n\rangle_{BC}. 
\end{equation}

The next step is to decode $\ket{\mathbf{y}}_A$.
First, we apply a coherent Bell measurement on registers $B$ and $C$.
This operation maps
$(P_{\mathbf{y}} \otimes I)  \, |\Phi_n\rangle \to \ket{\mathrm{symp}(P_{\mathbf{y}})}$for all $ \mathbf{y}$,
where $\mathrm{symp}(P_{\mathbf{y}})$ is the symplectic representation of $P_{\mathbf{y}}$.
Then the joint state becomes
\begin{equation}\label{251228eq2}
\ket{\psi_3}_{ABC}
= \frac{1}{\CNN}\sum_{
\substack{\mathbf{y}\in \BFF_2^{m-k}  
}} 
\left(
\sum_{t=0}^l a_t \cdot t! 
\sum_{K\subseteq \{1,...,k\}} I_{K}^t(\mathbf{y})
\right)
\ket{\mathbf{y}}_A  
\otimes \ket{\mathrm{symp}(P_{\mathbf{y}})}_{BC}. 
\end{equation}
Now we use the decoding oracle $\mathcal{D}^{}_{H'}$ on registers $A$ and $BC$,
Now we use the decoding oracle $\mathcal{D}^{}_{H'}$ on registers $A$ and $BC$.
This oracle resets register $A$ to $\ket{0}$, after which register $A$ is discarded. Hence, 
the state becomes
\begin{equation}\label{251228eq1}
\ket{\psi_4}_{BC}
= \frac{1}{\CNN}\sum_{
\substack{\mathbf{y}\in \BFF_2^{m-k}  
}} 
\left(
\sum_{t=0}^l a_t \cdot t! 
\sum_{K\subseteq \{1,...,k\}} I_{K}^t(\mathbf{y})
\right)
 \ket{\mathrm{symp}(P_{\mathbf{y}})}_{BC}. 
\end{equation}

Then let us undo the coherent Bell measurement,
and get 
\begin{align*}
\ket{\psi_5}_{BC}
=&  \frac{1}{\CNN} \sum_{
\substack{\mathbf{y}\in \BFF_2^{m-k}  
}} 
\left(
\sum_{t=0}^l a_t \cdot t! 
\sum_{K\subseteq \{1,...,k\}} I_{K}^t(\mathbf{y})
\right)
(P_{\mathbf{y}} \otimes I)_{BC} \, |\Phi^n\rangle_{BC}
\\
= &\frac{1}{\CNN} (P(H)\otimes I) |\Phi^n\rangle_{BC}\\
=&\frac{1}{\CNN\sqrt{2^n}}  \sum_{\ket{\lambda}} P(\lambda) \ket{\lambda}_B \otimes \ket{\bar\lambda}_C,
\end{align*}
where $\{\ket{\lambda}\}$ runs over all eigenstates of $H$, and  $\ket{\bar\lambda}$ is the state defined by $\ep{\mathbf{y} | \bar\lambda} = \overline{ \ep{\mathbf{y} |  \lambda}}$ for all computational basis state $ \ket{\mathbf{y}}$.
Finally,
we trace out register $C$, and we obtain $\rho_{\CPP}(H)$.
\end{proof}

\section{Noncommuting Hamiltonians}\label{sec:noncom}
In this section, we consider the Hamiltonian $H=\sum_ic_iP_i$ with real coefficients $c_i\in \real$ where the Pauli operators $P_i$
 are not necessarily mutually commuting. The presence of noncommuting Pauli terms introduces additional challenges, particularly in computing the coefficients appearing in the symmetrized polynomial expansion.
Our key observation is that noncommuting Pauli terms can be organized into anticommuting clusters, each of which contributes locally to the reference state. The tensor-network structure arises from composing these clusters sequentially.

\begin{Def}[Anticommutation graph, \cite{schmidhuber2025hamiltonian}]\label{260101def1}
Let $z_1,\ldots,z_m$ be formal variables such that every pair of distinct variables either commutes or anticommutes, i.e.,
$$z_iz_j=\pm z_jz_i, \quad i\neq j.$$
We define the anticommutation graph of $\{z_1,...,z_m\}$ as an undirected graph $G=(V,E)$ with vertex set $V = \{z_1,...,z_m\}$ and two vertices are connected by an edge $(z_i,z_j)\in E$ if and
only if $ z_i z_j = - z_j z_i$.

\end{Def}

Furthermore, 
for any formal variables $z_1,\ldots,z_m$,
let $\bmu=(\mu_1,\cdots, \mu_m)\in\Z_{\geq 0}^m$ be a counting vector,
then we denote $S(\bmu)$ to be the set of all strings over the index set $\{1,...,m\}$ which contains $i$ for $\mu_i$ times.
For any $\pi\in S(\bmu)$, the sign function associated with $G$ and $\bmu$ is defined by
\begin{align}
\sgn_{G,\bmu}(\pi) := (z_m^{\mu_m} z_{m-1}^{\mu_{m-1}} \cdots z_1^{\mu_1}) (z_{\pi(1)} \cdots z_{\pi(|{\bmu}|)}) \in \set{\pm 1} .
\end{align}

With the above notation, we are now ready to derive the coefficient formulas.
\begin{Def}[Weighted $\beta$-function] \label{def:beta_function}
Let $z_1, \dots, z_m$ be a set of formal variables with anti-commutation graph $G$, and let $\mathbf{c} = (c_1,...,c_m)\in \R^m$. 
For $1 \leq s \leq m$, define the weighted $\beta$-function of order $s$ as 
\begin{align}
\beta_G^{(s)}(\mathbf{y},\mathbf{c}) 
= \sum_{
\substack{\boldsymbol{\mu} \in \Z_{\ge 0}^{m },\\
|\boldsymbol{\mu}|=s \\
\bmu   \equiv \mathbf{y} (\bmod 2)}}
\mathbf{c}^{\bmu}
 \sum_{\pi\in S(\bmu)}
\sgn _{G,\bmu}(\pi), \quad \forall \mathbf{y} \in \BFF_2^m.
\end{align}

\end{Def}

\begin{lem}[Computational cost of the weighted $\beta$-function] \label{260104thm2}
Let $\mathbf{y} \in \BFF_2^M$ such that $|\mathbf{y}| = w$, and let $ \beta_{G}^{(s)}(\mathbf{y}, \mathbf{c})$ be as in Definition~\ref{def:beta_function}.
Assume that elementary arithmetic operations and classical memory access take $O(1)$ time.
There is a deterministic classical algorithm which computes $ \beta_{G}^{(s)}(\mathbf{y}, \mathbf{c})$ in time $T$, where
\begin{align}\label{260104eq4}
T = \begin{cases}
O\left( \binom{\frac{s-w}{2} + M - 1}{M - 1} \cdot s M 2^s \right) & \text{ if } s \leq M, \\
O\left( \binom{\frac{s-w}{2} + M - 1}{M - 1} \cdot s M (\lceil s/M \rceil + 1)^M \right) & \text{ if } s > M .
\end{cases}
\end{align}
\end{lem}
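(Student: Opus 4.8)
The plan is to compute $\beta_G^{(s)}(\mathbf{y},\mathbf{c})$ in two nested stages: first enumerate the admissible multi-indices $\bmu$ contributing to the outer sum, and then, for each of them, evaluate the inner signed sum $\sum_{\pi\in S(\bmu)}\sgn_{G,\bmu}(\pi)$ by a dynamic program whose state is a partial counting vector. First I would parametrize the outer sum. Writing $\mu_i=y_i+2\nu_i$ with $\nu_i\in\Z_{\ge 0}$, the constraints $|\bmu|=s$ and $\bmu\equiv\mathbf{y}\ (\bmod 2)$ are equivalent to $\sum_i\nu_i=(s-w)/2$ (the sum being empty unless $s\ge w$ and $s\equiv w \pmod 2$, in which case $\beta_G^{(s)}=0$). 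Hence the number of admissible $\bmu$ is exactly $\binom{(s-w)/2+M-1}{M-1}$, and they can be enumerated with $O(M)$ overhead each; this is the binomial prefactor appearing in both branches of \eqref{260104eq4}.

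The core step is the inner sum, and the key fact I would establish is a Markov property of the sign. Building a word $\pi=(\pi(1),\dots,\pi(s))$ one letter at a time, I claim that appending the letter $j$ to a prefix that has already used $a_i$ copies of letter $i$ multiplies the running sign by $(-1)^{\sum_{i<j,\,(i,j)\in E}a_i}$, a factor depending only on the partial counting vector $\mathbf{a}=(a_1,\dots,a_M)$ and on $j$. This follows because, up to a $\pi$-independent constant fixed by squaring the canonical monomial $z_m^{\mu_m}\cdots z_1^{\mu_1}$, the quantity $\sgn_{G,\bmu}(\pi)$ is the parity of the number of out-of-canonical-order anticommuting pairs of $\pi$, and appending $j$ creates precisely $\sum_{i<j,\,(i,j)\in E}a_i$ such new pairs (letters of equal index commute and contribute nothing, so repetitions are harmless). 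This yields the recurrence
$$ f(\mathbf{a})=\sum_{j:\,1\le a_j\le\mu_j}(-1)^{\sum_{i<j,\,(i,j)\in E}a_i}\,f(\mathbf{a}-\mathbf{e}_j),\qquad f(\mathbf{0})=1, $$
over states $\mathbf{0}\le\mathbf{a}\le\bmu$, from which $\sum_{\pi\in S(\bmu)}\sgn_{G,\bmu}(\pi)$ is read off as $f(\bmu)$ times the fixed constant. I then multiply by $\mathbf{c}^{\bmu}$ and accumulate over all admissible $\bmu$.

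Finally I would bound the running time. For a fixed $\bmu$ the dynamic program has $\prod_{i=1}^M(\mu_i+1)$ states, and each state costs $O(sM)$: there are at most $M$ candidate letters $j$, and each parity $\sum_{i<j,\,(i,j)\in E}a_i$ is evaluated in $O(s)$ time since at most $s$ of the counts $a_i$ are nonzero. To convert $\prod_i(\mu_i+1)$ into the two stated bounds I use two elementary inequalities under the constraint $\sum_i\mu_i=s$. Since $\mu+1\le 2^{\mu}$ for every $\mu\ge 0$, we get $\prod_i(\mu_i+1)\le 2^{s}$, which gives the $s\le M$ case; and by the AM–GM inequality, $\prod_i(\mu_i+1)\le\big(1+\tfrac{s}{M}\big)^M\le(\lceil s/M\rceil+1)^M$, which gives the $s>M$ case. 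Multiplying the per-$\bmu$ cost $\prod_i(\mu_i+1)\cdot sM$ by the number $\binom{(s-w)/2+M-1}{M-1}$ of admissible $\bmu$ then reproduces exactly \eqref{260104eq4}.

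The step I expect to be the main obstacle is proving the Markov property of the sign rigorously, namely that $\sgn_{G,\bmu}$ factorizes into an incremental contribution depending only on $(\mathbf{a},j)$ once the relations $z_i^2=1$ and the repetition of letters are correctly accounted for; the binomial enumeration of the multi-indices and the two state-count inequalities are then routine bookkeeping.
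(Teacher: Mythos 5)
Your proposal follows the same outer structure as the paper's proof---parametrizing the admissible multi-indices as $\bmu=\mathbf{y}+2\boldsymbol{\nu}$ to count them by $\binom{(s-w)/2+M-1}{M-1}$, and then paying a per-term cost of $sM\cdot\prod_i(\mu_i+1)$ bounded by $sM2^s$ or $sM(\lceil s/M\rceil+1)^M$ via $\mu+1\le 2^\mu$ and AM--GM. The one substantive difference is in how the inner signed sum is handled: the paper rewrites it as $\binom{s}{\bmu}\alpha_G(\bmu)$ with $\alpha_G(\bmu)=\Expect_{\pi\sim S(\bmu)}[\sgn_{G,\bmu}(\pi)]$ and simply cites Theorem~25 of Schmidhuber et al.\ for its computation, whereas you reconstruct the dynamic program over partial counting vectors explicitly, making the argument self-contained. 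That is a genuine gain in completeness, and your recurrence is essentially the right one. One bookkeeping slip: since $(z_m^{\mu_m}\cdots z_1^{\mu_1})(z_1^{\mu_1}\cdots z_m^{\mu_m})=1$, the sign $\sgn_{G,\bmu}(\pi)$ is the coefficient obtained by sorting the word into \emph{increasing} index order, so appending the letter $j$ to a prefix with counts $\mathbf{a}$ creates new anticommuting inversions with the letters of index $i>j$ already present, not $i<j$; the incremental factor should be $(-1)^{\sum_{i>j,\,(i,j)\in E}a_i}$. This does not affect the state count or the stated runtime, so the complexity bound stands as claimed.
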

\begin{proof}
From Definition \ref{def:beta_function},
\begin{align}\label{260104eq5}
\beta_{G}^{(s)}(\mathbf{y}, \mathbf{c}) 
=  \sum_{
\substack{\boldsymbol{\mu} \in \Z_{\ge 0}^{M },\\
|\boldsymbol{\mu}|=s \\
\bmu   \equiv \mathbf{y} (\bmod 2)}}
\binom{s}{\bmu} \cdot c_1^{\mu_1}\cdots c_M^{\mu_M}\alpha_G(\bmu ),
\end{align}
where 
$$\alpha_G(\bmu) = \Expect_{\pi \sim S(\bmu)}[\sgn_{G, \bmu}(\pi)].$$
First,
each $\binom{s}{\bmu}$ can be computed in $O(m \cdot \mathrm{poly}(s))$ time.
Now let us count the number of terms in the summation in \eqref{260104eq5}.
Let $w = |\mathbf{y}|$ be the weight of $\mathbf{y}$.
If $w > s$, then the summation has no terms.
If $w \le s$, let $\boldsymbol{\nu} \in \Z_{\geq0}^{M}$ such that $\bmu = \mathbf{y} + 2\boldsymbol{\nu}$,
then $|\boldsymbol{\nu}| = \frac{s-w}{2}$.
If $\frac{s-w}{2} \notin \Z_{\geq0}$, then there are no such $\boldsymbol{\nu}$ or $\boldsymbol{\mu}$.
Otherwise, there are $\binom{\frac{s-w}{2} + M - 1}{M - 1}$ such $\boldsymbol{\nu}$,
so there are $\binom{\frac{s-w}{2} + M - 1}{M - 1}$ terms in the summation in \eqref{260104eq5}.
Combined with Theorem 25 in \cite{schmidhuber2025hamiltonian}, the total runtime of calculating the weighted $\beta$-function is bounded by \eqref{260104eq4}.
\end{proof}

By using the above notations, we can obtain 
a noncommuting version of  Lemma \ref{251124lem1} in the following theorem, 
which will play a fundamental role in HDQI for noncommuting Hamiltonians. 

\begin{thm}[Noncommuting univariate polynomial expansion] 
Let $\CPP(x) = \sum^l_{j=0} a_jx^j$ be a univariate polynomial of degree $l$.
Let $x = \sum_{i=1}^m c_i z_i$ such that $z_i^2 = 1$ for all $i$ and  $\set{z_i}_i$
satisfy the (anti) commutation relations encoded by an anticommutation graph $G$, as defined in Definition \ref{260101def1}. 
Then the polynomial  $\CPP(x)$ admits the expansion
\begin{align}
\CPP\left( \sum_{i=1}^m c_i z_i \right) & = \sum_{s=0}^l a_s \, \,\sum_{\mathbf{y} \in \BFF_2^m}
    \beta_G^{(s)}(\mathbf{y}, \mathbf{c}) \cdot
    z_{1}^{y_1} \cdots z_{m}^{y_m} ,
\end{align}
where $\beta_G^{(s)}(\mathbf{y},\mathbf{c})$ denotes the weighted $\beta$-function of order $s$, as defined in Definition~\ref{def:beta_function}.
\end{thm}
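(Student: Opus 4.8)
The plan is to expand each monomial $x^s$ of $\CPP$ directly as a noncommutative multinomial and then reorganize the resulting operator words first by their counting vectors and finally by their parities, paralleling the commuting computation of Lemma~\ref{251124lem1} but now carrying the reordering signs. First I would write $x^s = \big(\sum_{i=1}^m c_i z_i\big)^s = \sum_{\pi} c_{\pi(1)}\cdots c_{\pi(s)}\, z_{\pi(1)}\cdots z_{\pi(s)}$, where $\pi$ ranges over all length-$s$ words over $\{1,\dots,m\}$. Since the coefficients $c_i$ are commuting scalars, I would group together the words sharing a fixed counting vector $\bmu\in\Z_{\ge 0}^m$ with $|\bmu|=s$; these words are exactly the elements of $S(\bmu)$, and the scalar prefactor collapses to $\mathbf{c}^{\bmu}$. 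This yields $x^s = \sum_{|\bmu|=s}\mathbf{c}^{\bmu}\sum_{\pi\in S(\bmu)} z_{\pi(1)}\cdots z_{\pi(s)}$.

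Next I would reduce each operator word to a canonical monomial. Using only the relations $z_iz_j=\pm z_jz_i$ encoded by $G$, one may sort the indices of $z_{\pi(1)}\cdots z_{\pi(s)}$ into ascending order; swaps of distinct indices produce a global sign $\varepsilon_\pi\in\{\pm1\}$, while equal indices commute and contribute nothing, so $z_{\pi(1)}\cdots z_{\pi(s)} = \varepsilon_\pi\, z_1^{\mu_1}\cdots z_m^{\mu_m}$ (consistency of $\varepsilon_\pi$ follows from the standard well-definedness of Pauli-type reordering, so the word is a fixed element equal to $\pm$ the canonical monomial). The crucial check is that $\varepsilon_\pi$ coincides with $\sgn_{G,\bmu}(\pi)$ from Definition~\ref{def:beta_function}. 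I would verify this by substituting the sorted form into $\sgn_{G,\bmu}(\pi)=(z_m^{\mu_m}\cdots z_1^{\mu_1})(z_{\pi(1)}\cdots z_{\pi(s)})$ and observing that $(z_m^{\mu_m}\cdots z_1^{\mu_1})(z_1^{\mu_1}\cdots z_m^{\mu_m})$ telescopes to the identity: the relation $z_i^2=1$ makes each $z_i^{\mu_i}$ self-inverse, so the reversed product is precisely the inverse of the sorted product and the two cancel from the inside out. Hence $\sgn_{G,\bmu}(\pi)=\varepsilon_\pi$.

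Finally I would apply $z_i^2=1$ once more to collapse $z_1^{\mu_1}\cdots z_m^{\mu_m}$ to $z_1^{y_1}\cdots z_m^{y_m}$ with $y_i\equiv\mu_i\pmod 2$, then regroup the double sum by the parity vector $\mathbf{y}=\bmu\bmod 2$. The coefficient of $z_1^{y_1}\cdots z_m^{y_m}$ becomes $\sum_{|\bmu|=s,\,\bmu\equiv\mathbf{y}}\mathbf{c}^{\bmu}\sum_{\pi\in S(\bmu)}\sgn_{G,\bmu}(\pi)$, which is exactly $\beta_G^{(s)}(\mathbf{y},\mathbf{c})$; summing over $s$ against the weights $a_s$ then gives the stated expansion. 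The only genuinely delicate step is the sign bookkeeping, namely showing that the ad hoc reordering sign $\varepsilon_\pi$ equals the intrinsically defined $\sgn_{G,\bmu}(\pi)$ through the telescoping identity; everything else is the same parity-grouping reorganization as in the commuting case, now carried out in a noncommutative ring.
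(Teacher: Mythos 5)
Your proposal is correct and follows essentially the same route as the paper: expand $x^s$ into length-$s$ words, group by counting vector $\bmu$, reorder each word into the canonical monomial with sign $\sgn_{G,\bmu}(\pi)$, and collapse by parity using $z_i^2=1$. The only difference is that you explicitly verify the reordering sign equals $\sgn_{G,\bmu}(\pi)$ via the telescoping identity, a step the paper's proof leaves implicit in the definition of the sign function.
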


\begin{proof}
We have
\begin{align}\label{260102eq1}
\CPP\left( \sum_{i=1}^m c_i z_i \right)  = \sum_{s=0}^l a_s \left( \sum_{i=1}^m c_i z_i \right)^s = \sum_{s=0}^l a_s \sum_{1 \leq t_1, \dots, t_s \leq m} c_{t_1} \cdots c_{t_s} z_{t_1} \cdots z_{t_s}.
\end{align}

Since
\begin{align*}
&\sum_{1 \leq t_1, \dots, t_s \leq m} 
c_{t_1} \cdots c_{t_s}
z_{t_1} \cdots z_{t_s} \\
= & \sum_{\substack{
\boldsymbol{\mu} \in \Z_{\ge 0}^{m },\\
|\boldsymbol{\mu}|=s}}
\sum_{\pi \in S(\bmu)} 
\sgn_{G, \bmu}(\pi) 
c_1^{\mu_1} \cdots c_m^{\mu_m}
z_1^{\mu_1} \cdots z_m^{\mu_m} \\
= & \sum_{\mathbf{y} \in \BFF_2^m} \beta_{G}^{(s)}(\mathbf{y},\mathbf{c}) \cdot \mathbf{z}^{\mathbf{y}}.
\end{align*}
Together with equation \eqref{260102eq1}, the proof is completed.
\end{proof}

\begin{lem}[Factorization of weighted $\beta$-function] \label{260103lem1}
Let $z_1, \dots, z_m$ be a set of $m$ formal variables with anticommutation graph $G = (V, E)$.
Let  $G_1, \dots, G_r$ denote the connected components of $G$, where 
each component is given by $G_t = (V_t, E_t)$,  and the vertex and edge sets decompose as $V = \bigsqcup_{t=1}^r V_t$, $E = \bigsqcup_{t=1}^r E_t$.
Let $m_t = |V_t|$ denote the number of vertices in $G_t$, so that $\sum_{t=1}^r m_t = m$.
Then, for any $\mathbf{y} \in \BFF_2^m$, the weighted $\beta$-function factorizes as 
\begin{align}
\beta_{G}^{(s)}(\mathbf{y},\mathbf{c}) = 
\sum_{
\substack{
\bkappa \in \Z_{\ge 0}^r, \\|\bkappa| =s}} \binom{s}{\bkappa} \prod_{t=1}^r \beta_{G_t}^{(\kappa_t)}(\mathbf{y}\rvert_{G_t},
\mathbf{c}\rvert_{G_t}) ,
\end{align}
where $\mathbf{y}\rvert_{G_t}$ (respectively, $\mathbf{c}\rvert_{G_t}$) denotes the restriction of $\mathbf{y}$ (respectively, $\mathbf{c} $) to the indices  corresponding to the vertices in $G_t$.

\end{lem}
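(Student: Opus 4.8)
The plan is to exploit the defining property of the anticommutation graph (Definition~\ref{260101def1}): an edge records an anticommutation, so two variables lying in \emph{distinct} connected components $G_t\ne G_{t'}$ are nonadjacent and therefore commute. This cross-component commutativity is exactly what decouples the sign function, and the whole argument reduces to making that decoupling precise and then counting interleavings of strings.

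First I would record how the summation data in Definition~\ref{def:beta_function} splits along components. Any counting vector $\bmu\in\Z_{\ge 0}^m$ with $|\bmu|=s$ decomposes as $\bmu=(\bmu\rvert_{G_1},\dots,\bmu\rvert_{G_r})$, and setting $\kappa_t=|\bmu\rvert_{G_t}|$ produces a vector $\bkappa\in\Z_{\ge 0}^r$ with $|\bkappa|=s$. The parity constraint $\bmu\equiv\mathbf{y}\pmod 2$ is equivalent to $\bmu\rvert_{G_t}\equiv\mathbf{y}\rvert_{G_t}\pmod 2$ for every $t$, and the monomial weight factorizes as $\mathbf{c}^{\bmu}=\prod_t (\mathbf{c}\rvert_{G_t})^{\bmu\rvert_{G_t}}$. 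Hence the outer sum over $\bmu$ reorganizes into a sum over $\bkappa$ with $|\bkappa|=s$, followed by independent sums over each restriction $\bmu\rvert_{G_t}$ of prescribed weight $\kappa_t$ and parity $\mathbf{y}\rvert_{G_t}$.

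The heart of the proof is the sign factorization
\[
\sgn_{G,\bmu}(\pi)=\prod_{t=1}^r \sgn_{G_t,\,\bmu\rvert_{G_t}}(\pi_t),
\]
where $\pi_t$ is the subword of $\pi$ keeping only the letters in $V_t$. To establish it I would first reinterpret $\sgn_{G,\bmu}(\pi)$ as the sign accumulated when the word $z_{\pi(1)}\cdots z_{\pi(|\bmu|)}$ is sorted into the canonical monomial $z_1^{\mu_1}\cdots z_m^{\mu_m}$ by adjacent transpositions; this matches the definition because $z_i^2=1$ forces $(z_m^{\mu_m}\cdots z_1^{\mu_1})(z_1^{\mu_1}\cdots z_m^{\mu_m})=1$. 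Each adjacent swap contributes $-1$ precisely when its two letters are joined by an edge of $G$, and since edges never cross components, every cross-component swap is free. Consequently the accumulated sign depends only on the relative order of letters within each component and is independent of how the per-component subwords are interleaved. I expect this step — arguing cleanly that the sorting sign is well defined, splits multiplicatively over components, and is interleaving-independent — to be the main obstacle, since it is where the anticommutation combinatorics genuinely enter.

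Finally I would perform the count. A string $\pi\in S(\bmu)$ is specified by choosing which of the $s$ positions belong to each component, of which there are $\binom{s}{\bkappa}$ interleavings, together with a per-component string $\pi_t\in S(\bmu\rvert_{G_t})$. Because the sign depends only on the $\pi_t$, summing over $\pi$ yields $\binom{s}{\bkappa}\prod_t \sum_{\pi_t}\sgn_{G_t,\bmu\rvert_{G_t}}(\pi_t)$. Substituting into $\beta_G^{(s)}(\mathbf{y},\mathbf{c})$ and using the reorganization from the first step, the per-component data (weights, parities, and sign sums) assemble exactly into $\prod_t \beta_{G_t}^{(\kappa_t)}(\mathbf{y}\rvert_{G_t},\mathbf{c}\rvert_{G_t})$, giving the claimed identity.
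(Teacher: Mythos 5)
Your proposal is correct and follows essentially the same route as the paper: the paper's proof likewise reorganizes the sum over $\bmu$ by component restrictions and rests on the identity $\sum_{\pi\in S(\bmu)}\sgn_{G,\bmu}(\pi)=\binom{s}{\bkappa}\prod_t\sum_{\pi_t\in S(\bmu\rvert_{G_t})}\sgn_{G_t,\bmu\rvert_{G_t}}(\pi_t)$, which is exactly your interleaving-and-sign-factorization step (the paper simply asserts it, whereas you propose to justify it via adjacent transpositions and cross-component commutativity). The only difference is cosmetic: the paper starts from the product $\prod_t\beta_{G_t}^{(\kappa_t)}$ and expands it, while you start from $\beta_G^{(s)}$ and decompose.
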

\begin{proof}
Denote $\kappa_t = \left|\bmu\rvert_{G_t}\right|$ to be the size of the restriction of $\bmu$ onto $G_t$.
Therefore $\sum_{t=1}^r \kappa_t = |\bmu|= s$.
Since 
\begin{align*}
& \prod_{t=1}^r \beta_{G_t}^{(\kappa_t)} (\mathbf{y}\rvert_{G_t},
\mathbf{c}\rvert_{G_t}) \\
= & \prod_{t=1}^r \sum_{\substack{
\bmu \in \Z_{\ge 0}^{m_t}, \, |\bmu|=\kappa_t \\
\bmu \equiv \mathbf{y}\rvert_{G_t} (\bmod 2)}} 
(\mathbf{c}\rvert_{G_t})^{\bmu}\sum_{\pi\in S(\bmu)} \sgn_{G_t, \bmu} (\pi)\\
= & \sum_{\substack{\bmu\in\Z_{\geq 0}^{m}\\
|\bmu\rvert_{G_j}|=\kappa_j,\text{ for }j=1,\cdots, r\\
\bmu \equiv \mathbf{y} (\bmod 2)}}
\mathbf{c}^{\bmu} 
\left( \frac{1}{\binom{s}{\bkappa}}
\sum_{\pi\in S(\bmu)} \sgn_{G, \bmu}(\pi)
\right).
\end{align*}
Then, we have that 
\begin{align*}
& \sum_{
\substack{
\bkappa \in \Z_{\ge 0}^r, \\|\bkappa| =s}} \binom{s}{\bkappa} \prod_{t=1}^r \beta_{G_t}^{(\kappa_t)}(\mathbf{y}\rvert_{G_t},
\mathbf{c}\rvert_{G_t}) \\
= & \sum_{
\substack{
\bkappa \in \Z_{\ge 0}^r, \\|\bkappa| =s}} \binom{s}{\bkappa} 
\sum_{\substack{\bmu\in\Z_{\geq 0}^{m}\\
|\bmu\rvert_{G_j}|=\kappa_j,\text{ for }j=1,\cdots, r\\
\bmu \equiv \mathbf{y} (\bmod 2)}}
\mathbf{c}^{\bmu} 
\left( \frac{1}{\binom{s}{\bkappa}}
\sum_{\pi\in S(\bmu)} \sgn_{G, \bmu}(\pi)
\right)\\
= & \sum_{\substack{\bmu\in\Z_{\geq 0}^m,\; |\bmu|=s\\
\bmu\equiv\mathbf{y}(\bmod 2)}}
\mathbf{c}^{\bmu} \sum_{\pi\in S(\bmu)} \sgn_{G,\bmu}(\pi)\\
= & \beta_{G}^{(s)}(\mathbf{y},\mathbf{c}).
\end{align*}
\end{proof}
With these preparations in place, we are now positioned to construct the reference state for HDQI with noncommuting Hamiltonians.

\begin{lem} \label{260102lem1}
Given a univariate polynomial $\CPP(x) = \sum^l_{j=0}a_jx^j$ of degree $l$, and an $n$-qubit Hamiltonian $H=\sum_{i=1}^m c_iP_i$.
Let $G = (V, E)$ be
the anticommutation graph of $H$. Suppose that $G$ can be decomposed as  $G = \bigsqcup_{t=1}^r G_t$, where $G_t = (V_t, E_t)$ are the connected components of $G$ and
$m_t = |V_t|$  is the number of vertices in the $t$-th connected component.
 Let us consider the following state 
\begin{align}
\label{260104eq2}
\ket{R^l_*(H)} & = \frac{1}{\CNN} \sum_{s=0}^{l} a_s \sum_{\mathbf{y} \in \BFF_2^m}  \beta_{G}^{(s)}(\mathbf{y},\mathbf{c}) \ket{\mathbf{y}} ,
\end{align}
where $\CNN$ is a normalization factor, and $\beta_{G}^{(s)}(\mathbf{y},\mathbf{c}) $ is the 
weighted $\beta$-function defined in Definition \ref{def:beta_function}. 
Then the state $\ket{R^l_*(H)} $
is a $q$-ary matrix product state with open boundary conditions of bond dimension $D = l + 1$, where \begin{align}
    q = \underset{1 \leq t \leq r}{\max} 2^{m_t} .
\end{align}

\end{lem}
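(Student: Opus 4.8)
The plan is to mirror the MPS construction of the commuting case (Lemma~\ref{251118lem1}), but to group the qubits according to the connected components of the anticommutation graph $G$, so that each component becomes a single MPS site. Concretely, the partition $[m] = V_1 \sqcup \cdots \sqcup V_r$ induces a decomposition of every $\mathbf{y} \in \BFF_2^m$ into substrings $\mathbf{y}\rvert_{G_t} \in \BFF_2^{m_t}$, and I would treat the $t$-th site as carrying the full $m_t$-bit label $\mathbf{y}\rvert_{G_t}$. This is precisely why the local physical dimension of the $t$-th site is $2^{m_t}$, giving $q = \max_t 2^{m_t}$; the novelty relative to the commuting case is that one MPS site now absorbs an entire anticommuting cluster rather than a single qubit, while all the sign combinatorics has already been packaged into the weighted $\beta$-functions.

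For the bond space I would reuse the nilpotent shift matrices $B_k \in \R^{(l+1)\times(l+1)}$ with $(B_k)_{i,j} = \delta_{j-i,k}$ from the proof of Lemma~\ref{251118lem1}, which satisfy $B_j B_k = B_{j+k}$ when $j+k \le l$ and vanish otherwise. I would then define the site matrices by
\[
A^{(t)}(\mathbf{y}\rvert_{G_t}) = \sum_{\kappa=0}^{l} \frac{\beta_{G_t}^{(\kappa)}(\mathbf{y}\rvert_{G_t}, \mathbf{c}\rvert_{G_t})}{\kappa!}\, B_\kappa ,
\]
together with the boundary vectors $\mathbf{v}_L = (\tfrac{1}{\CNN}, 0, \dots, 0)^{\top}$ and $\mathbf{v}_R = (a_0\cdot 0!, a_1\cdot 1!, \dots, a_l\cdot l!)^{\top}$, exactly as in \eqref{251124eq2}. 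Multiplying the site matrices and collapsing the products of shift matrices via the semigroup relation yields
\[
A^{(1)}(\mathbf{y}\rvert_{G_1}) \cdots A^{(r)}(\mathbf{y}\rvert_{G_r}) = \sum_{j=0}^{l} \Bigg( \sum_{\kappa_1 + \cdots + \kappa_r = j} \prod_{t=1}^r \frac{\beta_{G_t}^{(\kappa_t)}(\mathbf{y}\rvert_{G_t}, \mathbf{c}\rvert_{G_t})}{\kappa_t!} \Bigg) B_j ,
\]
so that contracting with $\mathbf{v}_L$ and $\mathbf{v}_R$ and using $(B_j)_{0,j} = 1$ produces the coefficient $\frac{1}{\CNN} \sum_{j=0}^l a_j\, j! \sum_{\kappa_1+\cdots+\kappa_r=j} \prod_t \beta_{G_t}^{(\kappa_t)}/\kappa_t!$.

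The crucial step is to recognize that this coefficient equals the coefficient of $\ket{\mathbf{y}}$ in $\ket{R^l_*(H)}$ from \eqref{260104eq2}. Here I would invoke the factorization of the weighted $\beta$-function (Lemma~\ref{260103lem1}): since $\binom{s}{\bkappa} = s!/(\kappa_1! \cdots \kappa_r!)$, the identity
\[
\beta_G^{(s)}(\mathbf{y}, \mathbf{c}) = \sum_{|\bkappa| = s} \binom{s}{\bkappa} \prod_{t=1}^r \beta_{G_t}^{(\kappa_t)}(\mathbf{y}\rvert_{G_t}, \mathbf{c}\rvert_{G_t})
\]
rewrites $\sum_{s} a_s\, \beta_G^{(s)}(\mathbf{y},\mathbf{c})$ into exactly the degree-$j$ grouped sum obtained above, with $s=j$ and the global factor $j!$ supplied by $\mathbf{v}_R$. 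Matching the two expressions termwise confirms that the contraction reproduces $\ket{R^l_*(H)}$, and reading off the dimensions then gives bond dimension $D = l+1$ and local dimension $q = \max_t 2^{m_t}$.

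I expect the only real obstacle to be bookkeeping: ensuring the multinomial weight $\binom{s}{\bkappa}$ delivered by Lemma~\ref{260103lem1} is correctly split into the product of the $1/\kappa_t!$ factors carried by the site matrices times the global $s!$ carried by $\mathbf{v}_R$, and verifying that $A^{(t)}$ depends only on $\mathbf{y}\rvert_{G_t}$ so that the object is genuinely an MPS with the claimed local dimension. Both are routine once the factorization lemma is in hand, since no new anticommutation combinatorics enters at this stage.
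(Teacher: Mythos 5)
Your proof is correct and follows essentially the same route as the paper: one MPS site per connected component of $G$, with the identification of the contracted coefficient resting on the factorization Lemma~\ref{260103lem1}. The only difference is a diagonal gauge on the bond space --- you put $\beta^{(j-i)}_{G_t}/(j-i)!$ in the site matrices (via the shift matrices $B_k$) and absorb the factorials into $\mathbf{v}_R=(a_0\cdot 0!,\dots,a_l\cdot l!)^{\top}$, whereas the paper uses entries $\binom{j}{i}\beta^{(j-i)}_{G_t}$ with $\mathbf{v}_R=(a_0,\dots,a_l)^{\top}$ and telescopes the binomials into the multinomial $\binom{s}{\bkappa}$; the two representations are related by conjugating with $\mathrm{diag}(0!,\dots,l!)$ and yield the identical state.
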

\begin{proof}
We construct the matrix product state decomposition as follows.
First, let
\begin{align}\label{260104eq1}
    \mathbf{ v}_L = (1,0,\dots,0)^\top, \quad \mathbf{ v}_R = (a_0,a_1, \dots, a_l)^\top .
\end{align}
Next, for every $1\le t\le r$,
we construct the $(l+1)\times (l+1)$-matrices $A^{(t)}(\mathbf{y}^{(t)})$ by defining its entries: for $0\le i,j\le l$, let
\begin{align}\label{260104eq3}
[A^{(t)}(\mathbf{y}^{(t)})]_{i, j} & = \begin{cases}
\binom{j}{i} \cdot \beta^{(j-i)}_{C_t}(\mathbf{y}^{(t)},
\mathbf{c}\rvert_{G_t} ) & \text{ if } j\ge i, \\
        0 & \text{ otherwise.}
    \end{cases} 
\end{align}
Therefore
\begin{align*}
&\mathbf{v}_L^\intercal A^{(1)}(\mathbf{y}^{(1)}) \cdots A^{(r)}(\mathbf{y}^{(r)}) \mathbf{v}_R \\
= &\sum_{K_0, \dots, K_r=0}^l (\mathbf{v}_L^\intercal)_{K_0} (A^{(1)}(\mathbf{y}^{(1)}))_{K_{1}, K_2} \cdots (A^{(r)}(\mathbf{y}^{(r)}))_{K_{r-1}, K_r} (\mathbf{v}_R)_{K_r} \\
= & \sum_{K_1, \dots, K_{r-1}, s=0}^l (A^{(1)}(\mathbf{y}^{(1)}))_{K_{1}, K_2} \cdots (A^{(r)}(\mathbf{y}^{(r)}))_{K_{r-1}, s} a_{s} \\
=& \sum_{0\le K_1\le \cdots\le K_{r-1}\le s\le l} a_{s} \binom{K_1}{0} \beta_{G_1}^{(K_1-0)}(\mathbf{y}^{(1)}, \mathbf{c}\rvert_{G_1}) \cdots \binom{s}{K_{r-1}} \beta_{G_r}^{(s-K_{r-1})}(\mathbf{y}^{(r)},\mathbf{c}\rvert_{G_r}) \\
=& \sum_{s=0}^l a_{s} \sum_{\substack{ \kappa_1, \dots, \kappa_r\ge 0,\\
\kappa_1+\cdots + \kappa_r=s}} 
 \binom{s}{\bkappa} \left[\beta_{G_1}^{(\kappa_1)}(\mathbf{y}^{(1)},  \mathbf{c}\rvert_{G_1}) \cdots \beta_{G_r}^{(\kappa_r)}(\mathbf{y}^{(r)},  \mathbf{c}\rvert_{G_r}) \right] \\
=& \sum_{s=0}^l a_s  \beta_{G}^{(s)}(\mathbf{y},\mathbf{c}),
\end{align*}
where we used the substitution that $\kappa_t = K_t-K_{t-1}$,
and the last equality comes from Lemma \ref{260103lem1}. 
Therefore, 
\begin{align*}
\ket{R^l_*(H)}
= & \frac{1}{\CNN} \sum_{s=0}^{l} a_s \sum_{\mathbf{y} \in \mathbb{F}_2^m}  \beta_{G}^{(s)}(\mathbf{y},\mathbf{c}) \ket{\mathbf{y}} \\
= & \frac{1}{\CNN} \sum_{\mathbf{y} \in \BFF_2^m} \mathbf{v}_L^\top A^{(1)}(\mathbf{y}^{(1)}) \cdots A^{(r)}(\mathbf{y}^{(r)}) \mathbf{v}_R \ket{\mathbf{y}}.
\end{align*}
Finally,
replacing $\mathbf{v}_L$ by $ (1/\CNN, 0, \dots, 0)^\top$,
we have that $\ket{R^l(H)}$ is a matrix product state with open boundary conditions,
where $D = l+ 1$, and $q=\max2^{\dim \mathbf{y}^{(t)} } =\max 2^{ m_t}$ as claimed.
\end{proof}

\begin{thm}\label{260104thm1}
Let $H = \sum_{i=1}^m c_i P_i$, where $c_i \in \R$ and $P_i$ are distinct $n$-qubit Pauli operators.
Let $\CPP(x) = \sum^l_{j=1}a_jx^j$ be a univariate polynomial of degree $l$.
Let $G = (V, E)$ be the corresponding anticommutation graph of $H$,
and suppose that $G = \bigsqcup_{t=1}^r G_t$, where $G_t = (V_t, E_t)$ are the connected components of $G$. 
The reference state $\ket{R^l_*(H)}$ defined   in Eq.~\eqref{260104eq2} can be prepared by
using a classical pre-processing step which takes time $T_{pre} = O(m\cdot (l+\CMM)^{O(\CMM)})$ and a subsequent quantum algorithm which takes time $T_{\text{prepare}} = O(m\cdot 2^\CMM\cdot\mathrm{poly}(l)  )$,
where $\CMM= \max_{t} |V_t|$.
\end{thm}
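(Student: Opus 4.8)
The plan is to reduce the preparation of $\ket{R^l_*(H)}$ to the standard sequential generation of a matrix product state, using the explicit MPS representation already furnished by Lemma~\ref{260102lem1}. That lemma exhibits $\ket{R^l_*(H)}$ as a $q$-ary MPS with open boundary conditions, bond dimension $D = l+1$, and local physical dimension $q = \max_t 2^{m_t} = 2^{\CMM}$, in which each connected component $G_t$ plays the role of a single tensor $A^{(t)}(\mathbf{y}^{(t)})$ with entries given by Eq.~\eqref{260104eq3}. Consequently it suffices to (i) compute all tensor entries and the normalization factor $\CNN$ classically, and (ii) invoke the MPS-preparation routine of \cite{schon2005sequential,melnikov2023quantum}. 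The two claimed running times come from bounding these two stages separately.

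For the classical preprocessing, I would first compute, for each component $t\in\{1,\dots,r\}$ and each of the $2^{m_t}$ strings $\mathbf{y}^{(t)}\in\BFF_2^{m_t}$, the $(l+1)\times(l+1)$ matrix $A^{(t)}(\mathbf{y}^{(t)})$. Each of its $O(l^2)$ nonzero entries is a binomial coefficient times a weighted $\beta$-function $\beta_{G_t}^{(j-i)}(\mathbf{y}^{(t)},\mathbf{c}\rvert_{G_t})$ of order $j-i\le l$ on $M=m_t\le\CMM$ variables. By Lemma~\ref{260104thm2}, each such $\beta$-function costs time bounded by $O\!\big(\binom{(s-w)/2+M-1}{M-1}\cdot sM\cdot\max(2^s,(\lceil s/M\rceil+1)^M)\big)$; since $s\le l$ and $M\le\CMM$, both the binomial prefactor and the remaining factor have the form $(l+\CMM)^{O(\CMM)}$, so every entry is computed within time $(l+\CMM)^{O(\CMM)}$. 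Multiplying by the $O(l^2)$ entries and the $2^{m_t}\le 2^{\CMM}$ strings for each component, and summing over the $r\le m$ components, gives a total of $O(m\cdot(l+\CMM)^{O(\CMM)})$, absorbing the $2^{\CMM}$ and polynomial-in-$l$ factors into the exponent.

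Next I would compute $\CNN$ by the transfer-matrix contraction already used in Lemmas~\ref{251124lem2} and \ref{251225lem2}: writing $\CNN^2 = \sum_{\mathbf{y}}\big(\mathbf{e}_1^\top A^{(1)}(\mathbf{y}^{(1)})\cdots A^{(r)}(\mathbf{y}^{(r)})\mathbf{v}_R\big)^2$ and contracting the $D^2\times D^2$ transfer operators from the innermost sum outward. Summing over the $2^{m_t}$ local values at each of the $r$ sites and multiplying $(l+1)^2$-dimensional transfer matrices costs $O(2^{m_t}\,\mathrm{poly}(l))$ per site, hence $O(m\cdot 2^{\CMM}\,\mathrm{poly}(l))$ overall, which is dominated by the tensor-computation cost. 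This yields $T_{\mathrm{pre}} = O(m\cdot(l+\CMM)^{O(\CMM)})$.

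For the quantum stage, I would apply the sequential MPS-generation result of \cite{schon2005sequential,melnikov2023quantum}: a $q$-ary open-boundary MPS on $r$ sites with bond dimension $D$ can be prepared in time $O(rq\cdot\mathrm{poly}(D))$. With $r\le m$, $q=2^{\CMM}$, and $D=l+1$ this gives $T_{\mathrm{prepare}} = O(m\cdot 2^{\CMM}\cdot\mathrm{poly}(l))$, as claimed. I expect the main obstacle to be the bookkeeping in the classical stage—specifically, controlling the binomial prefactor $\binom{(s-w)/2+M-1}{M-1}$ together with the case split $s\le M$ versus $s>M$ from Lemma~\ref{260104thm2} uniformly across all orders $s\le l$ and component sizes $m_t\le\CMM$, and checking that their product collapses into the single clean bound $(l+\CMM)^{O(\CMM)}$; once this is in place the quantum stage is an immediate citation.
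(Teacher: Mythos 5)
Your proposal is correct and follows essentially the same route as the paper's own proof: both rely on the MPS representation from Lemma~\ref{260102lem1}, bound the cost of each tensor entry via Lemma~\ref{260104thm2} by $(l+\CMM)^{O(\CMM)}$, compute $\CNN$ by the same inside-out transfer-matrix contraction, and invoke the sequential MPS-preparation result of \cite{schon2005sequential,melnikov2023quantum} with $q=2^{\CMM}$ and $D=l+1$. The only difference is cosmetic: you carry the two-case bound from Lemma~\ref{260104thm2} explicitly before collapsing it, whereas the paper absorbs it immediately into $(l+\CMM)^{O(\CMM)}$.
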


\begin{proof}
By the proof of Lemma \ref{260102lem1}, the reference state $\ket{R^l_*(H)}$ is a $q$-ary matrix product state with open boundary conditions of bond dimension $D = l + 1$,
\begin{align}\label{260122eq1}
     \ket{R^l_*(H)} & = \frac{1}{\CNN} \sum_{\mathbf{y} \in \BFF_2^m} \mathbf{v}_L^\top A^{(1)}(\mathbf{y}\rvert_{G_1}) \cdots A^{(r)}(\mathbf{y}\rvert_{G_r}) \mathbf{v}_R \ket{\mathbf{y}},
\end{align}
where $q = 2^{\CMM}$, and each nonzero entry of the matrices $A^{(t)}(\mathbf{y}\rvert_{G_t})$ is of the form 
\begin{align}\label{260122eq2}
g_t(j, i) = \binom{j}{i} \cdot \beta^{(j-i)}_{G_t}(\mathbf{y}\rvert_{G_t}, \mathbf{c}\rvert_{G_t}) 
\end{align}
from \eqref{260104eq3}. 
We first compute each $g_t(j, i)$ on a classical computer. 
The binomial coefficient $\binom{j}{i}$ can be computed in $O(l)$ time since $i,j\le l$. 
And,
since $j-i\le l$, 
by Theorem \ref{260104thm2},
the runtime of computing each $\beta_{G_t}^{(j-i)}(\mathbf{y}\rvert_{G_t}, \mathbf{c}\rvert_{G_t} )$ is at most 
\begin{align}
O\left( \binom{\frac{l}{2} + \CMM - 1}{\CMM - 1} \cdot l \CMM \cdot l^\CMM \right) = O\left( \left(\frac{l}{2}+\CMM\right)^{\CMM} \cdot l \CMM \cdot l^\CMM \right) = (l+\CMM)^{O(\CMM)} .
\end{align}
So computing each $g_t(j, i)$ needs time at most $(l+\CMM)^{O(\CMM)}$,
and computing one of the matrices $A^{(t)}(\mathbf{y}\rvert_{G_t})$ needs time at most $O(l^2)\cdot (l+\CMM)^{O(\CMM)} = (l+\CMM)^{O(\CMM)}$.
There are $r \le m$ such matrices, each having $O(q) = O(2^\CMM)$ choices of $\mathbf{y}\rvert_{G_t}$. 
Hence, the runtime to compute all the matrices $A^{(t)}(\mathbf{y}\rvert_{G_t})$ is $O(m  2^{\CMM} (l+\CMM)^{O(\CMM)}) = O(m) \cdot (l+\CMM)^{O(\CMM)}$.

Now let us compute the vectors $\mathbf{v}_L$ and $\mathbf{v}_R$. 
In the proof of Lemma \ref{260102lem1} we have that $\mathbf{v}_R = (a_0, \dots, a_l)^\top$. 
Finally, $\mathbf{v}_L = (1/\CNN, 0, 0, \dots, 0)$, 
where $\CNN$ is the normalization factor satisfying
\begin{align*}
&\CNN^2 \\
= & \sum_{\mathbf{y} \in \BFF_2^m} \left( \mathbf{e}_1^\top A^{(1)}(\mathbf{y}\rvert_{G_1}) \cdots A^{(r)}(\mathbf{y}\rvert_{G_r}) \mathbf{v}_R \right)^2 \\
= & \sum_{\mathbf{y} \in \BFF_2^m} \mathbf{v}_R^\top A^{(r)}(\mathbf{y}\rvert_{G_r})^\top \cdots A^{(1)}(\mathbf{y}\rvert_{G_1})^\top \mathbf{e}_1 \mathbf{e}_1^\top A^{(1)}(\mathbf{y}\rvert_{G_1}) \cdots A^{(r)}(\mathbf{y}\rvert_{G_r}) \mathbf{v}_R \\
=& \mathbf{v}_R^{\top}  
\left(
\sum_{\mathbf{y}\rvert_{G_r} \in \BFF_2^{|V_r|}}
 A^{(r)}(\mathbf{y}\rvert_{G_r})^{\top}
\left(
\cdots 
\left(
\sum_{\mathbf{y}\rvert_{G_1} \in \BFF_2^{|V_1|}}
A^{(1)}(\mathbf{y}\rvert_{G_1})^{\top} 
\mathbf{e}_1
\mathbf{e}_1^{\top} 
A^{(1)}(\mathbf{y}\rvert_{G_1})
\right)
\cdots
\right)
A^{(r)}(\mathbf{y}\rvert_{G_r})
\right) 
\mathbf{v}_R,
\end{align*}
which can be computed from the inside summations to the outside.
Each summation can be computed in $O(2^\CMM l^3) $ time,
and there are $r\le m$ summations,
hence we can compute $\CNN$ in $O(2^\CMM m   l^3)  $ time.
So the runtime to classically compute the coefficients of the MPS is $T_{pre}=O(m) \cdot (l+\CMM)^{O(\CMM)} + O(2^\CMM m   l^3) = O(m) \cdot (l+\CMM)^{O(\CMM)}$.
Next, for the quantum algorithm running time,
again we use the fact that a $q$-ary MPS (with $m $ matrices) with open boundary conditions of bond dimension $D  $ can be prepared by a quantum algorithm running in time $T_{prepare} = O(mq\cdot \mathrm{poly}(D )  )$ 
\cite{schon2005sequential,melnikov2023quantum}.
As $q=2^\CMM$ and $D=(l+1)$,
we have $T_{\text{prepare}} = O(m\cdot 2^\CMM\cdot\mathrm{poly}(l)  )$.
\end{proof}

\begin{thm}\label{260104thm4}
Let $H = \sum_{i=1}^m c_i P_i$ be a Hamiltonian on $n$ qubits, 
where each $c_i \in \R$ and $P_i$ are distinct $n$-qubit Pauli operators. 
Let $\CPP(x)= \sum^l_{j=0}a_jx^j$ be a univariate polynomial of degree $l$. 
Suppose that $\CDD_H^{(l)}$ is a weight-$l$ decoding oracle for $H$.
Then there is a quantum algorithm to
prepare the state 
\begin{align}
\rho_{\CPP}(H) = \frac{{\CPP}^2(H)}{\Tr{\CPP^2(H)}}
\end{align}
using a single call to $\CDD_H^{(l)}$ and the reference state $\ket{R^l_*(H)}$ defined   in Eq.~\eqref{260104eq2}.
Moreover, the running time is $O((l+\CMM)^{O(\CMM)}+\mathrm{poly}(m,n) )$.

\end{thm}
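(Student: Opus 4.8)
The plan is to reproduce the interference argument of Theorem~\ref{251206thm1} almost verbatim, replacing the commuting expansion of Lemma~\ref{251124lem1} by its noncommuting counterpart and the reference state $\ket{R^l(H)}$ by $\ket{R^l_*(H)}$. First I would invoke the noncommuting univariate polynomial expansion established above to write
\begin{align}
\CPP(H) = \sum_{s=0}^l a_s \sum_{\mathbf{y}\in\BFF_2^m}\beta_G^{(s)}(\mathbf{y},\mathbf{c})\,P_1^{y_1}\cdots P_m^{y_m},
\end{align}
and abbreviate $w_\mathbf{y} = \sum_{s=0}^l a_s\,\beta_G^{(s)}(\mathbf{y},\mathbf{c})$, so that the amplitude of $\ket{\mathbf{y}}$ in $\ket{R^l_*(H)}$ is exactly $w_\mathbf{y}/\CNN$. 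A structural fact I would record at the outset is that $\beta_G^{(s)}(\mathbf{y},\mathbf{c})$ is nonzero only when $|\mathbf{y}|\le s$ (since $\bmu\equiv\mathbf{y}\pmod 2$ with $|\bmu|=s$ forces $|\mathbf{y}|\le s$); hence $w_\mathbf{y}\neq 0$ only for $|\mathbf{y}|\le l$, which is precisely what makes a single weight-$l$ oracle call sufficient.

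Next I would execute the algorithm on registers $A$ ($m$ qubits) and $B,C$ ($n$ qubits each). Prepare $\ket{R^l_*(H)}$ on $A$ — efficiently, by Theorem~\ref{260104thm1} — and the maximally entangled state $\ket{\Phi_n}$ on $BC$, then apply $\prod_{i=1}^m (C^{(i)}P_i)_{A\to B}$ in the fixed order $i=1,\dots,m$. The point to emphasize is that this ordering deposits precisely $P_1^{y_1}\cdots P_m^{y_m}$ on register $B$, matching the ordering convention built into $\beta_G^{(s)}$, so the joint state becomes $\frac{1}{\CNN}\sum_{\mathbf{y}} w_\mathbf{y}\ket{\mathbf{y}}_A\otimes(P_1^{y_1}\cdots P_m^{y_m}\otimes I)\ket{\Phi_n}_{BC}$. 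I would then apply the coherent Bell measurement on $BC$, which sends $(P_1^{y_1}\cdots P_m^{y_m}\otimes I)\ket{\Phi_n}$ to $\ket{\mathrm{symp}(P_\mathbf{y})}=\ket{B^\top\mathbf{y}}$ (up to a phase, see below), invoke $\CDD_H^{(l)}$ on $A$ together with the syndrome register to reset $A$ to $\ket{\mathbf{0}}$, discard $A$, undo the Bell measurement, and finally trace out $C$ — obtaining $\rho_{\CPP}(H)$ exactly as in the commuting proof.

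The one point demanding care, and what I would treat as the main obstacle, is the phase produced by noncommutativity. Unlike the commuting case, $P_1^{y_1}\cdots P_m^{y_m}$ need not coincide with the canonical Pauli $W(\mathrm{symp}(P_\mathbf{y}))$ but only equals it up to a phase $\phi_\mathbf{y}\in\{\pm 1,\pm i\}$, whereas the coherent Bell measurement is the fixed unitary sending the canonical Bell state $(W(\mathbf{v})\otimes I)\ket{\Phi_n}$ to $\ket{\mathbf{v}}$. I would argue the phase is harmless: the forward map contributes $\phi_\mathbf{y}$ when it produces $\phi_\mathbf{y}\ket{\mathrm{symp}(P_\mathbf{y})}$, while undoing it acts by $\ket{\mathrm{symp}(P_\mathbf{y})}\mapsto\bar\phi_\mathbf{y}(P_1^{y_1}\cdots P_m^{y_m}\otimes I)\ket{\Phi_n}$, so the conjugate factors cancel and the reconstructed state is $\frac{1}{\CNN}(\CPP(H)\otimes I)\ket{\Phi_n}$ independent of $\{\phi_\mathbf{y}\}$. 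Here injectivity of $\mathbf{y}\mapsto B^\top\mathbf{y}$ on the support $\{\mathbf{y}:|\mathbf{y}|\le l\}$, guaranteed by the existence of the weight-$l$ oracle, rules out any collision in the syndrome register. Writing $\ket{\Phi_n}=\frac{1}{\sqrt{2^n}}\sum_\lambda\ket{\lambda}\otimes\ket{\bar\lambda}$ in an eigenbasis of $H$ and tracing out $C$ then gives $\rho_{\CPP}(H)$. For the runtime I would sum the reference-state cost from Theorem~\ref{260104thm1}, namely $O(m\cdot(l+\CMM)^{O(\CMM)})$ classical preprocessing and $O(m\cdot 2^{\CMM}\cdot\mathrm{poly}(l))$ quantum preparation, together with the $\mathrm{poly}(n)$ Bell measurement, the $m$ controlled-Paulis, and the single oracle call, giving the claimed $O((l+\CMM)^{O(\CMM)}+\mathrm{poly}(m,n))$ bound.
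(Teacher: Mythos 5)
Your proposal is correct and follows essentially the same route as the paper, which simply defers to the proof of Theorem~\ref{251206thm1} with $\ket{R^l(H)}$ replaced by $\ket{R^l_*(H)}$. Your explicit treatment of the phase $\phi_{\mathbf{y}}$ relating $P_1^{y_1}\cdots P_m^{y_m}$ to the canonical Pauli $W(\mathrm{symp}(P_{\mathbf{y}}))$, and the observation that it cancels when the coherent Bell measurement is undone, is a detail the paper leaves implicit but is handled correctly here.
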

\begin{proof}
The proof is the same as  that of Theorem \ref{251206thm1}
by replacing the reference state $\ket{R^l(H)} $ by $ \ket{R^l_*(H)} $.
\end{proof}

The time complexity of the algorithm in Theorem \ref{260104thm4} depends on $\mathcal{M}$, which is the maximum number of vertices in the connected components $V_t$ of the Hamiltonian's anticommutation graph. If $\mathcal{M} = O(1)$ and both $l$ and $m$ are $O(n)$, the total runtime is $\mathrm{poly}(n)$.

\begin{Rem}
In the theorem above, we assume that the weight $l$ of the decoding oracle matches the degree of the polynomial $\mathcal{P}(x)$, which limits the representation power of the state $\rho_{\mathcal{P}}(H)$. Following previous results, if the Pauli operators $\{P_i\}$ are linearly independent, an efficient decoding oracle $\mathcal{D}$ exists for any $\mathbf{y} \in \mathbb{F}_2^m$. Consequently, Theorem \ref{260104thm4} holds for any polynomial $\mathcal{P}(x)$ with an arbitrary degree $l$.

\end{Rem}

Similar to Theorem~\ref{thm:robust}, the procedure for preparing the state $ \rho_{\CPP}(H)$ 
 for noncommuting Hamiltonians in Theorem~\ref{260104thm4} is also robust to imperfect decoding, as modeled by the decoding oracle in Eq.~\eqref{eq:impe_dec}.
 Hence, we have the following result.

\begin{prop}
Let $H = \sum_{i=1}^m c_i P_i$ be a Hamiltonian on $n$ qubits, with $c_i\in\real$.
Assume the existence of a imperfect weight-$l$ decoding oracle $\CDD_H^{ (l, \epsilon)}$ 
as defined in Eq.~\eqref{eq:impe_dec}. Let  $\CPP(x)=\sum^l_{j=0}a_jx^j$ be a univariate polynomial of degree $l$. 
Then the state $\rho_{\CPP,\epsilon}(H)$ (respectively, $\rho_{\CPP}(H)$) using $\CDD_H^{ (l, \epsilon)}$ (respectively, $\CDD_H^{ (l)}$)  
generated by the quantum algorithm in Theorem \ref{260104thm4} satisfies the following relation
\begin{align}
\norm{\rho_{\CPP,\epsilon}(H)-\rho_{\CPP}(H)}_1\leq 2\sqrt{\epsilon},
\end{align}
where $\norm{\cdot}_1$ is the trace norm.
\end{prop}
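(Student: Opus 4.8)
The plan is to replicate the argument of Theorem~\ref{thm:robust} almost verbatim, since the algorithm of Theorem~\ref{260104thm4} is identical to that of Theorem~\ref{251206thm1} except that the commuting reference state $\ket{R^l(H)}$ is replaced by the noncommuting one $\ket{R^l_*(H)}$ from Eq.~\eqref{260104eq2}. Consequently the two purifications (perfect versus imperfect decoding) have exactly the same structure as in Theorem~\ref{thm:robust}, with only the scalar amplitude $w_{\mathbf{y}}$ changed.

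First I would write the purification of $\rho_{\CPP}(H)$ output by the perfect oracle $\CDD_H^{(l)}$ as $\ket{\psi_{\CPP}(H)} = \frac{1}{\CNN}\sum_{\mathbf{y}} w_{\mathbf{y}}\,\ket{\mathbf{0}}_A\ot(P_{\mathbf{y}}\ot I)\ket{\Phi^n}_{BC}$, where now $w_{\mathbf{y}} = \sum_{s=0}^l a_s\,\beta_G^{(s)}(\mathbf{y},\mathbf{c})$ is read directly off Eq.~\eqref{260104eq2}. The one property I must verify is the support condition $w_{\mathbf{y}}\neq 0 \Rightarrow |\mathbf{y}|\leq l$: this holds because $\beta_G^{(s)}(\mathbf{y},\mathbf{c})$ sums over $\bmu$ with $|\bmu|=s$ and $\bmu\equiv\mathbf{y}\ (\mathrm{mod}\ 2)$, which forces $|\mathbf{y}|\leq s\leq l$. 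Thus every nonzero term sits in the weight range $1\leq|\mathbf{y}|\leq l$ (together with the $\mathbf{y}=\mathbf{0}$ term) on which both $\CDD_H^{(l)}$ and $\CDD_H^{(l,\epsilon)}$ act as specified in Eqs.~\eqref{eq:oracle} and \eqref{eq:impe_dec}.

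Next I would write the purification $\ket{\psi_{\CPP,\epsilon}(H)}$ produced by the imperfect oracle, which is identical except that $\ket{\mathbf{0}}_A$ is replaced by $\ket{\mathbf{z}(\mathbf{y})}_A=\sum_{\mathbf{y}'}\sqrt{p(\mathbf{y}'|B^\top\mathbf{y})}\,\ket{\mathbf{y}\oplus\mathbf{y}'}$, for which $\iinner{\mathbf{0}}{\mathbf{z}(\mathbf{y})}=\sqrt{p(\mathbf{y}|B^\top\mathbf{y})}\geq\sqrt{1-\epsilon}$. Lower-bounding the fidelity by the overlap of purifications gives
\[
F(\rho_{\CPP}(H),\rho_{\CPP,\epsilon}(H))\ \geq\ \iinner{\psi_{\CPP}(H)}{\psi_{\CPP,\epsilon}(H)}=\frac{1}{\CNN^2}\sum_{\mathbf{y}:\,|\mathbf{y}|\leq l}|w_{\mathbf{y}}|^2\,\iinner{\mathbf{0}}{\mathbf{z}(\mathbf{y})}\ \geq\ \sqrt{1-\epsilon},
\]
and the Fuchs--van de Graaf inequality~\cite{fuchs2002cryptographic} then yields $\norm{\rho_{\CPP,\epsilon}(H)-\rho_{\CPP}(H)}_1\leq 2\sqrt{1-F^2}\leq 2\sqrt{\epsilon}$.

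There is essentially no hard step here: the noncommutativity of the $\{P_i\}$ is entirely confined to registers $B,C$ and is erased by the coherent Bell measurement before decoding, so the decoding register $A$---the only place where the imperfection enters---behaves exactly as in the commuting case. The sole point requiring care is the support property of the $\beta$-function coefficients, which I verified above and which guarantees that no amplitude lies outside the oracle's valid weight window; granting that, the bound is dimension-independent and follows identically to Theorem~\ref{thm:robust}.
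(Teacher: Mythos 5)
Your proposal is correct and follows exactly the route the paper takes: the paper's own proof is literally ``the same as that of Theorem~\ref{thm:robust},'' and you carry out that reduction with the commuting coefficients replaced by $w_{\mathbf{y}}=\sum_s a_s\beta_G^{(s)}(\mathbf{y},\mathbf{c})$. Your explicit check that $\beta_G^{(s)}(\mathbf{y},\mathbf{c})\neq 0$ forces $|\mathbf{y}|\leq s\leq l$ (so all amplitude stays inside the oracle's valid weight window) is the one point the paper leaves implicit, and it is verified correctly.
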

\begin{proof}
The proof is the same as that of Theorem \ref{thm:robust}.
\end{proof}

\begin{Rem}\label{Rem:exam}
Now, let us consider the following Hamiltonian on $2n+1$ qubits as an example to 
illustrate our main results for noncommuting Hamiltonians,
\begin{align}\label{exam:Ham}
H_1=\sum^{2n}_{i=1} Z_iZ_{i+1}+g\sum^n_{i=1}X_{2i},
\end{align}
where $g\in \R$ is a parameter. Here, 
the Hamiltonian \(H_1\) can be viewed as a variant of the one-dimensional transverse-field Ising model, in which the transverse field is restricted to the even sites. 

Based on this definition, this Hamiltonian 
$H_1$ is a sum of $3n$ different Pauli operators. Hence, we can rewrite $H_1=\sum^m_{i=1}c_iP_i$, where $m=3n$,  the Pauli operators $P_i$
are defined as  
  $$P_i = \left\{
\begin{aligned}
& Z_iZ_{i+1} && \text{ for }1\le i\le 2n,\\
& X_{2(i-2n)} && \text{ for }2n+1 \le i\le 3n,
\end{aligned}
\right.$$  
and the coefficients  $c_1=\cdots = c_{2n}=1$ and $c_{2n+1}= \cdots = c_{3n}=g$.
It follows directly that the symplectic representations of these Pauli operators are linearly independent. Consequently, the symplectic code of $H_1$ is trivial, ensuring the existence of an efficient decoding oracle $\mathcal{D}_{H_1}$ for any $\mathbf{y} \in \mathbb{F}_2^m$.

The anticommutation graph $G$ of $H_1$,
as show below in Figure \ref{fig:1},
is made of $3n$ vertices and $n$ connected components.
Each connected component is made of $3$ vertices and $2$ edges,
and all the connected components are isomorphic.
For this anticommutation graph,
the maximum size $\CMM$ of connected components is $3$.  Hence, by Theorem \ref{260104thm1}, for  any degree-$l$ polynoimal $\CPP(x)$, 
the time complexity to prepare the reference state $\ket{R^l_*(H_1)}$ 
is $\mathrm{poly}(l,n)$.

\begin{figure}[H]
    \centering
    \includegraphics[width=0.9\linewidth]{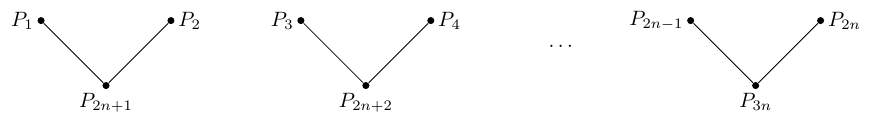}
    \caption{Anticommutation graph of $H_1$}
    \label{fig:1}
\end{figure}

In addition, it has been shown that there exists some polynomial $\CPP(x)$ with  degree 
\(
l \le 1.12\,\beta \|H\| + 0.648 \ln \frac{2}{\delta},
\)
such that the state $\rho_{\CPP}(H)$ is $\delta$-close, in trace distance, to the Gibbs state
$\exp(-\beta H)/\Tr{\exp(-\beta H)}$~\cite{schmidhuber2025hamiltonian}.
Moreover,  the Hamiltonian $H_1$ in \eqref{exam:Ham} satisfies $\norm{H_1}\leq \norm{\sum^{2n}_{i=1}Z_iZ_{i+1}}+|g|\norm{\sum^n_{i=1}X_{2i}}\leq (2+|g|)n$.
Hence, we can choose some degree-$l$ polynomial $\CPP(x)$ with \(
l \le 1.12\,(2+|g|)\beta n+ 0.648 \ln \frac{2}{\delta}
\), and  there is a quantum algorithm in Theorem \ref{260104thm4}
with total running time $\mathrm{poly}(l, n)=\mathrm{poly}((2+|g|)\beta, \ln\frac{1}{\delta}, n)$
to generate $\rho_{\CPP}(H_1)$ such that 
\begin{align}
\norm{\rho_{\CPP}(H_1)
-\frac{\exp(-\beta H_1)}{\Tr{\exp(-\beta H_1)}}}_1
\leq 2\delta. 
\end{align}
Hence, if $(2+|g|)\beta=\mathrm{poly}(n)$, then total  running time is $\mathrm{poly}(n, \ln\frac{1}{\delta})$.
\end{Rem}

\section{Conclusion}\label{sec:conc}
In this work, we investigate HDQI for general Pauli Hamiltonians of the form $H=\sum_i c_i P_i$,  where 
$c_i\in \real$, and $P_i$ are $n$-qubit Pauli operators, encompassing both commuting and noncommuting cases.
We show that, given access to an appropriate decoding oracle, there exist efficient quantum algorithms for preparing the state
$\rho_{\CPP}(H)$, which can be used to approximate the Gibbs state of $H$. 
Moreover, we demonstrate that these algorithms are robust to imperfections in the decoding procedure.
Our results substantially extend the scope of HDQI beyond stabilizer-like  Hamiltonians and provide a unified framework applicable to a broad class of physically and computationally relevant models.

Beyond the results presented in this work,  there are several important questions that warrant further investigation.
One direction is the semicircle laws in HDQI. For the special family of commuting Hamiltonians 
$H=\sum_iv_iP_i$ with $v_i\in\set{\pm 1}$, the expected energy obeys the 
semicircle laws~\cite{schmidhuber2025hamiltonian},
inherited directly from the corresponding result for DQI~\cite{jordan2024optimization}. 
It is therefore natural to ask whether a semicircle law emerges for general Pauli Hamiltonians $H=\sum_ic_iP_i$ with arbitrary real coefficients $c_i\in \real$. 
The key challenges in this direction are the noncommutativity of the Hamiltonian and the identification of an optimal choice of the polynomial $\CPP(x)$.
Another promising avenue concerns ground-state preparation and ground-energy estimation. Since the ground state arises as the zero-temperature limit of the Gibbs state, 
it will be interesting to 
explore the application of HDQI in the preparation of the ground state and energy for many-body Hamiltonians. 
Taken together, these directions highlight the broader potential of DQI  as a versatile tool for quantum optimization, Hamiltonian simulation, and many-body physics.

\section*{Acknowledgments}
K. B. is partly supported by the JobsOhio GR138220, and ARO Grant W911NF19-1-0302 and the ARO MURI Grant W911NF-20-1-0082.

\bibliographystyle{unsrt}
\bibliography{reference}{}
\end{document}